\newcommand{\prob}[1]{\mathsf{Pr}\left( #1 \right)}
\newcommand{\remove}[1]{}
\newcommand{\comments}[1]{}
\newcommand{\qed}{\hfill $\square$}
\newcommand{\bm}[1]{\boldsymbol{#1}}
\newtheorem{lemma}{Lemma}
\newtheorem{theorem}{Theorem}
\newtheorem{remark}{Remark}
\newtheorem{proof}{Proof}
\newtheorem{definition}{Definition}
\title{Sequential Decision Making with Limited Observation Capability: Application to Wireless Networks}
\author{
 \begin{tabular}{ccc}
   Kesav Kaza,  Rahul Meshram, Varun Mehta  and S.~N.~Merchant \\
   Department of Electrical Engineering, \\
    IIT Bombay, Mumbai INDIA.  
  \end{tabular}
  \thanks{A preliminary version of this work was presented at IEEE WCNC 2018~\cite{Kaza2018wcnc}. The corresponding author for this paper is available at krk@ee.iitb.ac.in.}}
\begin{document}

\maketitle
\begin{abstract}
This work studies a generalized class of restless multi-armed bandits with hidden states and allow \textit{cumulative} feedback, as opposed to the conventional instantaneous feedback. 
We call them lazy restless bandits (LRB) as the events of decision-making are sparser than events of state transition. Hence, feedback after each decision event is the cumulative effect of the following state transition events.
 The states of arms are hidden from the decision-maker and rewards for actions are state dependent. The decision-maker needs to choose one arm in each decision interval, such that long term cumulative reward is maximized.

As the states are hidden, the decision-maker maintains and updates its belief about them.
It is shown that LRBs admit an optimal policy which has threshold structure in belief space.
The Whittle-index policy for solving LRB problem is analyzed; \textit{indexability} of LRBs is shown. 
Further, closed-form index expressions are provided for two sets of special cases; for more general cases, an algorithm for index computation is provided.  An extensive simulation study is presented; Whittle-index, modified Whittle-index and myopic policies are compared. Lagrangian relaxation of the problem provides an upper bound on the optimal value function; it is used to assess the degree of sub-optimality various policies. 

\end{abstract}


\section{INTRODUCTION}

\subsection{Motivation}

Wireless communication systems often operate in uncertain environments due to rapidly varying channel conditions and relative mobility of communicating nodes. Decision making under uncertainty occurs
in the problems of relay selection\cite{Li2011}, relay employment in wireless networks \cite{Kaza17}, opportunistic channel sensing and scheduling \cite{LiuZhao10},\cite{Wang14}, and downlink scheduling in heterogeneous networks\cite{Ouyang12}.

Let us consider the problem of relay selection in the following scenario.  Consider a wireless relay network with a source (S), destination (D) and a set of relays denoted by $(R_i),$ $ 1 \leq i \leq M-1.$ Suppose that the channels between source to relay and relay to destination operate at different frequencies. There 
are $M$ paths or links from source to destination that include 
the direct SD link and source-relay-destination links. Further, channel quality along each of these paths is time varying. The time is divided into intervals. The objective of a source is to use $N$ paths out of $M$ in each interval such that it maximizes the expected long term throughput. The source cannot observe the exact channel qualities along each path. This introduces significant difficulty in decision making (relay selection); it can be resolved by using a feedback mechanism. When a certain link is used for transmission, feedback is available at the end of the interval as ACK/NACK, which signify success or failure of the message transmission. No feedback is available from the unused paths. The source forms a belief about the channel qualities of the used paths based on the feedback. Using this information, the source selects $N$ paths in the given interval.  
	
The above application is an example of a sequential decision problem which can be described abstractly as follows. There is a decision maker that interacts with a system or environment through a given set of actions. The environment responds to each action differently by changing its state and also generating a reward. The decision maker can fully or partially observe the environment states and rewards. It's goal is to choose actions along the time line to maximize the expected cumulative reward. Clearly, to this end the decision making strategy must consider the effect of current action on future rewards.
Some sequential decision making scenarios exist where the system is composed of seemingly independent entities. Also, the statistical behavior of the system (eg. transition probability matrix, average rewards, etc.) is known to the decision maker. In such cases, the decision maker needs to plan which action it would choose when it observes a certain system state. This is called a planning problem, and it can be modeled using restless multi-armed bandits.

\subsection{Restless multi-armed bandits (RMAB)}
An RMAB has a set of independent arms. At each time step the decision maker plays a fixed number of arms. The states of arms evolve independently at each time step. The play of arms yields state dependent rewards. The arms which are not played yield no reward. The objective of the decision maker is to determine the optimal sequence of plays which maximizes long term discounted cumulative reward. RMAB was first proposed in \cite{Whittle88} as a generalization of the classical multi-armed bandit problem in \cite{Gittins74}. In an RMAB each arm is a Markov decision process (MDP) or partially observable Markov decision process (POMDP) depending on whether the states are fully or partially observable. These processes are coupled by the constraint that only a fixed number of them can be activated (played) in a given time step.

In conventional RMAB models, system state transition and decision making occur at discrete time instants uniformly spaced along the time line. The knowledge of the system state at these instants, provides information that is necessary for decision making. 

In this paper we consider a scenario where the information gathering of the decision maker is not at par with the variation of system state. The instants of decision making are sparse compared to the instants of system state transition.  The decision maker does not observe every state transition; instead, observation of the system takes place only when a decision needs to be made. This is due to the \textit{limited observation capability} of the decision maker. We refer to the information gathered by this form of observation as \textit{cumulative feedback}; it represents the cumulative effect of a series of state transitions. In the RMAB setting we can say that the bandit is \textit{lazy} in gathering information. This model allows multiple state transitions in one decision epoch. We call such restless bandits with cumulative feedback as \textit{lazy restless bandits.}

\subsection{Relay Selection as a Restless bandit problem}
The relay selection problem can be modeled as a RMAB problem. Each source-relay-destination link in a relay network corresponds to an arm.

We can model each source-relay-destination or source-destination link using a finite state Markov chain, where each state represents a certain channel quality. 
Since the channel qualities are not exactly observable by the source, the state of each link is not known. Thus each link can modeled as a POMDP. Further each link can be assumed to be independent of others.

While formulating the relay selection problem, two parameters play a key role - (1) decision interval, which is the time length for which each chosen relay is active, (2) slot length, which is the minimum time length over which the channel quality is assumed to remain constant. Choosing the decision interval to be equal to slot length might lead to significant signaling overhead and also increase delay. Hence, in our model we will assume that a decision interval consists of several slots. 

Markovian ON-OFF fading models have been used in literature for formulating such problems in order to account for temporal correlation of channel quality states during decision making \cite{LiuZhao10,Wang14,Ouyang12}. Although an ON-OFF model is a lossy representation of fading channels, it aids in taking decisions which are inherently of `threshold type'. For example, while employing a relay the source might require the end to end signal to noise ratio to cross a certain threshold in order to ensure quality of service to the end user. Such a requirement might be abstractly captured by a two state channel model. Further, in relay selection problem, the duration of using each relay (length of decision interval) is chosen such that the signaling overhead corresponding to the relay choice, is not too high. Hence, it might happen under fast fading conditions that there are intermediate channel variations during a single decision interval. This issue is also be addressed in our model. We believe the analysis of the relay selection problem using RMABs with two-state model would provide important insights which can be used for solving the multi-state model.

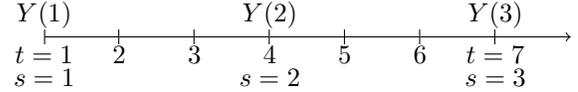
\begin{figure}[h]
  \begin{center}
  \begin{tabular}{c}  
\begin{tikzpicture} 
\draw [|-|](0,0)--(1,0); 
\node [below] at (0,0) {$t=1$}; 
\node [below] at (0,-0.3) {$s=1$}; 
\node [above] at (0,0) {$Y(1)$};
\draw [-|](1,0)--(2,0); 
\node [below] at (1,0) {$2$}; 
\draw [-|](2,0)--(3,0); 
\node [below] at (2,0) {$3$}; 
\draw [-|](3,0)--(4,0); 
\node [below] at (3,0) {$4$};
\node [below] at (3,-0.3) {$s=2$}; 
\node [above] at (3,0) {$Y(2)$};
\draw [-|](4,0)--(5,0); 
\node [below] at (4,0) {$5$}; 
\draw [-|](5,0)--(6,0); 
\node [below] at (5,0) {$6$}; 
\node [below] at (6,0) {$t=7$}; 
\node [below] at (6,-0.3) {$s=3$};
\node [above] at (6,0) {$Y(3)$}; 
\draw [->](6,0)--(7,0); 
\end{tikzpicture} 
\end{tabular}
\end{center} 
\vspace{-0.4cm}
\caption{Conventional model: Instants of state transition,
    observation and decision making are at $t.$ Cumulative feedback model: Instants of state transition are at $t$, observation and decision making are at $s.$} 
\label{fig:limfeemodel} 
\end{figure} 
\section{Literature Overview and Contributions} 
We now summarize the related literature. 
The RMAB problem was first introduced in \cite{Whittle88}, where the author studied a heuristic index policy which maps the state of each arm to a real number, and $N$ arms with highest indices are played at each time step. This   policy is now referred to as Whittle-index policy.  The key ideas involved in obtaining the indices are as follows. 1) Introduce a Lagrangian relaxation of the original coupled optimization problem. 2) This allows to decouple the original problem into sub-problems. Then, it is enough to solve the single armed bandit problem, where a Lagrangian variable is introduced as reward for not playing the arm. This can be interpreted as the subsidy for not playing the arm. 3)
It plays the role of an index.

	Motivated from a computational perspective, the linear programming approach for classical multi-armed bandit was developed in \cite{Chen86,Kallenberg86,Bertsimas-Nino-Mora96}, and was extended to RMAB in \cite{Bertsimas-Nino-Mora00}.  In \cite{Bertsimas-Nino-Mora96}, the authors  introduce  performance measure approach to model the classical multi-armed bandit problem. It is formulated as an linear program (LP) over an extended polymatroid constraint set. These constraints follow the conservation laws. Moreover, a priority index policy is derived using adaptive one pass greedy algorithm for LP. Similarly, the ideas in \cite{Bertsimas-Nino-Mora96} are extended to RMAB in \cite{Bertsimas-Nino-Mora00}.  Here, LP relaxation is introduced and priority index policies are developed via primal-dual heuristics. 

A generalization of MDPs/RMAB called weakly coupled Markov decision processes (WC-MDPs) was introduced in \cite{Hawkins03}. It consists of $M$ independent MDPs which are coupled through linking constraints. The objective is to maximize the long term total cumulative reward subject to the linking constraints. The Lagrangian relaxation for WC-MDP is introduced and is solved using linear programming algorithms and subgradient schemes. An extensive analysis of WC-MDP has been carried out in \cite{Adelman08}. In this, an LP based approximate dynamic programming (LP-ADP) approach to WC-MDP is introduced. 
An upper bound to the optimal value function of WC-MDP that is obtained through LP-ADP, is shown to be tighter than the Lagrangian relaxation bound. Moreover, it is shown that the gap between LP-ADP or Lagrangian relaxed value and optimal solution value is sub-linear in number of sub-problems.  Further, the numerical computations are performed by column generation simplex algorithm. 
Recently, a variation of WC-MDP called decomposable MDPs, have been studied in \cite{Bertsimas16}. An approximate solution approach based on a fluid linear optimization formulation is proposed. It is shown that this formulation provides a tighter bound on the optimal value function than the classical Lagrangian relaxation technique. This is illustrated via numerical examples for multi-armed bandit problems. 

A finite time horizon version of RMAB studied in \cite{Brown17} is motivated from applications such as dynamic assortment and applicant screening. Heuristics based on Lagrangian relaxation of the dynamic program, Whittle index based policy and modified Whittle index policy are studied. The authors provide an LP based formulation of the relaxed problem and use a cutting plane algorithm to solve the Lagrangian dual problem.
Further, an information relaxation bound on the optimal value function is developed. 

Note that finding the optimal solution to a restless multi-armed
bandit problem is known to be PSPACE hard, \cite{Papadimitriou99}. However, the Whittle index policy  and LP based heuristic policies are  shown to be close to  optimal, \cite{Weber90,Bertsimas-Nino-Mora96, Bertsimas-Nino-Mora00}.  
Initial works on RMAB extensively studied models in which states of the arms are perfectly observable,\cite{Gittins11,mahajan08,verloop16,Nino-Mora01}.  Recently, 
 RMABs with partially observable states found interest
due to their applicability to problems in communication
networks,\cite{Nino-Mora08,LiuZhao10,LiNeely10}. 
For RMABs with partially observable states, the Whittle index policy
was shown to be nearly optimal in \cite{LiuZhao10,Ouyang16}.
 
 The myopic policy for solving RMAB has also been widely studied.  It has been shown to be optimal for some scenarios, \cite{Wang14,Zhao08,Murugesan12,Ahmad10}.

RMABs have been used for various applications in across domains.
Some specific applications include recommendation systems\cite{Meshram15,Meshram17}, 
sensor scheduling and target detection\cite{Nino-Mora11}, multi-UAV routing for observing targets\cite{Ny08}, stochastic network optimization\cite{LiNeely10}.
Most models assume instantaneous
feedback and their main interest is to study the Whittle-index or
myopic policy. 
An alternative index policy called as marginal
productivity index was studied by
\cite{Nino-Mora08,Nino-Mora09}. Marginal productivity index here, is
an extension of Whittle-index with interpretations from marginal
productivity theory used in economics.

The application of RMABs to decision making in cognitive radio networks was pioneered by Zhao \textit{et al.} in \cite{Zhao08}. Here, independent and identically evolving (i.i.d.) channels are considered; perfect state-observability is assumed when channels are used. It is shown that, myopic policy has a round-robin structure and is optimal for the case of two channels $(M=2).$ Meanwhile, \cite{Javidi08,Ahmad10} derived conditions for optimality of myopic policy for arbitrary number of i.i.d. channels $(M>2)$ under positively correlated conditions. Later, Liu and Zhao \cite{LiuZhao10} established the Whittle-indexability  for a class of RMABs which were applicable to problems of opportunistic access with perfect sensing. Further, they showed the optimality of  Whittle-index policy under certain conditions. In \cite{Wang14}, of the problem of multi-channel access with imperfect sensing and non-i.i.d. channels was studied; sufficient conditions for optimality of myopic policy were derived. Recently, \cite{Ouyang16} showed asymptotic optimality of Whittle-index policy for the downlink scheduling problem. 

The RMAB model has also been applied to the relay selection problem, \cite{Wei10}. In this work, perfect observability of channel states is assumed. The indices are computed by solving  a first order LP relaxation of RMAB developed in \cite{Bertsimas-Nino-Mora00}.

A common assumption in the above literature that deals with optimality of Whittle-index or myopic policies is the full observability of arm states when played. That is, in scenarios which allow more information to be gathered by playing arms, more general inferences can be made about the performance of various policies. This additional information also makes computation of Whittle-index expressions easier.
In recent work of \cite{Meshram18,Borkar17}, hidden Markov restless multi-armed bandit has been studied and Whittle-index policy is used.  This model assumes that arm state is never fully observable but only binary signals corresponding to each state transition are observed. In these models, partial observability of states makes it more challenging to prove indexability of arms and to obtain closed form index expressions.
 
Recall that in the current work we allow multiple state transitions in each decision interval; hence, information about arms states is even more sparser. This makes claiming indexability and deriving index expressions intractable in general. However, when the number of state transitions are quite large (although finite), tractability of the problem can be improved by making some assumptions (see Section~\ref{sec:sab}). The results provided are applicable for any finite  number of state transitions per decision interval.
\subsection{Contributions}
\begin{enumerate}
 \item We propose a novel methodology to solve the problem of sequential decision making with limited observation capability. We formulate this as restless multi-armed bandit with hidden states and cumulative feedback.
 \item The proposed 
 model allows multiple
   system state transitions during a decision interval. The feedback at the end of a decision interval is cumulative, that is, it represents all the state variations in the interval. This is a generalization of existing models which allow at most one transition.
   Hence, our model better represents rapidly varying channel conditions.   
   
 \item We analyze the problem by first studying single-armed LRBs.
  We show that for single armed LRBs, the optimal policy has a threshold structure.  Proving an optimal threshold policy is rendered cumbersome due to partially observability of states and cumulative feedback information. 
   In the process of showing an optimal threshold policy, we derive various structural properties of the action value functions for positively correlated arms  and negatively correlated arms. 
\item We prove the indexability of a class of lazy restless bandits with hidden states, under some restrictions on discount parameter. This is achieved by making use of the optimal threshold policy result and the properties of the action value functions. We also expect the result to hold for more general conditions; supported by a simulation study.
 \item We derive the closed form expression of Whittle-index for two special cases. Further, we present the Whittle-index computation algorithm for more general settings. This algorithm is based on a two-timescale stochastic approximation scheme. 
 
 \item In order to assess of the degree of sub-optimality of the Whittle-index policy, an upper bound based on the Lagrangian relaxation is provided. We present  an algorithm to compute this bound; it is based on stochastic finite difference method.
 
  We also provide a discussion on RMABs as a subset of the weakly coupled Markov Decsion processes (WC-MDP) and interpret some results from the literature in context of the current model.
  
\item  An extensive comparative study of the Whittle-index policy with other policies such as modified Whittle-index policy, myopic, uniform random, non-uniform random and round robin is provided.

\end{enumerate}

The rest of this document is organized as follows. The system model description is given in Section~\ref{sec:model}, where, an optimization problem for lazy restless bandits is formulated. Single-armed LRBs are analysed in Section~\ref{sec:sab}. Procedures for  Whittle-index computation are provided in Section~\ref{sec:Whittle-index-compute}.
 A discussion is WC-MDP is given in Section~\ref{sec:wcmdp} along with an  algorithm to compute the Lagrangian relaxation bound. A numerical study is provided in Section~\ref{sec:sim} before concluding in Section~\ref{sec:conclusion}.

\section{Model Description and Preliminaries}
\label{sec:model}

Let us consider a \textit{lazy restless multi-armed bandit} with $M$
independent arms. The time-line is divided into sessions that are
indexed by $s$. Each arm represents a channel/link in a communication
system. We model each channel using a Markov chain. Each arm has two states, say, good ($1$) and bad ($0$). At
any arbitrary time, each arm exists in one of the two states.
$Y_m(s) \in \{0,1\}$ denotes the state of arm $m$ at the beginning of
session $s.$ 
Let $K (\geqslant 1)$ be the number of state transitions for each arm
in a given session; it is finite and known to the decision maker. The state of each arm evolves according to a Markov chain. $p^m_{i,j}$ represents the transition probability of
arm $m$ from state $i$ to state $j,$ $i,j \in \{0,1\}$ and the
corresponding transition probability matrix (TPM) is denoted by $P_m=
[[p_{i,j}]].$ In a given session $s,$ the decision maker plays one arm
out of $M$ arms. $A_m(s)$ denotes the action corresponding to arm $m$
in session $s.$ If arm $m$ is played in session $s,$ then $A_m(s)=1$
and $A_m(s) = 0,$ otherwise. Since only one arm is played in a
session, $\sum_{m=1}^{M} A_m(s)=1.$

A reward is accrued at the end of each session from the arm that is played.
It depends on both initial and intermittent states of the arm, during the session. We denote $R_{m,i}$
as the average reward from playing arm $m$ which is in state $i$ at the beginning of the session. No rewards are accrued from arms that are not played. Rewards are not observable by the decision maker. Instead, a feedback is received from the arm that is played, at the end of the session in the form of ACK(1) or NACK(0). An
ACK means a successful session and a NACK means a failed
session. $Z_m(s) \in \{0,1\}$ denotes the feedback signal that is
obtained at end of session $s$ if arm $m$ is played in session $s.$
This feedback is probabilistic. We define $\rho_{m,i} :=
\mathsf{Pr} \{ Z_m(s)=1 \mid A_m(s)=1,Y_m(s)=i\},$ $i \in \{ 0,1\},$ which is the probability of success from playing arm $m$ which is in state $i$ at the beginning of the session.

The values of $R_{m,i}$ are independent of time(session). The average reward from  playing an arm is same in different sessions if the arm begins these sessions in the same state.  
Further, rewards $R_{m,0},$ $ R_{m,1}$ depend on the number of transitions $K;$ and are constant for a given value of $K.$ Similar is the case for probabilities of success $\rho_{m,0}, \rho_{m,1}.$ We do not use any additional notation to emphasize this dependence, as we assume $K$ is known and constant. 

An important assumption here is that, the ordering on rewards $R_{m,i}$  is same as the ordering on success probabilities $\rho_{m,i}.$ That is, if $R_{m,0}<R_{m,1},$ then $\rho_{m,0} < \rho_{m,1},$ and vice-versa. So, greater average reward means greater success probability.

Now, the exact state of each arm is not observable by the
decision maker. The decision maker maintains a belief about the
state of each arm. Let $\pi_m(s)$ the probability that arm $m$ is in
state $0$ at the beginning of session $s$ given the history $H(s),$
where $H(s) = \{A(l),Z(l)\}_{1\leq l<s}.$ Thus $\pi_m(s) :=
\prob{Y_m(s)=0 \mid H(s)}.$ The belief $\pi_m(s)$ about arm $m,$ is
updated by the decision maker at the end of every session $s$, based
on the action taken $A_m(s)$ and feedback received $Z_m(s)$.

Let $\phi := \{\phi(s) \}_{s \geq 0}$ be the policy, where $\phi(s):
H(s) \rightarrow \{1, \cdots, M\}$ maps the history up to session $s$
to action of playing one of the $M$ arms.  Let $A_m^{\phi}(s) = 1,$ if
$\phi(s) = m,$ and $A_m^{\phi}(s) = 0,$ if $\phi(s) \neq m.$
The expected reward from playing arm $m$ in session $s$ is $R_m(\pi_m(s)) \coloneqq \pi_m(s) R_{m,0} + (1-\pi_m(s))R_{m,1}.$
The infinite horizon expected discounted reward under policy $\phi$ is
given by
\begin{eqnarray}
  V_{\phi}(\pi) := \mathrm{E} \bigg\{ \sum_{s=1}^{\infty} \beta^{s-1}
  \sum_{m=1}^{M} A_m^{\phi}(s) \left( \pi_m(s) R_{m,0} \nonumber \right.\\
   \left. + (1-\pi_m(s)) R_{m,1} \right) \bigg\}.
  \label{eqn:val-fun-opt}
\end{eqnarray}
Here, $\beta$ is discount parameter, $0<\beta <1$ and the initial
belief $\pi = [\pi_1, \cdots, \pi_M],$ $\pi_m := \prob{Y_m(1)= 0 }.$
Our objective  is to find the policy $\phi$ that maximizes
$V_{\phi}(\pi)$ for all $\pi \in [0,1]^{M}.$
 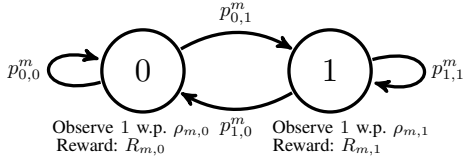
\begin{figure}
 
   \begin{center}
   \resizebox{0.35\textwidth}{!}{
     \begin{tikzpicture}[draw=black,>=stealth', auto, ultra thick, node distance=2cm]
       \tikzstyle{every state}=[fill=white,draw=black, ultra thick,text=black,scale=1.6]
       \node[state]    (A)                     {$0$};
       \node[state]    (B)[ right of=A]   {$1$};
       \path
       (A) edge[loop left]     node{$p_{0,0}^m$}
       (A) edge[bend left,above,->]      node{$p_{0,1}^m $ } (B)
       (B) edge[loop right]    node{$p_{1,1}^m$}
       (B) edge[bend left,below,->]      node{$p_{1,0}^m$} (A); 
       \draw (-0.4,-1) node {\small{ No Signal} };
       \draw (-0.3,-1.3) node {\small{No Reward }};
       \draw (3.3,-1) node {\small{ No Signal }};
       \draw (3.4,-1.3) node {\small{No Reward }}; 
     \end{tikzpicture}}
   \end{center}
   %
   
   
  \centerline{Arm $m$ is not played in session $s$ ($A_m(s) = 0$)}
   
   %
   \begin{center}
     \resizebox{0.35\textwidth}{!}{
     \begin{tikzpicture}[draw=black, >=stealth', auto, ultra thick, node distance=2cm]
     \tikzstyle{every state}=[fill=white,draw=black, ultra thick,text=black,scale=1.6]
       \node[state]    (A)                {$0$};
       \node[state]    (B)[ right of=A]   {$1$};
       \path 
       (A) edge[loop left]     node{$p_{0,0}^{m}$}         (A)
       edge[bend left,above,->]      node{$p_{0,1}^{m}$}      (B)
       (B) edge[loop right]    node{$p_{1,1}^{m}$}     (B)
       edge[bend left,below,->]      node{$p_{1,0}^{m}$}       (A);    
       \draw (-0.2,-1) node {\small {Observe $1$ w.p. $\rho_{m,0}$} };
       \draw (-0.6,-1.3) node {\small{ Reward: $R_{m,0}$} };
       \draw (3.5,-1) node {\small{ Observe $1$ w.p. $\rho_{m,1}$} };
       \draw (3.1,-1.3) node {\small{ Reward: $R_{m,1}$} };
     \end{tikzpicture}}
     
     
     \centerline{Arm $m$ is played ($A_m(s) = 1$)}
   \end{center}
   
   \caption{The state transition probabilities, the reward, and the
     probability of ACK $(1)$ being observed are illustrated above when
	the arm is not played. Also, the corresponding quantities are
	illustrated below when the arm is played. \label{fig:Arm-MC}}  

	\end{figure}
In \cite{Whittle88}, Lagrangian relaxation of this problem is analyzed  by 
introducing subsidy for not playing the arm. A solution to the relaxed problem is obtained by first studying the  
single-armed restless bandit.

\section{Single-armed Lazy Restless Bandit}
\label{sec:sab}
We consider a subsidy $\eta$ assigned if the arm is not played.  As
we are dealing with a single arm, we drop notation $m$ for
convenience.  In the view of subsidy $\eta$ one can reformulate
problem in~\eqref{eqn:val-fun-opt} for single-armed bandit as follows.
%
%
\begin{eqnarray}
  V_{\phi}(\pi): = \mathrm{E}\bigg\{\sum_{s=1}^{\infty} \beta^{s-1}
  \bigg( A^{\phi}(s) \left( \pi(s) R_0 +  (1-\pi(s)) R_1 \right) \nonumber \\+ \eta (1 - A^{\phi}(s)) \bigg) \bigg\} 
\end{eqnarray}
The goal is to find the policy $\phi$ that maximizes $V_{\phi}(\pi)$
for $\pi \in [0,1],$ $\pi$ is the initial belief.

We now describe the belief update rules. They determine the properties of the value functions. The following expressions can be obtained by employing Bayes rule.
\begin{enumerate}
\item If a channel is used for transmission in session $s$ and ACK
  is received, i.e., $A(s)=1$ and $Z(s)=1,$ then the belief at the
  beginning of session $s+1$ is $\pi(s+1) = \gamma_1(\pi(s)).$
  Here,
%
  \begin{eqnarray*}
    \gamma_1(\pi(s)):=\frac{(1-\pi(s))\rho_{1}p_{1,0} + \pi(s)
      \rho_{0}p_{0,0}}{\rho_{1}(1-\pi(s)) + \rho_{0}\pi(s)}.
  \end{eqnarray*}
%
\item If a channel is used for transmission in session $s$ and NACK
  is received, i.e., $A(s)=1$ and $Z(s)=0,$ then the belief at the
  beginning of session $s+1$ is $\pi (s+1) = \gamma_0(\pi(s)),$
  where
%
      \begin{eqnarray*}
        \gamma_0(\pi(s)) := \frac{(1-\pi(s))(1-\rho_{1})p_{1,0} +
          \pi(s)(1-\rho_{0}) p_{0,0}}{(1-\rho_{1})(1-\pi(s))
          + (1-\rho_{0})\pi(s)}.
      \end{eqnarray*}
%
\item If a channel is not used for transmission, i.e., $A(s)=0,$
  then the belief at the beginning of session $s+1$ is $\pi(s+1) =
  \gamma_2(\pi(s)),$ where
  {\small{
      \begin{dmath}
        {\gamma_2(\pi(s)) := \left(p_{0,0}-p_{1,0} \right)^{K} \pi(s) +
        p_{1,0} \frac{ \left(1 - (p_{0,0}-p_{1,0})^K \right)}{1-
          (p_{0,0}-p_{1,0})}.}
        \label{eqn:ksteps}
      \end{dmath}
  }}
This is because the channel is evolving independently,  after $K$ transitions of channel state, we  obtain belief as given in  the expression~\eqref{eqn:ksteps}.
\end{enumerate} 

Arms with $p_{0,0} > p_{1,0},$ are
called positively correlated as they tend to cling to their current state and evolve gradually.
Whereas, negatively correlated arms $p_{0,0} < p_{1,0},$ tend to change states more frequently.  
\begin{table}[h]
	\centering
	\caption{Numerical examples when difference between $p_{00}-p_{10} =
		0.2,0.5$ and $K=5,10$} \scalebox{0.85}{
		\begin{tabular}{|c|c|c|c|c|c|c|}
			\hline
			$p_{00}$ & $p_{10}$ & $\rho_0$ & $\rho_1$   & $K$  & $\gamma_2(\pi)$ & $q$ \\   \hline 
			$0.9$ & $0.4$& $0$ & $0.95$ & $10$ & $0.80$ & $0.8$ \\ 
			\hline
			$0.95$ & $0.45$ & $0$ & $0.95$ & $10$ & $0.9$  & $0.9$ \\ 
			\hline
			$0.8$  & $0.3$ & $0.2$ & $0.95$ & $10$ & $0.6$ & $0.6$  \\ 
			\hline 
			$0.8$ & $0.6$ & $0.2$ & $0.95$ & $5$ & $0.75$  & $0.75$ \\ \hline 
			$0.5$ & $0.3$ & $0.1$  & $0.9$ & $5$ &$0.375$ & $0.375$  \\ \hline  
		\end{tabular} }
		\label{table:evolution-gamma2}
	\end{table} 

\begin{remark}
	We can see from the expression of $\gamma_2$ in
	Eqn.~\eqref{eqn:ksteps} that for fixed value of $\pi,$ as $K
	\rightarrow \infty,$ we get $\gamma_2(\pi) \rightarrow q,$ where $q
	= \frac{p_{1,0}}{1- (p_{0,0}-p_{1,0})}.$ As mentioned earlier, in this work we assume $K$ to be finite, although it may be arbitrarily large. The rate of convergence of
	$\gamma_2$ to $q$ depends on $(p_{0,0}-p_{1,0}).$ This suggests that
	for large values of $K,$ we can approximate $\gamma_2(\pi)$ with $q.$
	If $\vert p_{0,0} - p_{1,0}\vert$ is smaller, then $K$ required for
	this approximation is small.
	In Table.~\ref{table:evolution-gamma2}, we present few examples
	where 1) $ \vert p_{0,0}-p_{1,0} \vert = 0.5,$ then $\gamma_2(\pi)
	\approx q$ for $k = 10,$ and 2) $ \vert p_{0,0}-p_{1,0} \vert =
	0.2,$ then $\gamma_2(\pi) \approx q$ for $k = 5.$
\end{remark} 
We now seek a stationary deterministic
policy. From~\cite{Ross71,Lovejoy87}, we
know that $\pi(s)$ is a sufficient statistic for constructing such
policies and the optimal value function can be determined by solving
following dynamic program.

{\small{
		\begin{eqnarray}
		V_S(\pi) &=& R_S(\pi) + \beta \left(\rho(\pi) V(\gamma_1(\pi)) + 
		(1-\rho(\pi)) V(\gamma_0(\pi)) \right) \nonumber \\
		V_{NS}(\pi)&=& \eta + \beta V(\gamma_2(\pi)) \nonumber \\
		V(\pi) &=& \max\{ V_S(\pi), V_{NS}(\pi) \}.
		\label{eqn:dynprog-sab}
		\end{eqnarray} 
	}}
	Here, $R_S(\pi) = \pi R_0 + (1-\pi) R_1,$ and $\rho(\pi) = \pi \rho_0 + (1-\pi)\rho_1. $
		
	Given belief $ \pi, $ $ V_S(\pi) $ denotes the value (discounted cumulative reward) of the decision to play the arm in current session and then follow the optimal policy for all future sessions. Similarly, $ V_{NS}(\pi) $ is the value of the decision to not play the arm in current session and then follow the optimal policy for all future sessions.
	
	We next derive structural results for these value functions. 
	A sketch of proof is provided along with each result. Detailed proofs for major results can be found in the Appendix.
	The following lemma is about the convexity of value functions in belief $\pi$ and  subsidy $\eta.$

\begin{lemma}
\label{lemma:valfunc-convex}
\begin{enumerate}
\item For fixed $\eta,$ $V_S(\pi),$ $V_{NS}(\pi)$ and $V(\pi)$ are
  convex in $\pi.$
\item For fixed $\pi,$ $V_{S}(\pi,\eta),$ $V_{NS}(\pi,\eta)$ and
  $V(\pi,\eta)$ are non-decreasing and convex in $\eta.$
\end{enumerate}
\end{lemma}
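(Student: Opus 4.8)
The plan is to establish both parts by value iteration, exploiting the fact that the Bellman operator in~\eqref{eqn:dynprog-sab} preserves convexity (and, for part~2, monotonicity in $\eta$). Define the iterates $V^{(0)} \equiv 0$ and $V^{(n+1)}(\pi) = \max\{V_S^{(n)}(\pi), V_{NS}^{(n)}(\pi)\}$, where $V_S^{(n)}(\pi) = R_S(\pi) + \beta\left(\rho(\pi) V^{(n)}(\gamma_1(\pi)) + (1-\rho(\pi)) V^{(n)}(\gamma_0(\pi))\right)$ and $V_{NS}^{(n)}(\pi) = \eta + \beta V^{(n)}(\gamma_2(\pi))$. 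Since $0<\beta<1$ and the rewards and $\eta$ are bounded, the Bellman operator is a contraction in the sup norm, so $V^{(n)} \to V$ uniformly. A pointwise limit of convex functions is convex, so it suffices to show each $V^{(n)}$ has the desired property, transfer it to $V$, and finally obtain the claim for $V_S$ and $V_{NS}$ through one more application of~\eqref{eqn:dynprog-sab}.

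For part~1 (convexity in $\pi$ with $\eta$ fixed), the induction step rests on the structure of the belief updates. The denominator of $\gamma_1(\pi)$ is exactly $\rho(\pi) = \pi\rho_0 + (1-\pi)\rho_1$, and its numerator $(1-\pi)\rho_1 p_{1,0} + \pi\rho_0 p_{0,0}$ is affine in $\pi$; hence $\rho(\pi) V^{(n)}(\gamma_1(\pi))$ can be written as $\rho(\pi)\, V^{(n)}\!\left(\tfrac{a(\pi)}{\rho(\pi)}\right)$ with $a$ affine and $\rho$ affine and nonnegative, which is precisely the perspective of the convex function $V^{(n)}$ composed with affine maps, and therefore convex in $\pi$. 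The identical argument handles $(1-\rho(\pi)) V^{(n)}(\gamma_0(\pi))$, whose denominator is $1-\rho(\pi)$. Since $R_S(\pi)$ is affine and $\gamma_2(\pi)$ is affine in $\pi$ by~\eqref{eqn:ksteps} (so $V^{(n)}(\gamma_2(\pi))$ is convex), both $V_S^{(n)}$ and $V_{NS}^{(n)}$ are convex, and their pointwise maximum $V^{(n+1)}$ is convex. This closes the induction; passing to the limit yields $V$ convex, and one more pass of~\eqref{eqn:dynprog-sab} gives $V_S$ and $V_{NS}$ convex.

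For part~2 (nondecreasing and convex in $\eta$ with $\pi$ fixed), I would run the same induction tracking both properties simultaneously: assume $V^{(n)}(\pi,\cdot)$ is nondecreasing and convex in $\eta$ for every $\pi$. Then $V_S^{(n)}(\pi,\eta)$ is a nonnegative combination (weights $\beta\rho(\pi)$ and $\beta(1-\rho(\pi))$) of such functions plus a term constant in $\eta$, hence nondecreasing and convex in $\eta$; $V_{NS}^{(n)}(\pi,\eta) = \eta + \beta V^{(n)}(\gamma_2(\pi),\eta)$ is the sum of an increasing affine function of $\eta$ and $\beta$ times a nondecreasing convex function, hence nondecreasing and convex; and the maximum of two such functions retains both properties. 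The limit $V$ inherits them, and a final application of~\eqref{eqn:dynprog-sab} transfers them to $V_S$ and $V_{NS}$. (Alternatively, part~2 follows directly by writing $V_\phi(\pi,\eta)$ as an affine function of $\eta$ with slope $\mathrm{E}\!\left(\sum_{s\geq1}\beta^{s-1}(1-A^\phi(s))\right)\in[0,\tfrac{1}{1-\beta}]$, so that $V=\sup_\phi V_\phi$ is a supremum of nondecreasing affine functions of $\eta$, hence nondecreasing and convex.)

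The main obstacle is the convexity step in part~1: recognizing that the unnormalized belief updates make $\rho(\pi) V(\gamma_1(\pi))$ and $(1-\rho(\pi)) V(\gamma_0(\pi))$ perspectives of convex functions, and treating the boundary case $\rho(\pi)\in\{0,1\}$ (which arises only at $\pi\in\{0,1\}$ when $\rho_0$ or $\rho_1$ equals $0$ or $1$) by a limiting argument or by noting the offending term is continuous and vanishes there. The remaining ingredients---the contraction and uniform convergence of value iteration, and closure of convexity/monotonicity under pointwise limits, nonnegative combinations, and maxima---are routine.
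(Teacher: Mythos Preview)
Your proposal is correct and follows essentially the same route as the paper's proof: both arguments proceed by value iteration and induction, using the perspective-function observation that $\rho(\pi)\,V^{(n)}\bigl(a(\pi)/\rho(\pi)\bigr)$ is convex when $V^{(n)}$ is convex and $a,\rho$ are affine (the paper invokes this via \cite{astrom1969}, Lemma~2), together with the affinity of $\gamma_2$ and closure of convexity under maxima and limits. Your treatment is slightly more self-contained in spelling out the perspective structure and the boundary case $\rho(\pi)\in\{0,1\}$, and your alternative for part~2 (writing $V=\sup_\phi V_\phi$ as a supremum of affine functions of $\eta$) is a nice shortcut the paper does not mention, but the core argument is the same.
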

\textit{Sketch of proof:}
1)  We know that applying value iteration on an initial set of functions would produce sequences of functions $V_{S,n},$ $V_{NS,n},$ $V_n.$ Let $V_{S,1}(\pi)=R(\pi),$ $V_{NS,1}(\pi)=\eta$ and $V_1=\max\{V_{S,1},V_{NS,1}\};$ all of which are convex.  Now, we assume $V_{S,n},$ $V_{NS,n}$ are convex and show that $V_{S,n+1},$ $V_{NS,n+1},$ are convex. Then, by induction $V_{S,n},$ $V_{NS,n}$ and $ V_n $ are convex for all $ n.$ We know that these sequences of functions converge uniformly to $V_S,$ $V_{NS}$ and $V$ by value iteration. And the result follows.\\
2) Proof of the second part can also be claimed using similar argument.
The detailed proof is given in Section A,B of the Appendix.

We first provide structural results for positively correlated arms. Later we will study negatively correlated arms.
\subsection{Positively correlated arm}
	Let us start by looking at the properties of belief update functions; they will be useful while proving the computing the Whittle index.
\begin{lemma}
For positively correlated arm, i.e., $p_{0,0} > p_{1,0},$
  the belief updates $\gamma_0(\pi),$ $\gamma_1(\pi)$ and
  $\gamma_2(\pi)$ are increasing in $\pi.$ Further, $\gamma_1(\pi)$
  and $\gamma_0(\pi)$ are convex and concave, respectively. Also,
  $p_{1,0}\leq \gamma_1(\pi) \leq \gamma_0(\pi) \leq p_{0,0}.$
\label{lemma:gammas-prop} 
\end{lemma}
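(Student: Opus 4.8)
The plan is to treat each of the three claims (monotonicity, convexity/concavity, and the ordering chain) in turn, relying on the explicit closed-form expressions for $\gamma_0, \gamma_1, \gamma_2$ given just above the statement.

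For monotonicity, I would first handle $\gamma_2$, which is the easiest: from Eqn.~\eqref{eqn:ksteps}, $\gamma_2(\pi)$ is affine in $\pi$ with slope $(p_{0,0}-p_{1,0})^K$, which is positive (and in fact in $(0,1)$) when $p_{0,0}>p_{1,0}$ since $0<p_{0,0}-p_{1,0}<1$; hence $\gamma_2$ is increasing. For $\gamma_1$ and $\gamma_0$, I would rewrite each as a Möbius (linear-fractional) function of $\pi$, i.e. in the form $\frac{a\pi+b}{c\pi+d}$, and compute the sign of the derivative, which for such a function equals the sign of $ad-bc$ divided by the square of the denominator. For $\gamma_1$, after clearing terms one finds the numerator of the derivative proportional to $\rho_0\rho_1(p_{0,0}-p_{1,0})$, which is positive; similarly for $\gamma_0$ the relevant quantity is $(1-\rho_0)(1-\rho_1)(p_{0,0}-p_{1,0})$, again positive. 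So both are increasing. (Here I use the standing assumption $\rho_0<\rho_1$ in $[0,1]$, so $\rho_0\rho_1\ge 0$ and $(1-\rho_0)(1-\rho_1)\ge 0$; the edge cases $\rho_0=0$ or $\rho_1=1$ make the derivative vanish, i.e. the map is constant, which is still non-decreasing — I would note this.)

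For the convexity of $\gamma_1$ and concavity of $\gamma_0$: a linear-fractional function $\frac{a\pi+b}{c\pi+d}$ has second derivative of sign $-2c(ad-bc)/(c\pi+d)^3$, so once the numerator $ad-bc$ has known sign, convexity or concavity is decided by the sign of $c$ (the coefficient of $\pi$ in the denominator) and the sign of the denominator on $[0,1]$. For $\gamma_1$ the denominator is $\rho_1(1-\pi)+\rho_0\pi = \rho_1 + (\rho_0-\rho_1)\pi$, so $c=\rho_0-\rho_1<0$; combined with $ad-bc>0$ this should yield a positive second derivative, i.e. convexity. For $\gamma_0$ the denominator is $(1-\rho_1)+((1-\rho_0)-(1-\rho_1))\pi = (1-\rho_1)+(\rho_1-\rho_0)\pi$, so $c = \rho_1-\rho_0>0$, and with the numerator positive this gives a negative second derivative, i.e. concavity. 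I would also need to check the denominators stay strictly positive on $[0,1]$ — they are convex combinations of $\rho_0,\rho_1$ (resp. $1-\rho_0,1-\rho_1$), so as long as not both endpoints vanish this holds; the degenerate cases again collapse to affine/constant maps which are trivially both convex and concave.

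For the ordering $p_{1,0}\le\gamma_1(\pi)\le\gamma_0(\pi)\le p_{0,0}$: the outer bounds follow because $\gamma_1$ and $\gamma_0$ are each weighted averages of $p_{0,0}$ and $p_{1,0}$ — inspecting the formulas, $\gamma_1(\pi)$ is a convex combination of $p_{0,0}$ (weight $\rho_0\pi/(\rho_1(1-\pi)+\rho_0\pi)$) and $p_{1,0}$, and likewise for $\gamma_0$, so both lie in $[p_{1,0},p_{0,0}]$ when $p_{0,0}>p_{1,0}$. The middle inequality $\gamma_1(\pi)\le\gamma_0(\pi)$ is the one real computation: cross-multiplying (both denominators are positive) reduces it to showing a polynomial inequality in $\pi$, which after simplification should factor through $(p_{0,0}-p_{1,0})$ times something manifestly nonnegative built from $\rho_0,\rho_1,\pi$ — intuitively, a NACK is "worse news" than an ACK, so it pushes the belief of being in the bad state $0$ higher, i.e. $\gamma_0\ge\gamma_1$. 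I expect this middle inequality to be the main obstacle, since it couples the two fractions rather than analyzing each in isolation; the cleanest route is probably to write $\gamma_0(\pi)-\gamma_1(\pi)$ over a common denominator and show the numerator is $(p_{0,0}-p_{1,0})$ times a sum of products of nonnegative quantities, rather than expanding blindly. Alternatively, one can observe that both $\gamma_1$ and $\gamma_0$ are of the form (one-step Markov update applied to a Bayesian posterior), and that the ACK-posterior first-order stochastically dominates the NACK-posterior toward state $1$ because $\rho_1\ge\rho_0$, then push this through the monotone map $x\mapsto p_{0,0}x + p_{1,0}(1-x)$ — but making that argument rigorous still ultimately reduces to the same algebraic inequality, so I would just do the computation.
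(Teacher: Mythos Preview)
Your proposal is correct and follows essentially the same approach as the paper, which merely states that ``the proof can be claimed by looking at the signs of the first and second derivatives of these functions.'' Your M\"obius-function computations of $ad-bc$ (yielding $\rho_0\rho_1(p_{0,0}-p_{1,0})$ and $(1-\rho_0)(1-\rho_1)(p_{0,0}-p_{1,0})$ respectively) and of the second-derivative signs are exactly the calculations that one-line sketch is pointing to; your treatment of the ordering chain via convex-combination weights and a cross-multiplication for the middle inequality is a clean way to fill in the part the paper leaves implicit.
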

 The proof can be claimed by looking at the signs of the first and second derivatives of these functions. 
\begin{lemma}
 For a positively correlated arm $(p_{0,0}>p_{1,0})$ with a fixed subsidy $\eta,$ $\beta \in (0,1)$,the
value functions $V(\pi),$ $V_S(\pi)$ and $V_{NS}(\pi)$ are decreasing
in $\pi.$
\label{lemma:valfunc-dec-for-positivecorr}
\end{lemma}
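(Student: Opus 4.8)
The plan is to prove this by value iteration, exactly in the spirit of the sketch given for Lemma~\ref{lemma:valfunc-convex}. Define the sequence $V_{S,n}, V_{NS,n}, V_n$ generated by the dynamic program~\eqref{eqn:dynprog-sab} starting from $V_0 \equiv 0$ (or $V_{S,1}(\pi)=R_S(\pi)$, $V_{NS,1}(\pi)=\eta$, $V_1=\max\{V_{S,1},V_{NS,1}\}$). Since this iteration converges uniformly to $V_S, V_{NS}, V$, it suffices to show by induction on $n$ that each $V_{S,n}(\pi), V_{NS,n}(\pi), V_n(\pi)$ is non-increasing in $\pi$. The base case is immediate: $R_S(\pi)=\pi R_0+(1-\pi)R_1$ is non-increasing because $R_0 \le R_1$ (the good state $1$ carries the larger reward, and $\pi$ is the probability of being in the bad state $0$), and $\eta$ is constant. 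The maximum of two non-increasing functions is non-increasing, so $V_1$ is non-increasing.

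For the inductive step, assume $V_n$ is non-increasing. First handle $V_{NS,n+1}(\pi)=\eta+\beta V_n(\gamma_2(\pi))$: by Lemma~\ref{lemma:gammas-prop}, $\gamma_2$ is increasing in $\pi$, and $V_n$ is non-increasing by hypothesis, so the composition $V_n\circ\gamma_2$ is non-increasing, hence so is $V_{NS,n+1}$. Next handle $V_{S,n+1}(\pi)=R_S(\pi)+\beta\big(\rho(\pi)V_n(\gamma_1(\pi))+(1-\rho(\pi))V_n(\gamma_0(\pi))\big)$. The term $R_S(\pi)$ is non-increasing as noted. For the continuation term, I would argue as follows: write it as $\beta\,\mathbb{E}[V_n(\gamma_{Z}(\pi))]$ where $Z=1$ with probability $\rho(\pi)$ and $Z=0$ otherwise. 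By Lemma~\ref{lemma:gammas-prop}, both $\gamma_1(\pi)$ and $\gamma_0(\pi)$ are increasing in $\pi$ and satisfy $\gamma_1(\pi)\le\gamma_0(\pi)$, and since $V_n$ is non-increasing, $V_n(\gamma_1(\pi))\ge V_n(\gamma_0(\pi))$. As $\pi$ increases, the success probability $\rho(\pi)=\pi\rho_0+(1-\pi)\rho_1$ decreases (because $\rho_0\le\rho_1$, consistent with the stated ordering assumption linking rewards and success probabilities), so weight shifts from the larger value $V_n(\gamma_1(\pi))$ toward the smaller value $V_n(\gamma_0(\pi))$. Combined with the fact that each of $V_n(\gamma_1(\pi))$ and $V_n(\gamma_0(\pi))$ is individually non-increasing in $\pi$, the convex combination is non-increasing. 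Hence $V_{S,n+1}$ is non-increasing, and $V_{n+1}=\max\{V_{S,n+1},V_{NS,n+1}\}$ is non-increasing as the max of two such functions. This closes the induction, and taking the limit gives the claim for $V_S, V_{NS}, V$.

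The main obstacle is the monotonicity of the continuation term in $V_{S,n+1}$: one must simultaneously control (i) the monotonicity of the belief maps $\gamma_0,\gamma_1$, (ii) the ordering $\gamma_1\le\gamma_0$, and (iii) the direction in which the mixing weight $\rho(\pi)$ moves, and then combine these correctly. Care is needed because the sign of the effect of the weight change depends on the ordering $V_n(\gamma_1(\pi))\ge V_n(\gamma_0(\pi))$, which itself relies on the induction hypothesis together with Lemma~\ref{lemma:gammas-prop}; a clean way to make (i)--(iii) rigorous is to use a coupling/stochastic-dominance argument — for $\pi'\ge\pi$ the post-belief $\gamma_Z(\pi')$ stochastically dominates $\gamma_Z(\pi)$ (larger realized belief in each case, and more mass on the $Z=0$ branch which yields the larger belief $\gamma_0$), so $\mathbb{E}[V_n(\gamma_Z(\pi'))]\le\mathbb{E}[V_n(\gamma_Z(\pi))]$ since $V_n$ is non-increasing. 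Everything else is routine, and the ordering assumptions on $R_i$ and $\rho_i$ stated in the model are exactly what make the base case and the weight-shift argument go through.
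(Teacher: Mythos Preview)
Your proposal is correct and follows essentially the same approach as the paper: an induction on the value-iteration sequence, using Lemma~\ref{lemma:gammas-prop} for monotonicity of $\gamma_0,\gamma_1,\gamma_2$ and the ordering $\gamma_1\le\gamma_0$, together with the fact that $\rho(\pi)$ is decreasing, to show the continuation term is non-increasing (the paper phrases the weight-shift step as a stochastic-ordering inequality citing \cite{Lovejoy87}, which is exactly your coupling/dominance remark). Your decomposition into effects (i)--(iii) and the explicit base case make the argument slightly more transparent than the paper's sketch, but there is no substantive difference.
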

	\textit{Sketch of proof:} Assume that $V_{S,n},V_{NS,n}$ and $V_n$ are non increasing in $\pi.$ We need to show, $V_{n+1}(\pi)\geq V_{n+1}(\pi')$ for $\pi'>\pi.$ We know $R_S(\pi)>R_S(\pi').$ Let $u = [V_n(\gamma_0(\pi)), V_n(\gamma_1(\pi))],$  $ v = [1-\rho(\pi),\rho(\pi)]^T $ and $v'=[1-\rho(\pi'),\rho(\pi')]^T.$ We see  stochastic ordering $v'\leqslant_s v.$ Hence, $uv \geq uv'$ by the property of stochastic ordering $(\leqslant_s).$ 
	After some algebra it follows that $V_{n+1}(\pi) \geq V_{n+1}(\pi')$.
	Similarly, we can argue for $V_{NS}(\pi).$

See Section C of Appendix for detailed proof.

	Notice that the difference between action value functions,  $V_S(\pi)-V_{NS}(\pi)$ gives the advantage of playing over not playing, for belief state $\pi.$ We will show that this function is decreasing in belief.
\begin{lemma}
For fixed subsidy $\eta,$ and $p_{0,0} > p_{1,0}.$
the function $(V_S -V_{NS})(\pi)$ is
decreasing in $\pi$ for any of the following conditions
\begin{enumerate}
\item For large $K,$ \textit{i.e.} $\gamma_2(\pi) \approx q,$
\item For any $K>1,$ when, $0<p_{0,0} - p_{1,0}<\frac{b}{5}$ and $\beta \in (0,1),$ where, $b = \min\left\lbrace 1, \frac{R_1-R_0}{\rho_1-\rho_0}\right\rbrace,$
\item For any $K>1,$ when, $\beta \in (0,b/5).$ 
\end{enumerate}
\label{lemma:dpi-decreasing-positivecorr}
\end{lemma}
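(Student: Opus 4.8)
The plan is to work with the value-iteration sequences $V_{S,n}$, $V_{NS,n}$, $V_n$ introduced in the sketch for Lemma~\ref{lemma:valfunc-convex}, and to show by induction that $g_n(\pi) := V_{S,n}(\pi) - V_{NS,n}(\pi)$ is decreasing in $\pi$; the claim then follows by uniform convergence. Writing out the one-step update,
\begin{equation*}
g_{n+1}(\pi) = R_S(\pi) - \eta + \beta\Big[\rho(\pi)V_n(\gamma_1(\pi)) + (1-\rho(\pi))V_n(\gamma_0(\pi)) - V_n(\gamma_2(\pi))\Big].
\end{equation*}
The first term $R_S(\pi)-\eta = \pi R_0 + (1-\pi)R_1 - \eta$ is decreasing in $\pi$ with slope $-(R_1-R_0)$, so it contributes a guaranteed decrease. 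The task is to control the bracketed term, call it $\beta\,h_n(\pi)$, and show either that it is also decreasing or, failing that, that its derivative is bounded above by $(R_1-R_0)$ (so the $R_S$ term dominates). This is where the three alternative hypotheses enter, and it is the main obstacle: $h_n$ need not be monotone in general, because playing moves the belief toward $\{p_{1,0},p_{0,0}\}$ while not playing moves it toward $q=\gamma_2$'s fixed point, and these pulls can act in opposite directions on $\frac{d}{d\pi}$.

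For case (1), $\gamma_2(\pi)\approx q$ is constant, so $V_n(\gamma_2(\pi))$ drops out of the $\pi$-dependence of $h_n$ entirely; what remains is $\rho(\pi)V_n(\gamma_1(\pi)) + (1-\rho(\pi))V_n(\gamma_0(\pi))$, which is the same "expected-continuation-value-after-playing" object analyzed in the proof of Lemma~\ref{lemma:valfunc-dec-for-positivecorr}. There, stochastic ordering of $[1-\rho(\pi),\rho(\pi)]$ in $\pi$ together with $V_n$ decreasing (Lemma~\ref{lemma:valfunc-dec-for-positivecorr}), $\gamma_0,\gamma_1$ increasing (Lemma~\ref{lemma:gammas-prop}), and $\gamma_1(\pi)\le\gamma_0(\pi)$ already give that this term is decreasing; adding the decreasing $R_S(\pi)-\eta$ preserves the sign, closing the induction. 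For cases (2) and (3) I would instead bound the derivative crudely: differentiate $h_n$, use $|V_n'|\le (R_1-R_0)/(1-\beta)$ (a Lipschitz bound on $V_n$ coming from convexity plus the monotonicity range, established en route), use $|\gamma_0'|,|\gamma_1'|\le 1$ and $|\gamma_2'| = (p_{0,0}-p_{1,0})^K \le (p_{0,0}-p_{1,0})$, and $|\rho'(\pi)| = \rho_1-\rho_0$. Collecting terms, $\beta|h_n'(\pi)|$ is at most a constant multiple — the constant being where the factor $5$ comes from, roughly $\rho_1-\rho_0$ times a sup of $V_n$ plus $\beta$ times $\sup|V_n'|$ with the $\gamma$ contributions — of either $\beta$ (case 3) or $(p_{0,0}-p_{1,0})$ (case 2), and $b=\min\{1,(R_1-R_0)/(\rho_1-\rho_0)\}$ is precisely the threshold making this bounded by $R_1-R_0$, so that $g_{n+1}'(\pi) = -(R_1-R_0) + \beta h_n'(\pi) \le 0$.

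So the skeleton is: (i) set up the value-iteration recursion for $g_n$; (ii) record the needed a priori bounds on $V_n$ and $V_n'$ (sup norm and Lipschitz constant), proved by the same induction; (iii) handle case (1) by the stochastic-ordering argument since $\gamma_2$ is constant; (iv) handle cases (2),(3) by the derivative estimate, tracking constants to justify the $b/5$ thresholds; (v) pass to the limit $n\to\infty$. I expect step (iv) — getting the constants in the derivative bound small enough, honestly, to land under $R_1-R_0$ rather than some larger multiple — to be the delicate part, and the reason the hypotheses are stated with the somewhat arbitrary-looking factor $5$; the cleaner argument is really case (1), which is also the regime the paper emphasizes ("for large $K$") in the preceding remark.
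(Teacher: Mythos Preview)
Your treatment of case~(1) is essentially the paper's: once $\gamma_2(\pi)\approx q$ is constant, $V_{NS}(\pi)=\eta+\beta V(q)$ does not depend on $\pi$, and Lemma~\ref{lemma:valfunc-dec-for-positivecorr} already gives $V_S$ decreasing, so $V_S-V_{NS}$ is decreasing. You phrase this through the $n$-step recursion, but no induction on $g_n$ is actually used or needed; the paper simply invokes Lemma~\ref{lemma:valfunc-dec-for-positivecorr} directly.

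For cases~(2) and~(3) your overall strategy (Lipschitz bound on $V_n$, then estimate $g_{n+1}'$) matches the paper's, but the specific estimates you propose have a real gap. The bound $|\gamma_0'|,|\gamma_1'|\le 1$ is false: since $\gamma_1'(\pi)=\rho_0\rho_1(p_{0,0}-p_{1,0})/\rho(\pi)^2$, taking $\pi\to 1$ gives $\gamma_1'(1)=\rho_1(p_{0,0}-p_{1,0})/\rho_0$, which is arbitrarily large for small $\rho_0$. What the paper actually uses is that the \emph{product} with the weight is small: $\rho(\pi)\gamma_1'(\pi)=\rho_0\rho_1(p_{0,0}-p_{1,0})/\rho(\pi)\le \rho_1(p_{0,0}-p_{1,0})$, and similarly $(1-\rho(\pi))\gamma_0'(\pi)\le(1-\rho_0)(p_{0,0}-p_{1,0})$. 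This is how the factor $(p_{0,0}-p_{1,0})$ enters every term of $V_{S}'$, and it is indispensable for case~(2): there $\beta\in(0,1)$ is unrestricted, so any term in your $\beta h_n'$ that is merely $O(\beta|V_n'|)$ without a $(p_{0,0}-p_{1,0})$ factor cannot be made smaller than $R_1-R_0$. With only $|\gamma_j'|\le 1$ you would get exactly such a term, and the argument collapses for case~(2); for case~(3) you would get a $\beta$-threshold, but not the stated $b/5$.

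Relatedly, the Lipschitz bound you cite, $|V_n'|\le(R_1-R_0)/(1-\beta)$ ``from convexity plus the monotonicity range'', is not justified: convexity on $[0,1]$ with bounded range does not bound the one-sided derivatives at the endpoints. The paper instead establishes $|V_n'|\le \kappa c(\rho_1-\rho_0)$ with $\kappa=1/(1-\beta(p_{0,0}-p_{1,0}))$ by a separate induction (its Lemma~9 in the appendix), and the inductive step again relies on the exact $\gamma_j'$ formulas above. Feeding this sharper constant back into the derivative of $V_S$ yields $V_S'(\pi)\le \kappa c(\rho_1-\rho_0)\{-b+4\beta(p_{0,0}-p_{1,0})\}$, and combining with $V_{NS}'(\pi)\ge -\beta\kappa c(\rho_1-\rho_0)(p_{0,0}-p_{1,0})^K$ gives $d'(\pi)\le \kappa c(\rho_1-\rho_0)\{-b+5\beta(p_{0,0}-p_{1,0})\}$, from which both thresholds $b/5$ drop out. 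So the skeleton in your step~(iv) is right, but you must replace the crude $|\gamma_j'|\le 1$ by the weighted bounds $\rho(\pi)\gamma_1'(\pi),\,(1-\rho(\pi))\gamma_0'(\pi)\le (p_{0,0}-p_{1,0})$, and prove the Lipschitz constant inductively rather than by a convexity/range argument.
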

\textit{Sketch of proof:} Part 1) For large $K,$ the approximation $\gamma_2(\pi)\approx q$ makes $V_{NS}(\pi)$ independent of $\pi.$ Further, from Lemma~\ref{lemma:valfunc-dec-for-positivecorr}, $V_S$ is decreasing in $\pi$ for positively correlated arms. Hence, the result follows.\\
	 For Parts 2) and Part 3), the key ideas involved in the proof are as follows. (1) We first bound the derivatives of $V_S,$ $V_{NS}$ and $V$ w.r.t. $\pi,$ (see Lemma 9 in Section D of the Appendix). This is also called as the Lipschitz property of the value functions. The Lipschitz constant is explicitly calculated. (2) Then, we show that the derivative of $V_S-V_{NS}$ w.r.t. $\pi$ is negative under the given conditions. 
	  One might consider the right partial derivatives at points where any of the functions are non-differentiable.
For an arbitrary $K,$ we claim a decreasing advantage of playing by imposing conditions on either the transition probabilities or the discount factor.

The detailed proof can be found in Section D of the Appendix.

The following lemma gives the properties of belief update functions for negatively correlated arms. These properties are complementary to those of positively correlated arms. That is, if $\gamma_{(.)}(\pi)$ is increasing, convex for positively correlated arms, it is decreasing, concave for negatively correlated arms.

\begin{lemma}
For negatively correlated arm, i.e., $p_{0,0}<p_{1,0} $
  the belief updates $\gamma_0(\pi),$ $\gamma_1(\pi)$  are decreasing in $\pi.$ 
  Further, $\gamma_1(\pi)$  and $\gamma_0(\pi)$ are concave and convex, respectively. Also,
  $p_{0,0}\leq \gamma_0(\pi) \leq \gamma_1(\pi) \leq p_{1,0}.$
\label{lemma:gammas-prop-ve} 
\end{lemma}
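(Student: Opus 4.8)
The plan is to prove all four claims by direct differentiation, mirroring the proof of Lemma~\ref{lemma:gammas-prop} but tracking how the sign of $p_{0,0}-p_{1,0}$ flips. First I would note that both $\gamma_1$ and $\gamma_0$ are Möbius (ratio-of-affine) functions of $\pi$: collecting the Bayes-rule numerators and denominators as affine functions of $\pi$, each has the form $\tfrac{a+b\pi}{c+d\pi}$, with denominators equal to $\rho(\pi)=\pi\rho_0+(1-\pi)\rho_1$ and $1-\rho(\pi)$ respectively, both strictly positive on $[0,1]$ (using that neither success probability is degenerate). For such a map, $\gamma'=\tfrac{bc-ad}{(c+d\pi)^2}$ and $\gamma''=\tfrac{-2d(bc-ad)}{(c+d\pi)^3}$, so both signs are constant and determined by $\operatorname{sign}(bc-ad)$ and $\operatorname{sign}(d)$.

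Second, a short algebraic simplification gives $bc-ad=\rho_0\rho_1\,(p_{0,0}-p_{1,0})$ for $\gamma_1$ and $bc-ad=(1-\rho_0)(1-\rho_1)\,(p_{0,0}-p_{1,0})$ for $\gamma_0$. Hence for a negatively correlated arm ($p_{0,0}<p_{1,0}$) both derivatives are negative, i.e.\ $\gamma_0$ and $\gamma_1$ are decreasing in $\pi$. For curvature, the relevant denominator slope is $d=\rho_0-\rho_1<0$ for $\gamma_1$ and $d=\rho_1-\rho_0>0$ for $\gamma_0$ (here I invoke the standing assumption that the reward and success-probability orderings coincide, so $R_0<R_1$ forces $\rho_0<\rho_1$); combining with the signs of $bc-ad$ yields $\gamma_1''\le 0$ and $\gamma_0''\ge 0$, i.e.\ $\gamma_1$ concave and $\gamma_0$ convex --- exactly the behaviour complementary to Lemma~\ref{lemma:gammas-prop}.

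Third, for the sandwich $p_{0,0}\le\gamma_0(\pi)\le\gamma_1(\pi)\le p_{1,0}$: evaluating at the endpoints gives $\gamma_0(0)=\gamma_1(0)=p_{1,0}$ and $\gamma_0(1)=\gamma_1(1)=p_{0,0}$, so monotone decrease confines both maps to $[p_{0,0},p_{1,0}]$. For the middle inequality I would write each map as a convex combination of $p_{0,0}$ and $p_{1,0}$: $\gamma_1(\pi)$ puts weight $\tfrac{(1-\pi)\rho_1}{\rho(\pi)}$ on $p_{1,0}$, while $\gamma_0(\pi)$ puts weight $\tfrac{(1-\pi)(1-\rho_1)}{1-\rho(\pi)}$ on $p_{1,0}$. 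Since $p_{1,0}>p_{0,0}$ in the negatively correlated case, $\gamma_1\ge\gamma_0$ is equivalent to the first weight dominating the second, which after clearing (positive) denominators reduces to $\rho(\pi)\le\rho_1$ --- true because $\rho(\pi)$ is a convex combination of $\rho_0\le\rho_1$ and $\rho_1$. The point $\pi=1$ is checked directly, and $\rho_0=\rho_1$ makes both maps affine and identical.

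I do not anticipate a real obstacle: everything reduces to sign bookkeeping for Möbius functions. The only care needed is in the standing hypotheses that keep the denominators positive and pin down $\rho_0\le\rho_1$ --- these are what make the second-derivative signs and the $\gamma_0\le\gamma_1$ ordering come out right --- and in handling the endpoints $\pi\in\{0,1\}$, where the convex-combination weights are obtained after dividing by $1-\pi$.
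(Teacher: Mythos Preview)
Your proposal is correct and follows essentially the same approach as the paper: the paper's proof is the single sentence ``The proof can be claimed by looking at the signs of the first and second derivatives of these functions,'' and your M\"obius-function bookkeeping is exactly a clean way of executing that derivative-sign analysis. The paper omits any detail on the sandwich $p_{0,0}\le\gamma_0\le\gamma_1\le p_{1,0}$; your convex-combination weight comparison reducing to $\rho(\pi)\le\rho_1$ is a valid and tidy way to fill that gap.
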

The proof can be claimed by looking at the signs of the first and second derivatives of these functions.

The advantage of playing, $(V_S -V_{NS})(\pi)$ decreases with belief, even in case of negatively correlated arms. 
\begin{lemma}
\label{lemma:dpi-decreases-negativecorr}
For fixed subsidy $\eta,$ and $p_{0,0}<p_{1,0},$
the difference function $(V_S -V_{NS})(\pi)$ is
decreasing in $\pi$ under any of the following conditions
\begin{enumerate}
\item For any $K>1,$ when, $0<p_{1,0} - p_{0,0}<\frac{b}{5}$ and $\beta \in (0,1),$
\item For any $K>1,$ when, $\beta \in (0,b/5),$ where, $b = \min\left\lbrace 1, \frac{R_1-R_0}{\rho_1-\rho_0}\right\rbrace.$
\end{enumerate} 
\end{lemma}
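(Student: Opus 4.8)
\textit{Sketch of proof:} The plan is to repeat, with the obvious substitutions, the argument used for the positively correlated case in Lemma~\ref{lemma:dpi-decreasing-positivecorr} (parts~2 and~3), now feeding in the complementary belief-update facts from Lemma~\ref{lemma:gammas-prop-ve} instead of Lemma~\ref{lemma:gammas-prop}. One remark up front explains why the statement has only two conditions and no ``large $K$'' case: for a negatively correlated arm $V$ need not be monotone in $\pi$ (Lemma~\ref{lemma:valfunc-dec-for-positivecorr} is specific to positively correlated arms), so the shortcut ``$\gamma_2(\pi)\approx q$ makes $V_{NS}$ constant and $V_S$ monotone'' is no longer available; the derivative of the difference has to be controlled directly.

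\textbf{Step 1: Lipschitz bounds on the value functions.} First I would establish that $V_S$, $V_{NS}$ and $V$ are Lipschitz on $[0,1]$ with an explicitly computable constant $L$ --- the negatively correlated analogue of the appendix lemma invoked inside Lemma~\ref{lemma:dpi-decreasing-positivecorr}. Starting from the Lipschitz initializers $V_{S,1}=R_S$, $V_{NS,1}=\eta$, $V_1=\max\{V_{S,1},V_{NS,1}\}$, a single value-iteration step sends a $\mathrm{Lip}(\ell)$ function to a $\mathrm{Lip}\!\left(|R_0-R_1|+\beta\kappa\ell\right)$ function, where the multiplier $\kappa$ collects the slope contributions: $(\rho_1-\rho_0)\,|\gamma_1-\gamma_0|$ from differentiating $\rho(\pi)$ (with $|\gamma_1-\gamma_0|\le p_{1,0}-p_{0,0}$ by Lemma~\ref{lemma:gammas-prop-ve}); the quantities $|\rho(\pi)\gamma_1'(\pi)|$ and $|(1-\rho(\pi))\gamma_0'(\pi)|$, which telescope cleanly to bounds $\le p_{1,0}-p_{0,0}$ because $\gamma_0,\gamma_1$ are M\"obius maps; and $|\gamma_2'(\pi)|=|p_{0,0}-p_{1,0}|^K$ from \eqref{eqn:ksteps}. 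Under either hypothesis $\beta\kappa<1$, so the recursion $\ell_{n+1}\le |R_0-R_1|+\beta\kappa\ell_n$ has a finite fixed point $L=|R_0-R_1|/(1-\beta\kappa)$; passing to the limit along the uniformly convergent value-iteration sequence (as in Lemma~\ref{lemma:valfunc-convex}) gives the one-sided-derivative bound $|V'|\le L$.

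\textbf{Step 2: sign of the derivative of the difference.} Differentiating
\begin{eqnarray*}
(V_S-V_{NS})(\pi) &=& R_S(\pi)-\eta \\
&& {}+\beta\Big(\rho(\pi)V(\gamma_1(\pi))+(1-\rho(\pi))V(\gamma_0(\pi))-V(\gamma_2(\pi))\Big)
\end{eqnarray*}
(reading one-sided derivatives at the finitely many kinks of the convex $V$) yields
\begin{eqnarray*}
\tfrac{d}{d\pi}(V_S-V_{NS})(\pi) &=& (R_0-R_1) \\
&& {}+\beta\Big((\rho_0-\rho_1)\big(V(\gamma_1)-V(\gamma_0)\big)+\rho(\pi)V'(\gamma_1)\gamma_1' \\
&& {}\quad+(1-\rho(\pi))V'(\gamma_0)\gamma_0'-V'(\gamma_2)\gamma_2'\Big).
\end{eqnarray*}
The leading term $R_0-R_1$ is strictly negative. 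By Step~1 the bracket is bounded in absolute value by $L$ times a sum of terms each of size $O(p_{1,0}-p_{0,0})$; substituting $L=|R_0-R_1|/(1-\beta\kappa)$ and using $R_1-R_0\ge b(\rho_1-\rho_0)$ from $b=\min\{1,(R_1-R_0)/(\rho_1-\rho_0)\}$, the derivative stays negative as soon as $\beta$ times that bracket is below $R_1-R_0$. Under condition~(1) the smallness of $p_{1,0}-p_{0,0}$ relative to $b$ (with room to spare given the factor $5$) forces this for every $\beta\in(0,1)$; under condition~(2) the prefactor $\beta<b/5$ does so directly. Hence $\tfrac{d}{d\pi}(V_S-V_{NS})<0$, which is the claim.

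\textbf{Where the difficulty lies.} The delicate part is Step~1: producing a \emph{finite, explicit} Lipschitz constant. Since $\gamma_1'(\pi)=\rho_0\rho_1(p_{0,0}-p_{1,0})/\big(\rho_1(1-\pi)+\rho_0\pi\big)^2$ can be sizable near $\pi=1$ when $\rho_0$ is small, one must work with the grouped quantities $\rho(\pi)\gamma_1'(\pi)$ and $(1-\rho(\pi))\gamma_0'(\pi)$ (which are uniformly bounded by $p_{1,0}-p_{0,0}$) rather than with $\gamma_i'$ alone, and then verify that the fixed-point value of $L$ stays finite under conditions~(1)--(2); this bookkeeping --- together with the per-term estimates in Step~2 --- is exactly where the specific constant $5$ and the cap $b\le 1$ are needed. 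Once $L$ is pinned down, Step~2 is the same term-by-term estimate as in the positively correlated proof, with the roles of $\gamma_0$ and $\gamma_1$, and the monotonicity inputs from Lemma~\ref{lemma:gammas-prop-ve} versus Lemma~\ref{lemma:gammas-prop}, interchanged.
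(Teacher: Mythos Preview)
Your proposal is correct and follows essentially the same approach as the paper. The paper's own treatment of Lemma~\ref{lemma:dpi-decreases-negativecorr} is to invoke ``the same lengthy procedure used for proving Lemma~\ref{lemma:dpi-decreasing-positivecorr}-part $2,3$,'' which in the Appendix means (i) an induction along value iteration to establish a uniform derivative bound $|\partial V/\partial\pi|\le \kappa c(\rho_1-\rho_0)$ using precisely the grouped quantities $\rho(\pi)\gamma_1'(\pi)$ and $(1-\rho(\pi))\gamma_0'(\pi)$ that you single out, and (ii) differentiating $V_S-V_{NS}$ term by term and absorbing the bracket into the negative leading term $R_0-R_1$ under the stated constraints on $p_{1,0}-p_{0,0}$ or $\beta$; the paper also explicitly notes, as you do, that the ``large $K$'' shortcut is unavailable here because $V_S$ need not be monotone for negatively correlated arms.
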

\begin{remark}
\begin{itemize}
\item  Note that for negatively correlated arms, $V_S$ is not necessarily decreasing in $\pi,$ unlike their positively correlated counterparts. Hence, it is difficult to prove that $V_S-V_{NS}$ is decreasing in $\pi,$ even for part 1 of Lemma~\ref{lemma:dpi-decreases-negativecorr} (the case of large $K,$ $\gamma_2(\pi)\approx q.$)  
\item So, the same lengthy procedure used for proving Lemma~\ref{lemma:dpi-decreasing-positivecorr}-part $2,3$ is needed to prove all the parts of Lemma~\ref{lemma:dpi-decreases-negativecorr}.
\end{itemize}
\end{remark}
\subsection{Threshold policy and Indexability}
We now define a threshold type policy and we will show
that the optimal policy is threshold type  for single armed bandit.
\begin{definition}
A policy is called as a threshold type for single armed bandit if
there exists $\pi_T \in [0,1]$ such that an optimal action is to play
the arm if $\pi \leq \pi_T$ and to not play the arm if $\pi \geq
\pi_T.$
\end{definition}
\begin{theorem}
For fixed subsidy $\eta,$ $\beta \in (0,1),$ the optimal policy for
single-armed bandit is of a threshold type for each of the following
conditions.
\begin{enumerate}
 \item If $K$ is large i.e., $\gamma_2(\pi) \approx q.$
 \item For any $K\geq 1,$ if $0<p_{0,0} - p_{1,0}<b/5.$
 \item For any $K\geq 1,$ if $0<p_{1,0} - p_{0,0}<b/5.$ 
 \item For any $\beta \in (0,b/5),$ where, $b = \min\left\lbrace 1, \frac{R_1-R_0}{\rho_1-\rho_0}\right\rbrace.$
\end{enumerate}
\label{thm:threshold-policy-1}
\end{theorem}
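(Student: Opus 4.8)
The plan is to reduce the statement to the monotonicity results already proved for the advantage function. From the Bellman equations~\eqref{eqn:dynprog-sab}, at belief $\pi$ an optimal decision is to play the arm precisely when $V_S(\pi)\ge V_{NS}(\pi)$; so introduce the advantage function $g(\pi):=V_S(\pi)-V_{NS}(\pi)$ and let the \emph{play region} be $\mathcal{P}:=\{\pi\in[0,1]:g(\pi)\ge 0\}$. By the Definition preceding the theorem, a policy is threshold type exactly when $\mathcal{P}$ coincides with a lower interval $[0,\pi_T]$ (up to how the single boundary point is resolved). If $g$ is non-increasing in $\pi$, then $\mathcal{P}$ is automatically a down-set: $g(\pi)\ge 0$ and $\pi''<\pi$ force $g(\pi'')\ge g(\pi)\ge 0$. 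Hence it suffices, in each of the four listed regimes, to show $g$ is non-increasing on $[0,1]$.

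I would then split into the positively and negatively correlated cases and quote the structural lemmas. For $p_{0,0}>p_{1,0}$, Lemma~\ref{lemma:dpi-decreasing-positivecorr} states that $g$ is decreasing in $\pi$ when (i) $K$ is large so that $\gamma_2(\pi)\approx q$, (ii) $0<p_{0,0}-p_{1,0}<b/5$ for any $\beta\in(0,1)$, and (iii) $\beta\in(0,b/5)$; these are conditions~1, 2 and~4 of the theorem restricted to positively correlated arms. For $p_{0,0}<p_{1,0}$, Lemma~\ref{lemma:dpi-decreases-negativecorr} gives the same conclusion under $0<p_{1,0}-p_{0,0}<b/5$ and under $\beta\in(0,b/5)$ (conditions~3 and~4), together with the large-$K$ regime of condition~1. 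The borderline case $p_{0,0}=p_{1,0}$ is immediate and handled separately: then $\gamma_0,\gamma_1,\gamma_2$ all reduce to the constant $p_{1,0}$, so $V_{NS}$ is constant in $\pi$ and $V_S(\pi)=R_S(\pi)+\beta V(p_{1,0})$ is affine in $\pi$; thus $g(\pi)=R_S(\pi)-\eta$ is affine, hence monotone. So in every case covered by the theorem, $g$ is non-increasing in $\pi$.

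It remains to produce the threshold from monotonicity of $g$. By Lemma~\ref{lemma:valfunc-convex} the functions $V_S,V_{NS},V$ are convex in $\pi$, and as uniform limits of the continuous value-iteration iterates they are continuous on $[0,1]$, so $g$ is continuous. Put $\pi_T:=\sup\{\pi\in[0,1]:g(\pi)\ge 0\}$, with the convention $\sup\emptyset=0$. If $\pi<\pi_T$, pick $\pi'\in(\pi,\pi_T]$ with $g(\pi')\ge 0$; then $g(\pi)\ge g(\pi')\ge 0$ and playing is optimal, and continuity extends this to $\pi=\pi_T$ when $\pi_T>0$. If $\pi>\pi_T$, then $g(\pi)<0$ by definition of the supremum, so not playing is optimal. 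This is precisely the threshold structure of the Definition, the degenerate possibilities $g\ge 0$ on all of $[0,1]$ (always play) and $g<0$ on all of $[0,1]$ (never play) corresponding to $\pi_T=1$ and $\pi_T=0$ respectively, which are still admissible since the Definition only requires $\pi_T\in[0,1]$.

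The genuine difficulty is not in this theorem but inside Lemmas~\ref{lemma:dpi-decreasing-positivecorr} and~\ref{lemma:dpi-decreases-negativecorr}: the explicit Lipschitz bounds on the $\pi$-derivatives of $V_S$, $V_{NS}$ and $V$ needed to sign $\tfrac{d}{d\pi}(V_S-V_{NS})$ under the stated restrictions on $p_{0,0}-p_{1,0}$ or on $\beta$ (with one-sided derivatives at the non-differentiable points). Given those lemmas, the present statement is essentially a packaging step; the only mild care required is the continuity of the value functions at the endpoints $\pi\in\{0,1\}$, which follows from the uniform convergence of value iteration, and the boundary bookkeeping for the all-play / all-not-play cases and for $p_{0,0}=p_{1,0}$, which I would record as conventions rather than as separate arguments.
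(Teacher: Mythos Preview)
Your proposal is correct and takes essentially the same approach as the paper: reduce the threshold claim to the monotonicity of $g(\pi)=V_S(\pi)-V_{NS}(\pi)$ furnished by Lemmas~\ref{lemma:dpi-decreasing-positivecorr} and~\ref{lemma:dpi-decreases-negativecorr}, then read off the threshold. Your version is in fact more careful than the paper's (explicit continuity, the supremum construction of $\pi_T$, the degenerate all-play/never-play cases, and the $p_{0,0}=p_{1,0}$ boundary).

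One small misattribution: you credit Lemma~\ref{lemma:dpi-decreases-negativecorr} with the large-$K$ regime for negatively correlated arms, but that lemma lists only the $0<p_{1,0}-p_{0,0}<b/5$ and $\beta\in(0,b/5)$ cases; the paper's own Remark following Lemma~\ref{lemma:dpi-decreases-negativecorr} explicitly flags that the large-$K$ case is not established for negative correlation. So condition~1 of the theorem is supported by the lemmas only for $p_{0,0}>p_{1,0}$, and your sentence ``together with the large-$K$ regime of condition~1'' should be dropped. This is a gap the paper shares, not one you introduced.
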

\begin{IEEEproof}
From the preceding Lemma~\ref{lemma:dpi-decreasing-positivecorr} and Lemma~\ref{lemma:dpi-decreases-negativecorr}, we know that
$(V_{s}(\pi) - V_{NS}(\pi))$ is a decreasing in $\pi.$ Further,
$V_S(\pi)$ and $V_{NS}(\pi)$ are convex in $\pi.$ This implies that
there exists a either $\pi_T \in [0,1]$ such that $V_S(\pi_T) =
V_{NS}(\pi_T)$ or $V_S(\pi) > V_{NS}(\pi)$ for all $\pi,$ or
$V_{S}(\pi) < V_{NS}(\pi)$ for all $\pi.$ This leads to desired
result.
\end{IEEEproof}

We expect that the optimal policy is of threshold type even when the conditions in Theorem~\ref{thm:threshold-policy-1} are not valid. However, it is difficult to prove this in general. We illustrate this threshold structure for general conditions $\rho_0<\rho_1,R_0<R_1$ and $\beta\in(0,1),$ using a numerical example in Section~\ref{sec:sim-threshold}.

We here define the indexability and will show that a
single-armed bandit is indexable.  Using exact threshold-type policy
result, we define the following.
\begin{eqnarray*}
  \mathcal{P}_{\beta}(\eta) := \left\{\pi \in [0,1]: 
  V_{S}(\pi,\eta) \leq V_{NS}(\pi,\eta)  \right\}.
\end{eqnarray*}
It is a set of belief state $\pi$ for which the optimal action is to not
to play the arm,  i.e., $A(s) = 0.$ From~\cite{Whittle88}, we  state the
definition of indexability.

\begin{definition}
A single-armed restless bandit is indexable if
$\mathcal{P}_{\beta}(\eta)$ is monotonically increases from
$\emptyset$ to entire state space $[0,1]$ as $\eta$ increases from
$-\infty$ to $\infty,$ i.e., $\mathcal{P}_{\beta}(\eta_1) \setminus
\mathcal{P}_{\beta}(\eta_2) = \emptyset$ whenever $\eta_1 \leq
\eta_2.$
\label{defn:indexable}
\end{definition}
\begin{figure}[h]
\centering
\begin{center}
\begin{tikzpicture}[scale = 1.0]
\draw[|-|,red, thick] (0,0)  to (3.8,0);
\draw[-|,blue, thick] (3.8,0)  to (7,0);
\draw[->,black] (7,0)  to (8,0);
\node (a) at (0,-0.3) {0};   
\node (d) at (7.0,-0.3) {$1$};
\node (e) at (8,-0.3) {$\pi$};
\node (b) at (3.8,-0.3) {$\pi_T$};
\node (j) at (2.2,0.3) {`Play'};
\node (h) at (5.4,0.3) {`Not play'};
\end{tikzpicture}
\vspace{-0.4cm} 
\caption{In a threshold type policy, the optimal action shifts from `play' to `not play' at $\pi_T.$ If $\pi_T$ moves left as the subsidy $\eta$ increases, then the bandit can be called indexable. }
 \end{center}
\label{fig:threshold}
\end{figure}
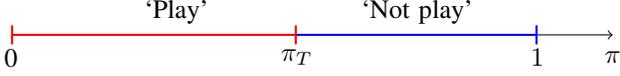
To show indexability, we require to prove that a threshold $\pi_T$ is a
monotonic function of $\eta$. We  state 
the following lemma from~\cite{Meshram18}.
\begin{lemma}
  Let $\pi_T(\eta) = \inf \{ \pi \in [0,1]: V_S(\pi, \eta) =
  V_{NS}(\pi, \eta) \}.$ If
 $   \frac{\partial V_{S}(\pi, \eta)}{\partial \eta}\bigg\vert_{\pi =
      \pi_T(\eta)} < \frac{\partial V_{NS}(\pi,\eta)}{\partial
      \eta}\bigg\vert_{\pi = \pi_T(\eta)}, $
  then $\pi_T(\eta)$ is monotonically decreasing function of $\eta.$
\label{lemma:pi-T-monotone-eta}
\end{lemma}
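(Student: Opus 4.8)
The plan is to combine the threshold structure from Theorem~\ref{thm:threshold-policy-1} with an implicit‑function argument applied at the crossing point $\pi_T(\eta)$. First I would record what the threshold theorem buys us: under the stated conditions $g(\pi,\eta):=V_S(\pi,\eta)-V_{NS}(\pi,\eta)$ is decreasing in $\pi$ and $V_S,V_{NS}$ are convex in $\pi$, so for each $\eta$ exactly one of three situations occurs — $g>0$ on all of $[0,1]$ (play region is the whole space, $\pi_T(\eta)=1$ by the $\inf$ convention), $g<0$ on all of $[0,1]$ ($\pi_T(\eta)=0$), or there is a crossing at which $g=0$ with $g>0$ to its left and $g<0$ to its right, the $\inf$ selecting the left endpoint $\pi_T(\eta)$ of this (possibly degenerate) zero set. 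In every case $\mathcal{P}_\beta(\eta)=\{\pi:\pi\ge\pi_T(\eta)\}$ up to the endpoint, so showing $\pi_T(\cdot)$ is non‑increasing is exactly indexability.

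For the core step, fix $\eta$ with $\pi_T(\eta)\in(0,1)$ and suppose first that $V_S,V_{NS}$ are differentiable near $(\pi_T(\eta),\eta)$. Since $g$ is strictly decreasing in $\pi$ there, $\partial g/\partial\pi<0$, so the implicit function theorem gives that $\pi_T(\cdot)$ is locally differentiable with
\[
\pi_T'(\eta)=-\left.\frac{\partial g/\partial\eta}{\partial g/\partial\pi}\right|_{(\pi_T(\eta),\eta)}
=\left.\frac{\partial_\eta V_{NS}-\partial_\eta V_S}{\partial_\pi V_S-\partial_\pi V_{NS}}\right|_{(\pi_T(\eta),\eta)}.
\]
The hypothesis makes the numerator strictly positive, while Lemmas~\ref{lemma:dpi-decreasing-positivecorr} and~\ref{lemma:dpi-decreases-negativecorr} make the denominator strictly negative, so $\pi_T'(\eta)<0$. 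As this holds at every $\eta$ along the threshold curve, $\pi_T$ is decreasing. The degenerate cases $\pi_T(\eta)\in\{0,1\}$ are handled separately but easily: if $g<0$ on all of $[0,1]$ at some $\eta$, then at the left endpoint the hypothesis (numerator sign) keeps $V_{NS}$ growing at least as fast as $V_S$ as $\eta$ increases, so $g$ stays negative and $\pi_T$ stays at $0$; the symmetric argument handles $\pi_T=1$ as $\eta$ decreases. One can also phrase this contrapositively: if $\pi_T(\eta_1)<\pi_T(\eta_2)$ for some $\eta_1<\eta_2$, then $\pi_T(\eta_1)\notin\mathcal{P}_\beta(\eta_2)$, so $g(\pi_T(\eta_1),\eta_2)>0=g(\pi_T(\eta_1),\eta_1)$, which forces $\eta\mapsto g(\pi_T(\eta),\eta)$ to fail to be non‑increasing somewhere — contradicting that its ``along‑the‑curve'' derivative $\partial_\eta V_S-\partial_\eta V_{NS}$ is negative by hypothesis.

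The main obstacle is regularity. By Lemma~\ref{lemma:valfunc-convex} the value functions are only convex in $(\pi,\eta)$, hence Lipschitz and differentiable almost everywhere, but not necessarily $C^1$; moreover $\pi_T(\eta)$ need not be a clean simple zero of $g$ when the crossing is flat. I would deal with this by working throughout with one‑sided (right) derivatives, which exist for convex functions, replacing the implicit function theorem by the direct monotone‑comparison argument on $\eta\mapsto g(\pi_T(\eta_1),\eta)$ sketched above, and using the elementary fact that a function whose right derivative is negative wherever the threshold condition applies, together with continuity of $\pi_T(\cdot)$ in $\eta$ (which itself follows from monotonicity of $g$ in $\pi$ and continuity of $g$ in $\eta$), is non‑increasing. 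If one wants additional control over the numerator, the identities $\partial_\eta V_{NS}=1+\beta\,\partial_\eta V(\gamma_2(\pi))$ and $\partial_\eta V_S=\beta\bigl(\rho(\pi)\partial_\eta V(\gamma_1(\pi))+(1-\rho(\pi))\partial_\eta V(\gamma_0(\pi))\bigr)$ with $0\le\partial_\eta V\le 1/(1-\beta)$ are available, but for the present lemma the numerator sign is assumed, so the only genuine work is the bookkeeping at non‑smooth points and the endpoints $\pi_T(\eta)\in\{0,1\}$.
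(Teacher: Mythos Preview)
Your proposal is correct. The paper, however, takes a much shorter route: it argues directly by contradiction, without the implicit function theorem. Assuming $\pi_T(\eta)<\pi_T(\eta')$ for some $\eta<\eta'$, the threshold structure forces $V_S(\pi_T(\eta),\eta')-V_{NS}(\pi_T(\eta),\eta')\ge 0$ (since $\pi_T(\eta)$ lies strictly to the left of the new threshold $\pi_T(\eta')$), while at $\eta$ this difference equals zero by definition; letting $\eta'\downarrow\eta$ yields $\partial_\eta(V_S-V_{NS})\ge 0$ at $(\pi_T(\eta),\eta)$, contradicting the hypothesis. This is essentially your ``contrapositive'' alternative, with one slip: you should hold $\pi_T(\eta_1)$ fixed and consider $\eta\mapsto g(\pi_T(\eta_1),\eta)$, not $\eta\mapsto g(\pi_T(\eta),\eta)$, since the latter is identically zero.

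Your primary implicit-function route is cleaner in spirit but imports an extra input, namely $\partial_\pi g<0$ from Lemmas~\ref{lemma:dpi-decreasing-positivecorr} and~\ref{lemma:dpi-decreases-negativecorr}, which the lemma as stated does not assume; the paper's contradiction argument avoids this, needing only the threshold picture and the $\eta$-derivative hypothesis. On the other hand, your discussion of non-smooth points via one-sided derivatives and the endpoint cases $\pi_T(\eta)\in\{0,1\}$ is more careful than the paper's brief sketch, which simply invokes right partial derivatives without elaboration.
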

\textit{Sketch of proof:}{Assume that  $ \pi_T(\eta) < \pi_T(\eta^{\prime})$ for $\eta < \eta^{\prime}.$ For given $\eta$ and $\pi_T(\eta),$ we have $V_S(\pi_T(\eta)) = V_{NS}(\pi_T(\eta)).$   Using this we obtain $V_S(\pi_T(\eta^{\prime}))-V_{NS}(\pi_T(\eta^{\prime})) \geq 0 $ at  $\eta^{\prime} = \eta +\epsilon$ for some $\epsilon \in (0,c),$ $c<1.$ This implies $   \frac{\partial V_{S}(\pi, \eta)}{\partial \eta}\bigg\vert_{\pi =
		 	\pi_T(\eta)} > \frac{\partial V_{NS}(\pi,\eta)}{\partial
		 	\eta}\bigg\vert_{\pi = \pi_T(\eta)}. $ This contradicts our if statement, thus we have $\pi_T(\eta)$ is monotonically decreasing function of $\eta.$}

Note that the value function may not be differentiable as function of
$\eta;$ the right partial derivative is used in this case. It
exists due to convexity of value function in $\eta$ and rewards are
bounded.

We now use Definition~\ref{defn:indexable} and 
Lemma~\ref{lemma:pi-T-monotone-eta} to show that a single-armed
restless bandit in our setting is indexable under the conditions in Theorem~\ref{thm:threshold-policy-1}.
\begin{theorem}
If $\rho_0 < \rho_1,$ $R_0 < R_1,$ and $\beta \in (0,1/3),$ then, 
a single-armed restless bandit
 is indexable .
\label{thm:indexability-general}
\end{theorem}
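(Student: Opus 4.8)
The plan is to reduce indexability to the monotonicity criterion of Lemma~\ref{lemma:pi-T-monotone-eta} together with the threshold structure already established. Under the hypotheses the optimal single-armed policy is of threshold type (Theorem~\ref{thm:threshold-policy-1}), and $(V_S-V_{NS})(\pi,\eta)$ is decreasing in $\pi$ with $V_S,V_{NS}$ convex in $\pi$ (Lemmas~\ref{lemma:dpi-decreasing-positivecorr} and~\ref{lemma:dpi-decreases-negativecorr}), so for each $\eta$ the set $\mathcal{P}_\beta(\eta)=\{\pi:V_S(\pi,\eta)\le V_{NS}(\pi,\eta)\}$ is exactly $[\pi_T(\eta),1]$ (possibly empty or all of $[0,1]$), where $\pi_T(\eta)=\inf\{\pi:V_S(\pi,\eta)=V_{NS}(\pi,\eta)\}$. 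By Definition~\ref{defn:indexable} it then suffices to show that $\pi_T(\eta)$ is nonincreasing in $\eta$ and that $\mathcal{P}_\beta(\eta)$ sweeps out $[0,1]$ as $\eta$ ranges over $(-\infty,\infty)$; the first statement is handled by Lemma~\ref{lemma:pi-T-monotone-eta}, so the crux is the strict derivative inequality $\partial_\eta V_S(\pi,\eta)\big|_{\pi=\pi_T(\eta)}<\partial_\eta V_{NS}(\pi,\eta)\big|_{\pi=\pi_T(\eta)}$, using right derivatives where $V$ fails to be differentiable in $\eta$ (these exist since $V$ is convex in $\eta$ by Lemma~\ref{lemma:valfunc-convex} and the rewards are bounded).

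To get that inequality I would first differentiate the Bellman recursions in~\eqref{eqn:dynprog-sab} with respect to $\eta$, obtaining $\partial_\eta V_{NS}(\pi)=1+\beta\,\partial_\eta V(\gamma_2(\pi))$ and $\partial_\eta V_S(\pi)=\beta\big(\rho(\pi)\,\partial_\eta V(\gamma_1(\pi))+(1-\rho(\pi))\,\partial_\eta V(\gamma_0(\pi))\big)$. Next I would prove the uniform bounds $0\le\partial_\eta V(\pi)\le\tfrac{1}{1-\beta}$ for all $\pi$: non-negativity is immediate from monotonicity of $V$ in $\eta$, and the upper bound follows either by a value-iteration induction (with $D=\sup_\pi\partial_\eta V(\pi)$ the recursions force $D\le 1+\beta D$) or, more directly, from the observation that changing $\eta$ by $\delta$ alters the discounted value of any policy by at most $\delta/(1-\beta)$. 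Substituting gives $\partial_\eta V_S(\pi_T)\le\beta/(1-\beta)$ and $\partial_\eta V_{NS}(\pi_T)\ge 1$; to sharpen this to the stated $\beta<1/3$, I would feed the recursion through one further step at the post-action beliefs $\gamma_0(\pi_T),\gamma_1(\pi_T)$, which for a positively correlated arm lie in $[p_{1,0},p_{0,0}]$ (Lemma~\ref{lemma:gammas-prop}) and symmetrically for a negatively correlated arm (Lemma~\ref{lemma:gammas-prop-ve}), exploiting the common ordering $\rho_0<\rho_1$, $R_0<R_1$ to get a tighter estimate of $\partial_\eta V$ at those beliefs. The restriction $\beta<1/3$ is exactly what makes the resulting comparison $\partial_\eta V_S(\pi_T)<\partial_\eta V_{NS}(\pi_T)$ strict and uniform in $\eta$.

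With the strict derivative inequality in hand, Lemma~\ref{lemma:pi-T-monotone-eta} yields that $\pi_T(\eta)$ is monotonically decreasing, hence $\mathcal{P}_\beta(\eta)=[\pi_T(\eta),1]$ is monotonically increasing in $\eta$, i.e.\ $\mathcal{P}_\beta(\eta_1)\setminus\mathcal{P}_\beta(\eta_2)=\emptyset$ for $\eta_1\le\eta_2$. Finally $\mathcal{P}_\beta(\eta)=\emptyset$ for $\eta$ sufficiently negative and $\mathcal{P}_\beta(\eta)=[0,1]$ for $\eta$ sufficiently large, because $\partial_\eta V_{NS}\ge 1$ forces $(V_{NS}-V_S)(\pi,\eta)\to+\infty$ (resp.\ $\to-\infty$) uniformly in $\pi$ as $\eta\to+\infty$ (resp.\ $\eta\to-\infty$). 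Taken together these are precisely the requirements of Definition~\ref{defn:indexable}, so the single-armed lazy restless bandit is indexable.

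The main obstacle, I expect, is the uniform and \emph{strict} control of $\partial_\eta(V_S-V_{NS})$ at the \emph{moving} point $\pi_T(\eta)$: since $V$ is only piecewise smooth in both $\pi$ and $\eta$, one must reason consistently with right derivatives, the post-action beliefs $\gamma_0(\pi_T),\gamma_1(\pi_T),\gamma_2(\pi_T)$ may land on either side of the current threshold, and one has to track how the subsidy propagates through the belief maps $\gamma_0,\gamma_1,\gamma_2$ over successive sessions. This is the step that consumes the hypothesis $\beta<1/3$, and it is likely to reuse the Lipschitz-type estimates developed for Lemma~\ref{lemma:dpi-decreasing-positivecorr}.
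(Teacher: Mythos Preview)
Your approach is essentially the paper's: differentiate the Bellman recursions in $\eta$, establish the uniform bound $0\le\partial_\eta V\le 1/(1-\beta)$ by induction, and feed this into Lemma~\ref{lemma:pi-T-monotone-eta}. The paper's appendix does exactly this, writing $\partial_\eta V_{NS}-\partial_\eta V_S=1+\beta\,\partial_\eta V(\gamma_2(\pi))-\beta\big[\rho(\pi)\partial_\eta V(\gamma_1(\pi))+(1-\rho(\pi))\partial_\eta V(\gamma_0(\pi))\big]$ and bounding it from below.

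The one confusion is your ``sharpening'' step. Your direct estimates $\partial_\eta V_S(\pi)\le\beta/(1-\beta)$ and $\partial_\eta V_{NS}(\pi)\ge 1$ already give the strict inequality whenever $\beta/(1-\beta)<1$, i.e.\ $\beta<1/2$. This is \emph{stronger} than the stated hypothesis $\beta<1/3$, so no further unrolling through $\gamma_0,\gamma_1$ is required. The paper arrives at the weaker threshold $1/3$ only because it crudely bounds the quantity $\big[\rho(\pi)\partial_\eta V(\gamma_1(\pi))+(1-\rho(\pi))\partial_\eta V(\gamma_0(\pi))\big]-\partial_\eta V(\gamma_2(\pi))$ by $2/(1-\beta)$ (controlling each term in absolute value) and then imposes $2/(1-\beta)<1/\beta$; you implicitly use $\partial_\eta V\ge 0$ to drop the second term, which is the sharper move. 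So the extra recursion step invoking Lemmas~\ref{lemma:gammas-prop} and~\ref{lemma:gammas-prop-ve} is not what the paper does and is not needed.

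One caveat worth flagging: you appeal to Theorem~\ref{thm:threshold-policy-1} to conclude that $\mathcal{P}_\beta(\eta)$ is an interval, but its hypotheses (e.g.\ $\beta<b/5\le 1/5$) are not implied by $\beta<1/3$. The paper's own proof is silent on this and simply asserts indexability from the monotonicity of $\pi_T(\eta)$, so on this point you and the paper are at the same level of rigor.
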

\textit{Sketch of proof:} First, we bound the derivatives of $ V_S,$ $ V_{NS} $ and $ V $ with respect to $ \eta.$ This bound is given as $ \frac{1}{1-\beta} .$ Then, we show that for $\beta \in (0,1/3),$ the conditions for monotonicity of $ \pi_T(\eta) $ in $ \eta ,$ given in Lemma~\ref{lemma:pi-T-monotone-eta} are satisfied. 

The detailed proof can be found in the Section E of the Appendix.
We believe that the indexability result is true more generally, where,
we do not require any assumption on $\beta.$ This restriction on
$\beta$ is required here because of difficulty in obtaining
closed-form value function expression. But, for specific conditions
such as $\rho_0 = 0, \rho_1 =1,$ and $K >1,$ we can derive
closed-form expressions of value functions and we can obtain
conditions for indexability without any assumption on $\beta.$ 
We illustrate indexability of arms under more general conditions using a numerical example in Section~\ref{sec:sim-threshold}.
\section{Whittle index calculations for special cases}
\label{sec:Whittle-index-compute}
We first define the Whittle-index and later we provide index formula. In the following, we use $V_S(\pi,\eta),V_{NS}(\pi,\eta)$ instead of $V_S(\pi),V_{NS}(\pi),$ to emphasize their dependence on subsidy $\eta.$
{{
\begin{definition}[\cite{Whittle88}]
  If an indexable arm is in state $\pi,$ its Whittle-index 
  $W(\pi)$ is
  \begin{eqnarray}
    W(\pi)  &=& \inf \{\eta \in \mathbb{R}: V_{S}(\pi,\eta) =
    V_{NS}(\pi,\eta) \}.
    \label{eqn:Whittle-index}
  \end{eqnarray}
\label{def:whittleind}
\end{definition} }} 
\vspace*{-0.5cm}
The basic idea used for this computation is as follows. If the optimal policy has threshold structure, then, the Whittle-index $W(\pi)$ for belief $\pi$ is the subsidy required such that $\pi$ is the threshold, \textit{i.e.} $V_S(\pi,W(\pi))=V_{NS}(\pi,W(\pi)).$ So, we are required to obtain the action value function expressions, equate them and solve for $W(\pi).$ These expressions are obtained using the following idea.
We know that for any $\pi'>\pi,$ $V(\pi') = V_{NS}(\pi')$ and for every $\pi'<\pi,$ $V(\pi')=V_S(\pi').$ Now, we use the recursive definition of  action value functions given by the dynamic program in \eqref{eqn:dynprog-sab}. Then, we use the properties of the belief update functions $\gamma$'s to evaluate the expressions.
 
We provide expressions of Whittle-index for positively correlated
arms, \textit{i.e.}, $p_{0,0} > p_{1,0},$. We do this for two
special cases.
\begin{enumerate}
\item Arbitrary $K,$ $\rho_0=0$ and $\rho_1=1.$ 
\item $K$ is large, \textit{i.e.}, $\gamma_2(\pi) \approx q,$ $R_0 = \rho_0 = 0,$ and $0<R_1 = \rho_1 <1.$ 
\end{enumerate}
Recall that, $K$ is known to the decision maker and it does not vary.
Also, if $\rho_0=0$ and $\rho_1=1$ for an arm, an ACK would mean that the session started in state $1$ and a NACK would mean that it started in state $0.$  On the other hand, for the case $\rho_0 = 0$ and $\rho_1 <1 $, we can conclude from an ACK that the arm was in state $1$ at the session beginning; no such a conclusion cannot be made from a NACK.
For general cases we provide an algorithm to compute the index. It is
motivated from stochastic approximation algorithms.
We consider four intervals, $A_1,$ $A_2,$ $A_3,$ and $A_4,$ as shown in
Fig.~\ref{fig:areas-for-WI}; we compute the index for each interval
separately. These intervals were derived on the basis of the direction in which a belief $\pi$ is pulled by the different belief update functions $\gamma_0,$ $\gamma_1$ and $\gamma_2.$ The derivations for the following expressions can be found in Section F of the Appendix. 
\begin{figure}[h]
\centering
\begin{center}
\begin{tikzpicture}[scale = 1.0]
\draw[->, thick] (0,0)  to (8,0);
\draw[red, thick] (0.3,-0.1)  to (0.3,0.1);
\draw[red, thick] (7.0,-0.1)  to (7.0,0.1);
\draw[blue, thick] (1.8,-0.1)  to (1.8,0.1);
\draw[blue, thick] (3.2,-0.1)  to (3.2,0.1);
\draw[blue, thick] (5.8,-0.1)  to (5.8,0.1);
\node (a) at (0.3,-0.3) {0};   
\node (d) at (7.0,-0.3) {$1$};
\node (e) at (8,-0.3) {$\pi$};
\node (g) at (1.8,-0.3) {$p_{1,0}$};
\node (h) at (3.2,-0.3) {$q$};
\node (j) at (5.8,-0.3) {$p_{0,0}$};
\node (k) at (1,0.6) {$A_1$};
\node (l) at (2.5,0.6) {$A_2$};
\node (m) at (4.6,0.6) {$A_3$};
\node (o) at (6.3,0.6) {$A_4$};
\draw[<-, red, thick] (0.3,0.3)  to (0.7,0.3);
\draw[->, red, thick] (0.7,0.3)  to (1.8,0.3);
\draw[<-, blue, thick] (1.8,0.3)  to (2.5,0.3);
\draw[->, blue, thick] (2.5,0.3)  to (3.2,0.3);
\draw[<-, magenta, thick] (3.2,0.3)  to (3.8,0.3);
\draw[->, magenta, thick] (3.8,0.3)  to (5.8,0.3);
\draw[<-, red, thick] (5.8,0.3)  to (6.2,0.3);
\draw[->, red, thick] (6.2,0.3)  to (7.0,0.3);
\end{tikzpicture}
\caption{The different cases to calculate $W(\pi).$ }
 \end{center}
\label{fig:areas-for-WI}
\end{figure}
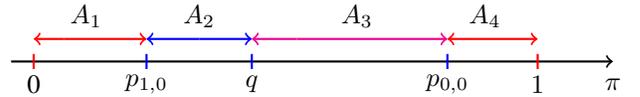
 \vspace*{-0.75cm} 
\subsection{Whittle-index for Case $1)$: arbitrary $K,$ $p_{0,0} > p_{1,0},$ $\rho_0=0, \rho_1=1.$} 
\begin{enumerate}
 \item For $\pi \in A_1,$  
 {\small{  
 \begin{equation*}
  W(\pi) = R_S(\pi) = R_1 + \pi(R_0-R_1).
 \end{equation*} }}
 \item For $\pi \in A_2,$
 {\small{
 \begin{equation*}
  W(\pi) = \frac{R_S(\pi)(1-\beta)\left[1-\beta(\pi-p_{1,0}) \right]}{1 - \beta[1+(1-\beta)(\pi-p_{1,0})]}.
 \end{equation*} }}
 \item For $\pi \in A_3$
 {\small{
 \begin{equation*}
  W(\pi) = \frac{D(\pi) -\beta D(\gamma_2(\pi))}{1+ \beta B(\gamma_2(\pi) ) -B(\pi)},
 \end{equation*} }} where, 
 {\small{
 $B(\pi) = \beta c [\pi(1-b) +b],$ and $ D(\pi) = R_S(\pi)+ \beta[(1-\pi)(a+bd) + \pi d]$
 \[a = \frac{R_S(p_{1,0})}{1-\beta(1-p_{1,0})}, b = \frac{\beta p_{1,0}}{1-\beta(1-p_{1,0})},\] \[a_1 = \frac{\beta^{t}R_S(\gamma_2^t(p_{0,0}))}{1-\beta^{t+1}\gamma_2^t(p_{0,0})} , b_1 = \frac{\beta^{t+1}(1-\gamma_2^t(p_{0,0}))}{1 - \beta^{t+1}\gamma_2^t(p_{0,0})} \]
 \[c = \frac{f}{1-bb_1}, d = \frac{a_1 +b_1a}{1-bb_1}, f = \frac{1-\beta^t}{(1-\beta)(1-\beta^{t+1}\gamma_2^t(p_{0,0}))}\] }}
 \item For $\pi \in A_4$,
 {\small{
 \begin{eqnarray*}
  W(\pi) = m\pi + c_1 -\beta(m\gamma_2(\pi) + c_1); \\ m = \frac{R_0-R_1}{1-\beta(p_{0,0}-p_{1,0})}, c_1 = \frac{R_1 + m\beta p_{1,0}}{1-\beta}. 
 \end{eqnarray*} }}
\end{enumerate}

\subsection{Whittle-index for Case $2)$:  Large $K,$ $\gamma_2(\pi) \approx q.$}
 
Here, we assume that $p_{0,0} > p_{1,0},$ and $K$ is large, that is $\gamma_2(\pi) \approx q,$ 
$R_0 = \rho_0 = 0,$ and $0<R_1 = \rho_1 <1.$ 

The index formula for each interval is given as follow. 
\begin{enumerate}
\item For $\pi \in A_1,$ 
the Whittle-index  $ W(\pi) = \rho(\pi).$
\item  For $\pi \in A_2,$  we consider following cases.
{
\begin{enumerate}
\item if $\gamma_0(p_{1,0}) \geq \pi,$ then, the Whittle-index is  {\small{
\begin{eqnarray*}
W(\pi) = \frac{\rho(\pi) }{1-\beta (\rho(p_{1,0})- \rho(\pi))}.
\end{eqnarray*} }}
\item if $\gamma_0(p_{1,0}) < \pi$
  but $\gamma_0^2(p_{1,0}) \geq \pi$ then,  Whittle-index 
$W(\pi) = \frac{\rho(\pi)}{C_1}.$
Here,
{\small{ 
\begin{eqnarray*}
C_1 = 1- \beta (\rho(p_{1,0}) - \rho(\pi)) - \beta^2
(\rho(\gamma_0(p_{1,0})) - \rho(\pi)) \\+  \beta^2
\rho(\gamma_0(p_{1,0})) \rho(p_{1,0}).
\end{eqnarray*}
}}
\end{enumerate}
}
\item For $\pi \in A_3,$ obtaining index is tedious, and this has to
  be computed numerically by using Algorithm~\ref{algo:WI}.
\item  For $\pi \in A_4,$  the  Whittle-index is, 
{\small{
\begin{eqnarray*}
W(\pi) = m \pi (1 - \beta (p_{0,0}- p_{1,0})) + (1-\beta)c - \beta
p_{1,0} m,
\end{eqnarray*}
}} where $
m = \frac{-\rho_1}{ 1 - \beta (p_{0,0}- p_{1,0})},$ and $c = \frac{\rho_1 + \frac{-\beta p_{1,0} \rho_1}{ 1 - \beta (p_{0,0}- p_{1,0})}}{1-\beta }.$
%

\end{enumerate}

We omit the derivation of expressions for Case $2$ due to space constraints.
 They can derived using similar procedure as Case $1.$ 
%
%
\subsection{Algorithm for Whittle-index computation}
We now present an algorithm for computing Whittle-index in a general
case. Here, for a given $\pi \in [0,1],$ assume that it is the
threshold and compute index $W(\pi).$ Start at $t=0$ with an initial
subsidy $\eta_0$ and run the value iteration algorithm to compute
action value functions $V_S(\pi,\eta_0)$ and $V_{NS}(\pi,\eta_0).$ The
subsidy $\eta_0$ incremented or decremented proportionally with the
difference $V_S(\pi,\eta_0)-V_{NS}(\pi,\eta_0)$ and a learning
parameter $\alpha,$ as follows
\[\eta_{t+1} = \eta_t + \alpha(V_S(\pi,\eta_t)-V_{NS}(\pi,\eta_t)).\]
The algorithm terminates when the difference
$|V_S(\pi,\eta_t)-V_{NS}(\pi,\eta_t)|<h,$ where $h$ is the tolerance
limit. See Algorithm~\ref{algo:WI} below for details. Here, we use two
timescales, one for updating the subsidy and the other for updating
value functions. The $\alpha$ parameter is chosen such that, subsidy
$\eta_t$ is updated at a slower timescale compared to the value
iteration algorithm that computes $V_S(\pi,\eta_t)$ and
$V_{NS}(\pi,\eta_t).$
This is a two-timescale stochastic approximation algorithm and is
based on similar schemes studied in \cite{Borkar08,Borkar17b}. In
\cite[Chapter 6]{Borkar08}, the convergence of a two-timescale
stochastic approximation algorithm was discussed.  
{\footnotesize{
\begin{algorithm}
\KwIn{Reward values $R_0,R_1$;  Initial subsidy $\eta_0,$ tolerance $h,$ step size $\alpha.$ }
\KwOut{Whittle's index $W(\pi)$}
	for $\pi \in [0,1]$;\\
		$\eta_t \gets \eta_0$;\\
	\While {$|V_S(\pi)-V_{NS}(\pi)|>h$ } {
 	 $\eta_{t+1} = \eta_t + \alpha(V_S(\pi,\eta_t)-V_{NS}(\pi,\eta_t)); $\\
 	 $t=t+1;$\\
 	 compute $V_S(\pi,\eta_t)$, $V_{NS}(\pi,\eta_t);$
	 }
 \Return{$W(\pi) \gets \eta_t$};
\caption{{\sc } Whittle-index computation for single arm}
\label{algo:WI}
\end{algorithm}
}} \vspace{-0.5cm}

\vspace{5mm}
\section{Relaxations of Restless Multi-armed Bandits}
\label{sec:wcmdp}
A body of literature on Markov decision processes focuses on a linear programming (LP) approach to solving problems involving weakly coupled Markov decision processes (WC-MDP). In this section, we discuss results from the literature about the bounds on the optimal value function for WC-MDP and their implications for RMABs. The first bound comes from the Lagrangian relaxation of the WC-MDP problem, while the other comes from approximate dynamic programming (ADP). 
\vspace{-4mm}
\subsection{LP approach to POMDP}
A POMDP can be seen as an MDP with a uncountably infinite state space. The following is the dynamic program (DP) for a POMDP. A function $V$ that satisfies the following equation $\forall \bm{\pi}\in \mathcal{S}_\Pi$ is the optimal value function of the POMDP. Here, $\bm{\pi}$ is a belief state which is a vector with components that sum to $1,$ and $\mathcal{A}_{\bm{\pi}}$ is its set of actions. Let $\mathcal{S}_\Pi$ be the belief space which is a simplex and $\mathcal{S}_{O}$ is the set of possible observations.
\begin{eqnarray}
 V(\bm{\pi}) = \max\limits_{\bm{a}\in \mathcal{A}_{\bm{\pi}}}\left\lbrace R_{\bm{a}}(\bm{\pi}) + \beta\sum\limits_{\bm{o}\in\mathcal{S}_{O}} V(\Gamma_o(\bm{\pi}))\Pr(\bm{o}|\boldsymbol{\pi,a})\right\rbrace
\end{eqnarray}
An LP formulation of the above DP is as follows. Given $\nu(\bm{\pi})>0,\forall\bm{\pi}\in \mathcal{S}_{\Pi},$ 
\begin{eqnarray}
 &\min\limits_{V(.)} &\int\limits_{\mathcal{S}_{\Pi}}\nu(\bm{\pi})V(\bm{\pi})d\bm{\pi} \nonumber\\
 s.t\texttt{ }
 &V(\bm{\pi})&\geq R_a(\bm{\pi}) + \beta\sum\limits_{o\in\mathcal{S}_{O}} \Pr(\bm{o}|\bm{\pi},\bm{a})V(\Gamma_{\bm{o}}(\bm{\pi})),\nonumber\\ 
 &\forall & a\in\mathcal{A}_{\bm{\pi}},\bm{\pi}\in\mathcal{S}_{\Pi}  .
\end{eqnarray}
Note that this program is infinite dimensional LP, \cite{Lerma96}[Chapter $6$].

\subsection{Weakly coupled POMDPs (WC-POMDPs)}
In this set up there are a set of POMDPs which are coupled together through a set of linear constraints. Clearly, RMABs with partially observable states fit in this scenario; each arm is a POMDP.
	 
Let $\bm{\pi}$ be a belief state, $\bm{o}$ is observation, $\mathcal{S}_{O}$ is the set of observations. In case of WC-POMDPs, a belief state $\bm{\pi}$ is an element of a polymatroid belief space $\mathcal{S}_\Pi.$ This is unlike the case of a POMDP where the belief space is a simplex. The feasible action set for belief state $\bm{\pi}\in \mathcal{S}_{\Pi}$ is 
\[\mathcal{A}_{\bm{\pi}} = \bigg\lbrace \bm{a}\in \{0,1\}^M :  \sum\limits_{m=1}^{M}{a_m} =N\bigg\rbrace.\] 
Note that $\mathcal{A}_{\bm{\pi}}$ here considers the coupling constraint specific to RMABs. In general WC-MDPs allow linear inequality constraints.
\subsection{Bounds for Weakly coupled POMDPs (WC-POMDPs)}
We now discuss bounds on the optimal value function for weakly coupled POMDPs and their implications for RMABs with hidden states. The first bound comes from the Lagrangian relaxation of the weakly coupled MDPs problem, while the other comes from approximate dynamic programming (ADP). These bounds are derived using linear programming (LP) formulations. 	
	The dynamic program for an RMAB with partially observable states that is  formulated as a weakly coupled POMDP problem is given by
	\begin{align}
	V(\bm{\pi})= \max\limits_{\bm{a}\in \mathcal{A}_{\bm{\pi}}} \bigg\lbrace \sum\limits_{m=1}^{M} & R_{a_m}(\pi_m)  +  \nonumber\\ &\beta\sum\limits_{\bm{o}\in\mathcal{S}_{O}} \Pr(\bm{o}|\bm{\pi},\bm{a})V(\Gamma_{\bm{o}}(\bm{\pi}))\bigg\rbrace 
	\label{eqn:weak-couple-dynamic}
	\end{align}
	for $\bm{\pi}\in \mathcal{S}_{\Pi}.$
	This can  be cast as an LP along with the coupling constraint.
	\begin{align}
	&H(\nu) = \min\limits_{V(.)}\int\limits_{\mathcal{S}_{\Pi}}\nu(\bm{\pi})V(\bm{\pi})d\bm{\pi} \nonumber \\
	& \text{s.t.} \texttt{ } V(\bm{\pi})\geqslant \sum\limits_{m=1}^{M}  R_{a_m}(\pi_m)  +  \beta\sum\limits_{\bm{o}\in\mathcal{S}_{O}} \Pr(\bm{o}|\bm{\pi},\bm{a})V(\Gamma_{\bm{o}}(\bm{\pi})) \nonumber\\
	& \forall \bm{\pi}\in \mathcal{S}_{\Pi}. 
	\end{align}
	\begin{remark}
		The infinite dimension linear program formulations for POMDP and RMAB are prohibitively complex to solve computationally. Hence, we can employ grid approximation of the belief space to bring tractability to the problem. The problem is then reduced to an POMDP with finite belief states; results from existing literature can be directly applied. Note that the set of observations remains unaltered by grid approximation.
	\end{remark} 
	In the following, we present some results from \cite{Hawkins03,Adelman08}, on Lagrangian and ADP relaxations bounds for the weakly coupled MDPs. We utilize them to assess the degree of sub-optimality of heuristic index policies used for  RMABs.

	We first describe the Lagrangian bound on the optimal value function of RMAB with hidden states, defined over a grid approximation of the belief state space. To avoid use of additional symbols, we utilize the same notation for the discretized  version of the problem; $\mathcal{S}_\Pi$ for the belief state space, $\bm{\pi}$ for belief, etc.

\subsubsection{The Lagrangian Bound} 
	The Lagrangian relaxation of dynamic program \eqref{eqn:weak-couple-dynamic} decouples the weakly-coupled POMDPs into optimization over single POMDPs, this is given as by 
	\begin{align}
	\label{finite_lagrange_rmab}
	V^{\lambda}(\bm{\pi}) = \max\limits_{\bm{a}} &\bigg\lbrace \sum\limits_{m=1}^{M}  R_{a_m}(\pi_m)  + {\lambda}\bigg(N- \sum\limits_{m=1}^{M}a_m \bigg)  \nonumber\\ &+ \beta\sum\limits_{\bm{o}\in\mathcal{S}_{O}}V^{\lambda}(\Gamma_{\bm{o}}(\bm{\pi}))\prod\limits_{m=1}^{M}\Pr(o_m|\pi_m,a_m)\bigg\rbrace\\
	& s.t.\texttt{ }a_m\in\{0,1\},m\in\{1,...,M\}\nonumber. 
	\end{align}
	The following lemma is \cite[Proposition 1]{Adelman08}; it states that the value function of the Lagrangian relaxed RMAB can be written as the summation of value functions for individual arms.
	\begin{lemma}
		\label{lemma:lagrange_decouple}
		\begin{align}
		\label{eqn:lagrange_decouple}
		V^{\lambda}(\bm{\pi}) = &\frac{N\lambda}{1-\beta} + \sum\limits_{m=1}^{M}V_m^{\lambda}(\pi_m),\\
		V_m^{\lambda}({\pi_m}) = &\max\limits_{{a}\in \{0,1\}} \bigg\lbrace R_{a_m}(\pi_m)  - {\lambda}a_m  \nonumber\\ &+ \beta\sum\limits_{{o_m}\in\mathcal{S}_{O_m}}V_m^{\lambda}(\Gamma_{{o_m}}({\pi_m}))\Pr(o_m|\pi_m,a_m)\bigg\rbrace. 
		\end{align}
	\end{lemma}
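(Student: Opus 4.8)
The plan is to check that the claimed decomposition is a fixed point of the Bellman operator associated with the relaxed dynamic program \eqref{finite_lagrange_rmab}, and then invoke uniqueness of that fixed point. Since rewards are bounded and $\beta\in(0,1)$, the relaxed Bellman operator is a $\beta$-contraction on bounded functions over the (grid-discretized) belief space $\mathcal{S}_{\Pi}$, hence has a unique fixed point, which is $V^{\lambda}$; the same reasoning applied arm by arm shows that each single-arm equation has a unique bounded solution $V_m^{\lambda}$. So it suffices to substitute the ansatz $V^{\lambda}(\bm{\pi})=\frac{N\lambda}{1-\beta}+\sum_{m=1}^M V_m^{\lambda}(\pi_m)$ into the right-hand side of \eqref{finite_lagrange_rmab} and verify that it reproduces itself.

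First I would rewrite the Lagrangian term as $\lambda(N-\sum_m a_m)=N\lambda-\lambda\sum_m a_m$, so that the per-period payoff is separable across arms up to the additive constant $N\lambda$, and the maximization is now over $\bm{a}\in\{0,1\}^M$ with the coordinates $a_m$ free and unconstrained. Two structural facts make the continuation term separable as well: the arms are independent, so $\Pr(\bm{o}\mid\bm{\pi},\bm{a})=\prod_{m=1}^M\Pr(o_m\mid\pi_m,a_m)$, and the belief update acts coordinatewise, $\Gamma_{\bm{o}}(\bm{\pi})=(\Gamma_{o_1}(\pi_1),\dots,\Gamma_{o_M}(\pi_M))$. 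Substituting the ansatz into the continuation term, the constant $\frac{N\lambda}{1-\beta}$ factors out against $\beta\sum_{\bm{o}}\prod_m\Pr(o_m\mid\pi_m,a_m)=\beta$, while for each fixed $m$ the observations of the other arms sum out to $1$, leaving $\beta\sum_{o_m}V_m^{\lambda}(\Gamma_{o_m}(\pi_m))\Pr(o_m\mid\pi_m,a_m)$.

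Collecting terms, $N\lambda+\beta\frac{N\lambda}{1-\beta}=\frac{N\lambda}{1-\beta}$, and the right-hand side of \eqref{finite_lagrange_rmab} reduces to
\begin{align*}
\frac{N\lambda}{1-\beta}+\max_{\bm{a}}\sum_{m=1}^M\Big(&R_{a_m}(\pi_m)-\lambda a_m\\
&+\beta\sum_{o_m}V_m^{\lambda}(\Gamma_{o_m}(\pi_m))\Pr(o_m\mid\pi_m,a_m)\Big).
\end{align*}
Since the summand for arm $m$ depends on $\bm{a}$ only through $a_m$, the maximum of the sum equals the sum of the per-arm maxima, and each per-arm maximum is exactly the right-hand side of the single-arm equation defining $V_m^{\lambda}(\pi_m)$. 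Hence the right-hand side equals $\frac{N\lambda}{1-\beta}+\sum_m V_m^{\lambda}(\pi_m)$, the ansatz itself, and by uniqueness of the fixed point this is the value function $V^{\lambda}$.

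There is no deep obstacle here; the content is essentially bookkeeping once separability is exploited. The points needing care are: tracking the geometric accumulation of the constant $N\lambda$ into $N\lambda/(1-\beta)$; justifying that the maximum of a separable sum is the sum of the maxima, which holds precisely because the relaxation decoupled the action coordinates $a_m$; and invoking contraction-mapping uniqueness, so that merely matching the Bellman equation is enough to conclude equality of value functions. An alternative is to run value iteration from the zero function and show by induction that the $n$-th iterate equals $\frac{N\lambda(1-\beta^n)}{1-\beta}+\sum_m V_m^{\lambda,(n)}(\pi_m)$, then let $n\to\infty$ using uniform convergence of value iteration in the discounted setting.
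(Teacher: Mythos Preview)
Your proof is correct. Note, however, that the paper does not supply its own proof of this lemma: it simply cites it as \cite[Proposition~1]{Adelman08}. Your argument---verify that the separable ansatz is a fixed point of the relaxed Bellman operator and invoke contraction-mapping uniqueness---is the standard route to this result and is essentially the one given in the cited reference, so there is no substantive difference to report.
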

	Also, $V^{\lambda}(\bm{\pi})$ is convex and piecewise linear in $ \lambda. $ 
	Now, optimizing over Lagrangian variable $\lambda$ gives 
	\[ 	V^{\lambda^{\ast}}(\bm{\pi}) = \min_{\lambda \geq 0} V^{\lambda}(\bm{\pi}).	\]
	Further, the optimal value function $V(\bm{\pi}) \leq V^{\lambda^{\ast}}(\bm{\pi}).$ This provides the Lagrangian relaxation bound.
	One can use linear programming schemes to compute the bound, where optimization is taken over both $V(\cdot)$ and $\lambda;$ the LP formulation is as follows.
	\begin{align}
	\label{lagrange_bound_lp_primal}
	H^{\lambda^*}(\nu) &= \min\limits_{V(.),\lambda} \frac{N\lambda}{1-\beta} + \sum\limits_{m=1}^{M}\sum\limits_{\pi_m\in \mathcal{S}_{\Pi_m}}\nu(\pi_m)V_m(\pi_m),\nonumber\\ s.t.\texttt{ }& V_m(\pi_m)\geqslant R_{a_m}(\pi_m) -\lambda a_m \nonumber \\ &+ \beta\sum\limits_{{o_m}\in\mathcal{S}_{O_m}}V_m^{\lambda}(\Gamma_{{o_m}}({\pi_m}))\Pr(o_m|\pi_m,a_m)
	\end{align}
	 Alternatively, the Lagrangian bound can be computed using a stochastic finite difference scheme. We present this scheme in Section~\ref{Lb-comp} as Algorithm~\ref{algo:Lagrange_bound}.
	\subsubsection{The ADP bound} 
	We now discuss the bound that can be obtained from approximate dynamic programming. The LP approach to ADP was developed for MDPs in \cite{Farias03}. It was later extended to WC-MDPs in \cite{Hawkins03},\cite{Adelman08}.	
	In this formulation, WC-POMDPs are decoupled by employing the ADP method, where a linear approximation of the value function is used. The value function is of the form $V(\bm{\pi}) \approx \theta + \sum\limits_{m=1}^{M}V_m(\pi_m).$ The closest LP based approximation of the value function will be the solution of the following LP (see \cite[Section 2.4]{Adelman08}.
	\begin{align}
	\label{lp_adp_primal}
	H^{ADP}(\nu) &= \theta + \min\limits_{V(.)}\sum\limits_{m=1}^{M}\sum\limits_{\pi_m\in \mathcal{S}_{\Pi_m}}V_m(\pi_m), \nonumber\\
	s.t.&\texttt{ }\theta(1-\beta) + \sum\limits_{m=1}^{M}V_m(\pi_m) \geqslant \sum\limits_{m=1}^M R_{a_m}(\pi_m) +\nonumber \\& \beta\sum\limits_{m=1}^M \sum\limits_{{o_m}\in\mathcal{S}_{O_m}}V_m^{\lambda}(\Gamma_{{o_m}}({\pi_m}))\Pr(o_m|\pi_m,a_m),\nonumber\\
	&\forall \bm{\pi}\in\mathcal{S}_{\Pi}, \{\bm{a}\in \{0,1\}^M :\sum_{m=1}^M a_m = N\} .
	\end{align}
	One of the main findings in \cite{Adelman08} is $H(\nu)\leqslant H^{ADP}(\nu)\leqslant H^{\lambda^*}(\nu)$ for any $\nu \geqslant 0.$ This suggests that the LP based ADP bound is tighter than the Lagrangian bound. Moreover, bounds on the relaxation gaps $i.e.,$ the distance of each bound from the optimal is given as follows.
	
	{\small{ 
	\begin{equation*} 
	H^{\lambda(\cdot)}(\nu)-H(\nu)\leqslant \frac{(N+1)\mathcal{E} ^* + \Omega}{1-\beta}; 
	H^{ADP}(\nu)-H(\nu)\leqslant \frac{ \Omega '}{1-\beta}
	\end{equation*} 
	}}
	
	Here $\mathcal{E}^*,$ $\Omega,$ $\Omega '$ depend on problem parameters and $\mathcal{E}^*$ can be bounded by a constant. When $\Omega,$ $\Omega '$ are sub-linear in $M,$ the average relaxation gap per arm goes to zero as the number of arms $(M)$ increases.
	\begin{remark}
		
		In the preceding discussion, both the Lagrangian and ADP bounds uses linear programming algorithms for computation. 
		In our formulation, we are dealing with a grid approximation of an uncountable state space. Hence, as the number of arms increases, the number of variables and constraints becomes too large in LP which increases computational complexity.
		Stochastic sub-gradient scheme is an alternative which is computationally less expensive, but it may be slower to converge.
		Also note that for RMABs which play one arm at a time, both the Lagrangian bounds and ADP bounds are equal. 
	\end{remark}
\subsection{Computation of the Lagrangian Bound $({L_b})$}
\label{Lb-comp}
	We now present an algorithm for computation of the Lagrangian bound; it is based on stochastic finite difference scheme and value iteration. For a given multiplier $\lambda,$ the Lagrange relaxed value function is given in \eqref{eqn:lagrange_decouple}. Here, each component $V_m^{\lambda}$ is equivalent to the value function of the single armed bandit problem corresponding to arm $m;$ this can be computed using value iteration. As we are dealing with a continuous state space due to partial observability, uniform grid approximation of the belief space is considered. A variant of value iteration known as Gauss-Seidel value iteration (GSVI) is used for value function computation. GSVI converges faster than classical value iteration \cite{puterman14}, as it substitutes updated values for states as soon as they are computed. For the case of a POMDP this becomes Gauss-Seidel value approximation (GSVA), described in Algorithm~\ref{algo:GSVA}. Here, $\mathcal{S}_{\Pi_G}$ is the grid  approximated belief set with granularity $\delta,$ \textit{i.e.,} distance between successive belief points is $\delta.$ In grid approximation, the continuous belief space is mapped to the finite set $\mathcal{S}_{\Pi_G}$ using nearest neighbour approximation (NNA). Also recall $\mathcal{S}_{O}$ is the set of all observations including `no observation'.
	Now, for a given $\lambda,$ the Lagrange relaxed value function can be computed by employing Algorithm~\ref{algo:GSVA}, $M$ times - once for each arm.  To find the Lagrangian bound, we need to find $\lambda^*$ which minimizes this value in Eqn.~\eqref{eqn:lagrange_decouple}. This is achieved using a stochastic finite difference scheme which is described in Algorithm~\ref{algo:Lagrange_bound}. There are two steps involved in this scheme. First, we compute the value of the bound in  Eqn.~\eqref{eqn:lagrange_decouple} for a given $\lambda,$ by value iteration (GSVA). In the second step, we compute the finite difference approximation of the sub-gradient which is used to  update $\lambda.$ The stopping criterion is that estimated sub-gradient falls below the tolerance $ \delta. $  
	
	\textit{Convergence :} 
The quantity $g^{\lambda}_t$ computed in Algorithm~\ref{algo:Lagrange_bound} is a finite difference approximation (FDA) of the subgradient. It is a well known result (see \cite[Chapter 8]{Nocedal06}) that the subgradient can be written as the sum of its FDA  and an additive error $\delta_\epsilon.$ And, $\delta_\epsilon \rightarrow 0$ as $\epsilon \rightarrow 0,$ where $\epsilon$ is the denominator term of the FDA. In Algorithm 3, this can be ensured by assigning a small value to the parameter $\alpha.$	
	The convergence of Algorithm~\ref{algo:Lagrange_bound} to the global minima of the Lagrangian function in Eqn.~\eqref{eqn:lagrange_decouple} can be argued using the idea of two-timescale stochastic approximation. Notice that in computation of the bound, there are two time scales (discrete) involved. Along one time scale the value $ V^{\lambda_t} $ of the bound is updated by value iteration while keeping $ \lambda_t $ constant. Along the second time scale, the update of $ \lambda_t $ happens. Hence, the second time scale is slower compared to the first. It is a well known result in stochastic approximation that, such two-timescale algorithms converge if the sequence $ \alpha_t $ is decreasing, $ \sum_t \alpha_t = \infty $ and $\sum_t \alpha_t^2 <\infty.$ This convergence is almost sure as shown in \cite[Chapter $6,$ Theorem $2$]{Borkar08}. If $ \alpha_t $ is replaced with  a small constant value $ \alpha, $ there is convergence with high probability. For details, see \cite[Chapter $9,$ Section $9.3$]{Borkar08}.
	
	\begin{algorithm}[t]
		\KwIn{$R_{m,0}^a,R_{m,1}^a,a\in\{0,1\}$,$\eta$, $\rho_{m,0},\rho_{m,1}$, $P_m,$  for $m=1,...,M$; belief update functions $\Gamma 's$;
			tolerance $h,$ discount factor $\beta,$ step size $\alpha,$ grid granularity $\delta.$ }
		\KwOut{ $V(\pi),\forall \pi\in\mathcal{S}_{\Pi_G}$}
		\textit{initialization}\ $t=0,$ $V^0=\frac{\rho_{m,0}}{1-\beta}$ ;\\
		\While{}
		{
			\For{$\pi\in \mathcal{S}_{\Pi_G}$} 
			{Find $S^{\pi}_{\leqslant} = \{o\in \mathcal{S}_{O},a\in \mathcal{A}_{\pi}|\Gamma^a_o(\pi)\leqslant \pi\},$\\
				$S^{\pi}_{>} = \{o\in \mathcal{S}_{O},a\in \mathcal{A}_{\pi}|\Gamma^a_o(\pi)> \pi\},$\\
				Compute \\ $\hat{\Gamma^a_o}(\pi)=NNA(\Gamma^a_o(\pi),\mathcal{S}_{\Pi_G}),$ $\forall o\in \mathcal{S}_O,a\in \mathcal{A}_{\pi},$\\
				$r^0(\pi)=\eta,$ $r^1(\pi)=R(\pi)$\\
				Compute {\footnotesize{
						\begin{align*}		
						V^{t+1}(\pi) \leftarrow \max\limits_{a\in\mathcal{A}_{\pi}}\bigg\{ r^a(\pi)&+\beta\bigg[ \sum\limits_{o\in S^{\pi}_{\leqslant}}\Pr(o|\pi,a)V^{t+1}(\hat{\Gamma^a_o}(\pi)) \\&+ \sum\limits_{o\in S^{\pi}_{>}}\Pr(o|\pi,a)V^{t}(\hat{\Gamma^a_o}(\pi)) \bigg]\bigg\},
						\end{align*} }}
				$\pi\leftarrow \pi + \delta,$\\
			}
			\eIf{$\lVert V^{t+1}-V^t \rVert_1 \leqslant h$}
			{$V\leftarrow V^{t+1},$\\ 		
				break;}
			{ 	 	
				$V^t\leftarrow V^{t+1}$\\
				$t\leftarrow t+1,$\\
				continue;}
		}
		\Return $V$
		\caption{Gauss-Seidel Value Approximation (GSVA)  for POMDP }
		\label{algo:GSVA}
	\end{algorithm} 
	\begin{algorithm}
		\KwIn{$R_{m}$, $\rho_{m}$, $P_m$ for $m=1,...,M$; belief update functions $\gamma 's$;
			Initial Lagrange multiplier $\lambda_0,$ tolerance $\delta,$ discount factor $\beta,$ step sizes $\alpha_t.$ }
		\KwOut{Lagrangian bound $V^{\lambda^*}(\pi),\lambda^*$}
		\textit{initialization}\ $t=1,$ $\lambda_t = \lambda_0,$ $V^{\lambda}=\frac{N }{1-\beta}\min\{R_{m,0}\}$ ;\\
		\While{}
		{
			Compute $V_m^{\lambda_t}\leftarrow GSVA(R_m,\lambda_t,\rho_m,P_m,\Gamma,\beta),$ for $m=1,..,M,$\\
			$V^{\lambda_t}\leftarrow \frac{N{\lambda_t}}{1-\beta} + \sum\limits_{m=1}^{M}V_m^{{\lambda_t}},$ \\
			$g^{\lambda}_t \leftarrow \frac{V^{\lambda_t}-V^{\lambda}}{\lambda_t-\lambda_{t-1}},$\\
			\eIf{$\lvert g^{\lambda}_t \rvert \leqslant \delta $}
			{$V^{\lambda^*}\leftarrow V^{\lambda_t},\lambda^*\leftarrow \lambda_t ,$\\ 		 	 break;}
			{ 	 	
				$V^{\lambda}\leftarrow V^{\lambda_t},$\\
				$\lambda_{t+1}\leftarrow \lambda_t + \alpha_t g^{\lambda}_t,$\\
				$t\leftarrow t+1,$\\
				continue;}
		}
		\Return{$V^{\lambda^*},\lambda^* $} \\
		\caption{Lagrangian bound $(L_b)$ computation for RMAB}
		\label{algo:Lagrange_bound}
	\end{algorithm}
\section{Numerical Simulations and Discussion}
\label{sec:sim}
\subsection{Threshold policy and Indexability}
\label{sec:sim-threshold}
We begin by illustrating the threshold type structure of optimal policy and indexability of LRBs through an example. The parameters used are $p_{0,0}=0.2,$ $p_{1,0}=0.9,$ $R_0 = \rho_0 = 0.3,$ $R_1 = \rho_1 = 0.9,$ $K=3$ and $\beta=0.99.$ In Fig.~\ref{fig:threshold_and_indexability} we plot the action value functions  $V_S(\pi)$ and $V_{NS}(\pi)$ for two values of subsidy $\eta.$ Although the parameters used do not satisfy the conditions in Theorem~\ref{thm:threshold-policy-1}, the optimal policy has threshold structure. As the subsidy increases, the threshold moves towards the left, thus increasing the set of beliefs for which not-playing is the optimal action; this means the arm is indexable, by definition. We conjecture that lazy restless bandits are indexable under conditions $\rho_0<\rho_1,$ $R_0<R_1$ and $\beta\in(0,1).$
\begin{figure}[h]
  \begin{center}
    \begin{tabular}{cc cc}
     \hspace{-0.4 cm}
      \includegraphics[scale=0.155]{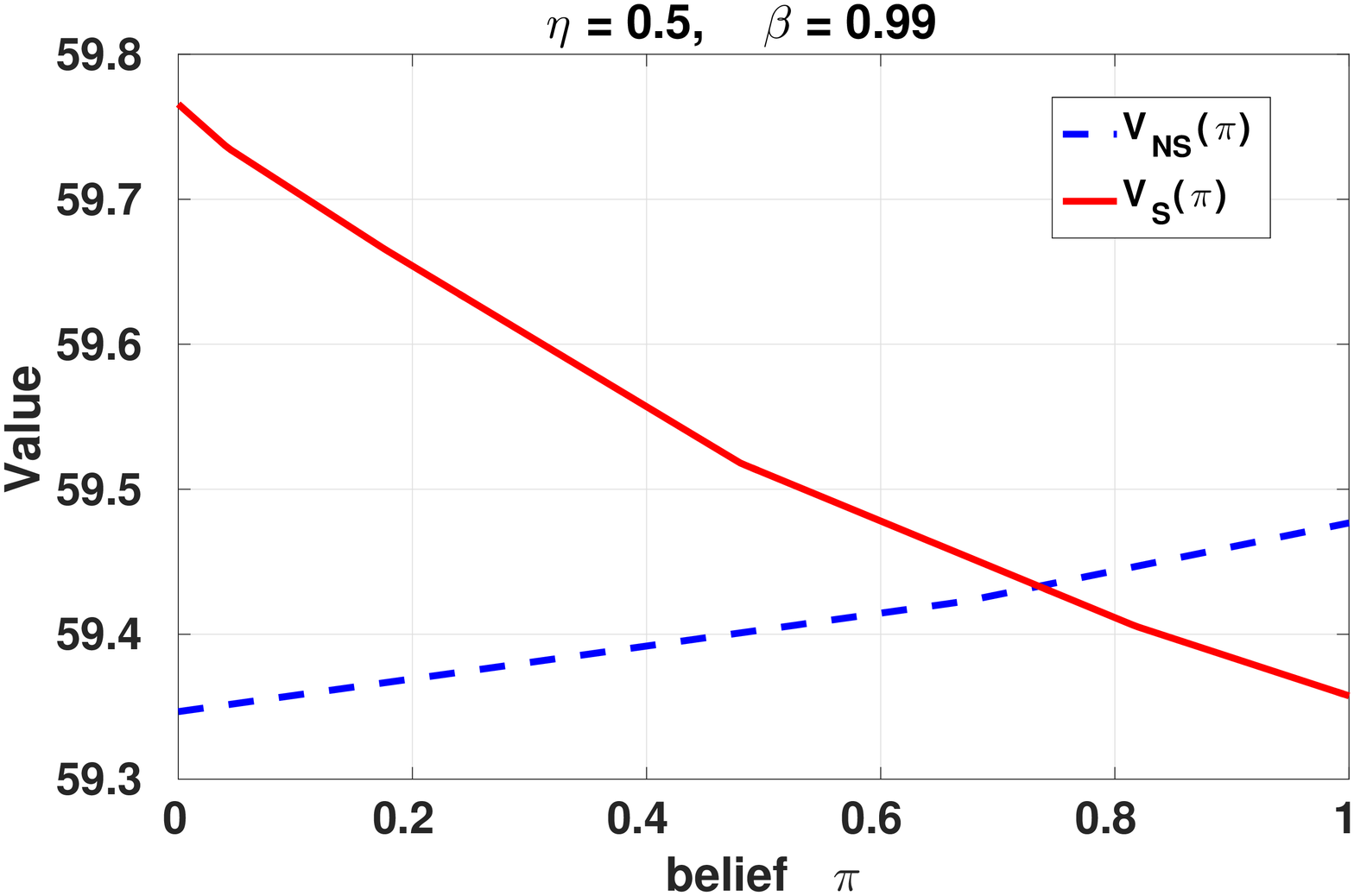}
      & 
      \hspace{-0.6 cm}
      \includegraphics[scale=0.155]{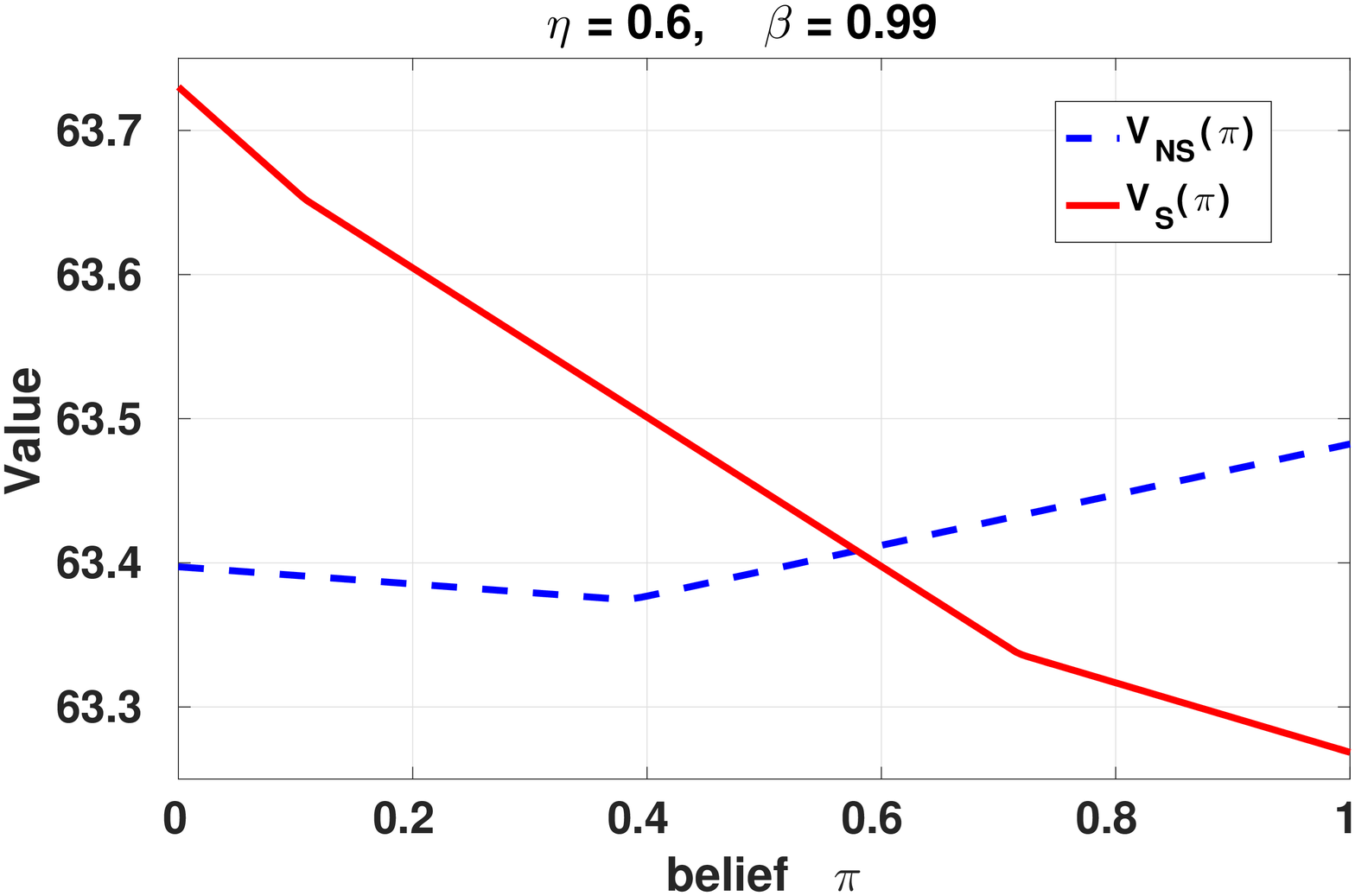} \\
      \hspace{-0.2cm}a) {\small{ For $\eta=0.5,$ $\pi_T(\eta)=0.72$}}   & \hspace{-0.4cm} b) {\small{ For $\eta=0.6,$ $\pi_T(\eta)=0.58$}}  
         \end{tabular}
  \end{center}
  \caption{ The optimal choice of action switches from playing to not-playing at a belief threshold $\pi_T(\eta),$ where, $V_S(\pi_T(\eta))=V_{NS}(\pi_T(\eta)).$ The threshold moves left as subsidy $\eta$ is increased from $0.5$ in $a)$ to $0.6$ in $b)$, implying indexability of the arm.}
  \label{fig:threshold_and_indexability}
\end{figure}
%
%
We now present a few numerical examples and compare different
policies.  The
policies compared are 1) Whittle-index
policy (WI)-- plays the arm with highest Whittle-index in each session, 2) modified Whittle index policy (MWI) - plays the arm with highest modified Whittle index 3) myopic policy (MP)-- plays arm with highest immediate expected reward in each session, 4)
uniformly random (UR), 5) non-uniform random (NUR)-- plays arm
randomly with distribution derived from current belief and 6) round
robin (RR)-- plays arm in round robin order. Further, we will also compare the value obtained by these policies to the Lagrangian upper bound on the optimal value.

\textbf{\textit{Modified Whittle index}}: Modified Whittle index (MWI) defined in \cite[Section $4.3$]{Brown17} is an alternative to Whittle index that is less computationally complex. It was defined for a finite horizon MDP, and can be computed recursively through a series of Bellman operations. Hence, these indices depend on both state and time, unlike the Whittle index which depends only on the state. For a single armed bandit, MWI $m_t(\pi)$ at time $t$ for belief state $\pi$ is the difference between action value functions (for playing and not playing) computed till that time. We need to compute MWI for large time horizons $(T=500,1000)$ to provide for a fair comparison with Whittle index which is defined for infinite time horizon.

MATLAB was used for performing simulations. In these simulations, the
arms start in a random state with a given initial belief about the
state of the arm.  In each session one arm is played according to the
given policy of study. Reward from the played arm is accumulated stored at the end of each session. Later, these rewards are averaged over $L$
iterations (sample paths of states).

We shall compare the discounted cumulative rewards 
obtained from each of the policies as a function of session number. Another parameter of interest while comparing various policies is the
arm choice fraction which is defined as follows.  Let $1_{m,s,l}$ be the indicator
variable if arm $m$ is played in session $s,$ and $l^{th}$
iteration. Then $N_{m,l} := \frac{1}{S_{\max}} \sum_{s=1}^{S_{\max} }
1_{m,s,l},$ where $S_{\max}$ number of sessions for which simulations
are performed. Further, this fraction is averaged over $L$ iterations.
We call this as the choice fraction of arm $m$ corresponding to the policy under study. 

We illustrate five numerical examples. 
In first two examples, we assume that $K$ is
large, \textit{i.e.,} $\gamma_2(\pi)= q$.
For the
last three examples we have a more general setting. We compare $\%$
value gain of various policies with uniform random policy as the
baseline.
First, we shall look at an example which compares the current model that considers multiple transitions $K>1$ per decision interval to the $K=1$ model in \cite{Meshram18} that allows only transition. In order to make this comparison, whenever we use $P$ as the transition matrix for an arm with the current model, $P^K$ will be the corresponding matrix for that arm with the $K=1$ model.
\vspace{-0.3 cm}

\subsection{Example-0 : A six armed bandit}
We choose $\rho_0=0,$ $\rho_1=1$ for all arms, $K=10,$
\begin{table}[h]
\begin{center}
\begin{tabular}{ccccccc}
\hline
$p_{0,0}=$ &$0.7$ & $0.6$ & $0.5$ & $0.8$ & $0.6$ & $0.8$ \\
$p_{1,0}=$ &$0.2$ & $0.2$ & $0.2$ & $0.3$ & $0.3$ & $0.6$ \\
$R_0 =$ & $0.1$ & $0.25$ & $0.3$ & $0$ & $0.15$ & $0.2$ \\
$R_1 =$ & $1$ & $0.85$ & $0.8$ & $1$ & $0.95$ & $0.9$ \\
\hline
\end{tabular}
\end{center}
\end{table}
\vspace{-0.5cm}

Whittle indices were computed using the expressions presented in Section~\ref{sec:Whittle-index-compute}. Fig.~\ref{fig:reward-armchoice-Ex0} shows that the current model gives a better cumulative reward compared to previous model in \cite{Meshram18}. It can also be seen that the arm choices turn out to be different. 
\begin{figure}[h]
  \begin{center}
    \begin{tabular}{cc cc}
    \hspace{-0.5cm}
      \includegraphics[scale=0.155]{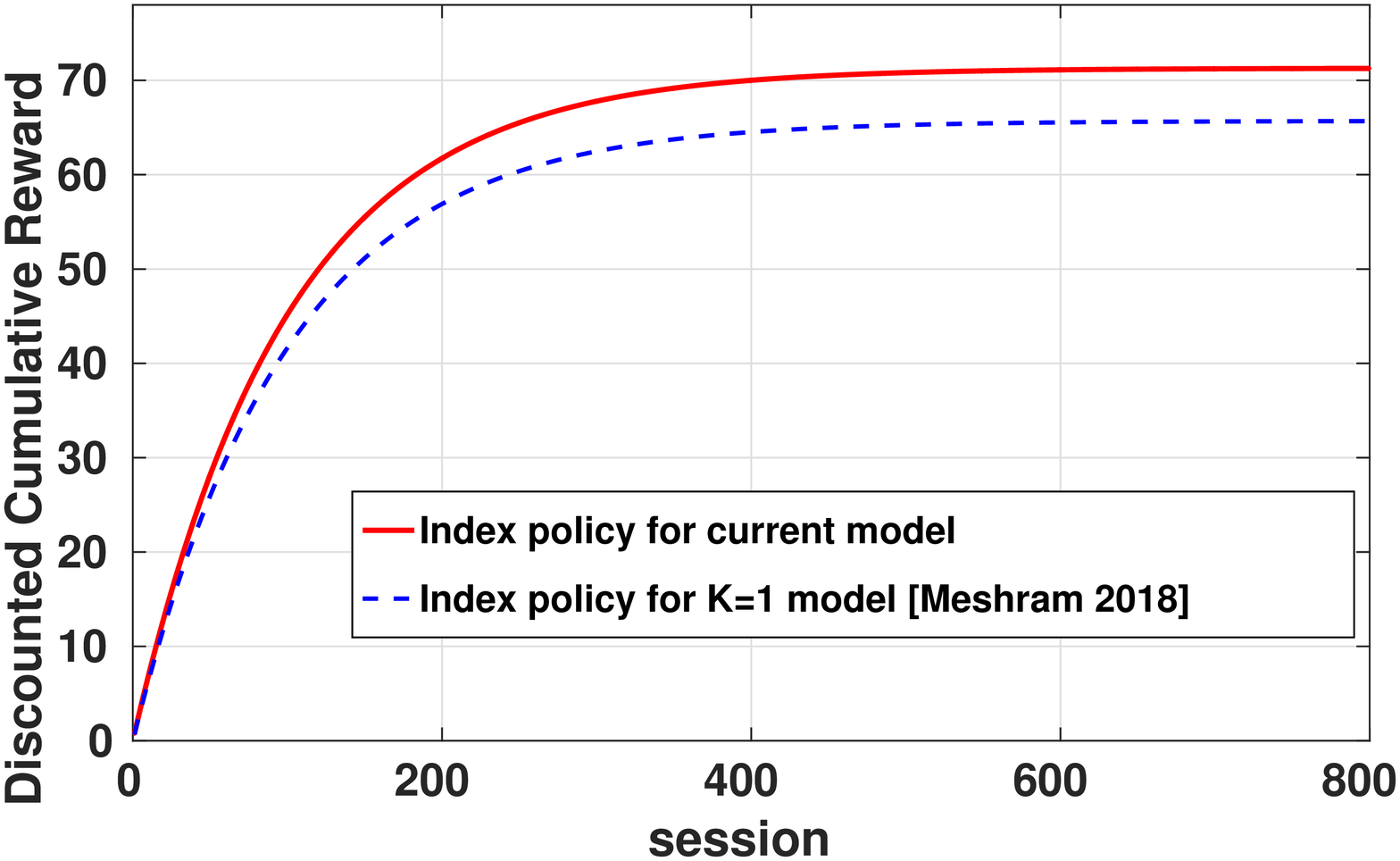}
      & 
      \hspace{-0.7cm}
      \includegraphics[scale=0.155]{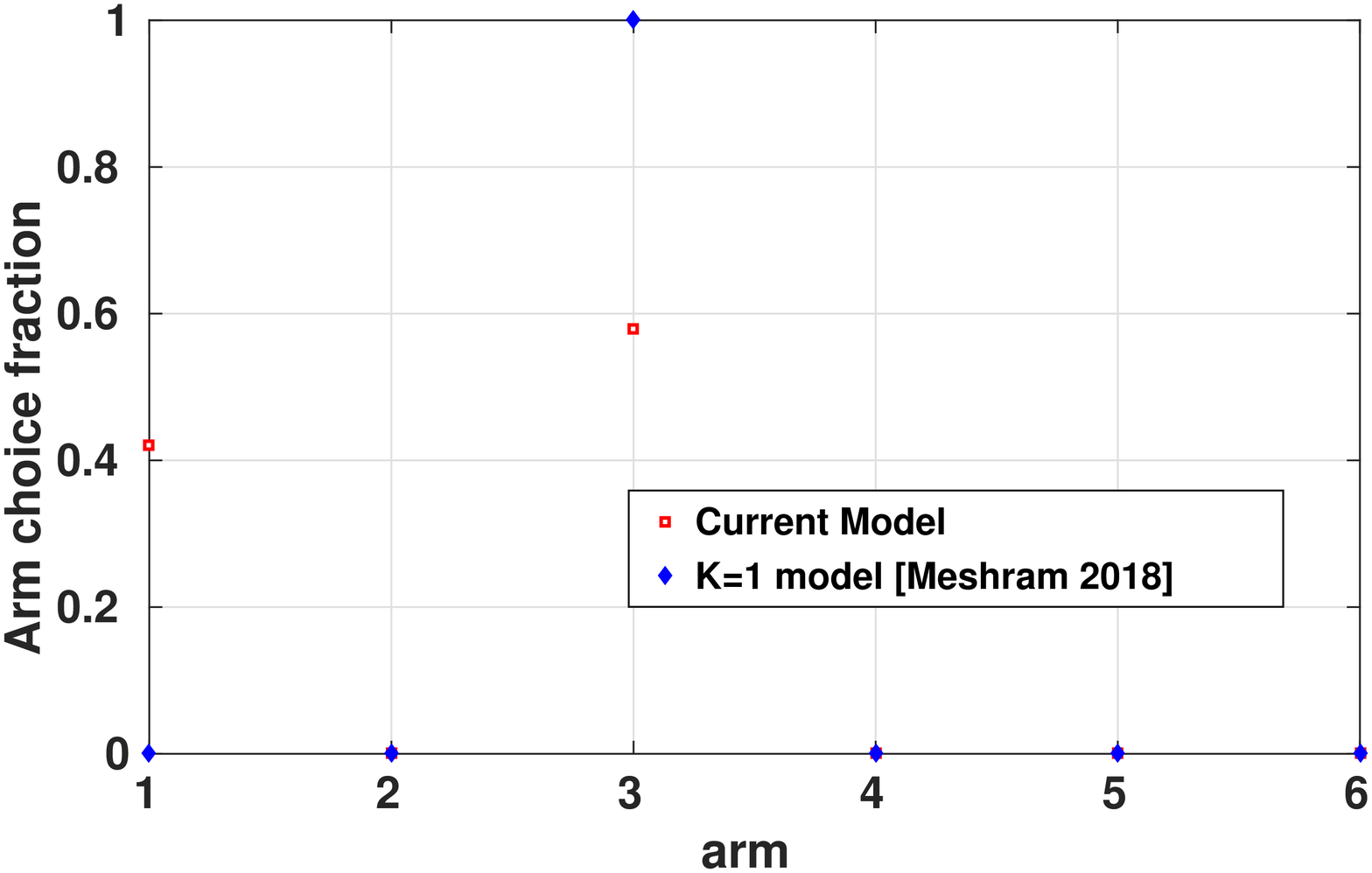} \\
      a) {\small{ Discounted cumulative reward}}   & \hspace{-0.7cm} b) {\small{ Arm choice fraction }}  
         \end{tabular}
  \end{center}
  \vspace{-0.3 cm}
  \caption{ Example-0: a) discounted cumulative rewards as function of
    sessions for Whittle policy from current model and previous model\cite{Meshram18}, b) arm choice fraction for
    each arm. Number of transitions per session $K=10.$}
  \label{fig:reward-armchoice-Ex0}
\end{figure} 
%
%
\subsection{Example-1 : Arms with similar reward structure and stationary behavior }
In this scenario, all the arms have identical reward from play of that
arm and $K$ is large. Also, all the arms have same $q_m = 0.45,$
except for arm $9,$ i.e. $q_9 = 0.4.$ We use following set of
parameters: $\rho_0=R_0=0,$ $\rho_1 =R_1= 0.9,$ \\
{\small{
$p_{0,0} = [0.45,0.50,0.51,0.57,0.63, 0.66,0.69,0.75,0.78,0.87]$ \\ $ p_{1,0}=
[0.45, 0.41,0.40, 0.35, 0.30, 0.28, 0.25, 0.20, 0.15, 0.10].$}}\\
\begin{figure}
  \begin{center}
    \begin{tabular}{cc cc}
    \hspace{-0.5cm}
      \includegraphics[scale=0.155]{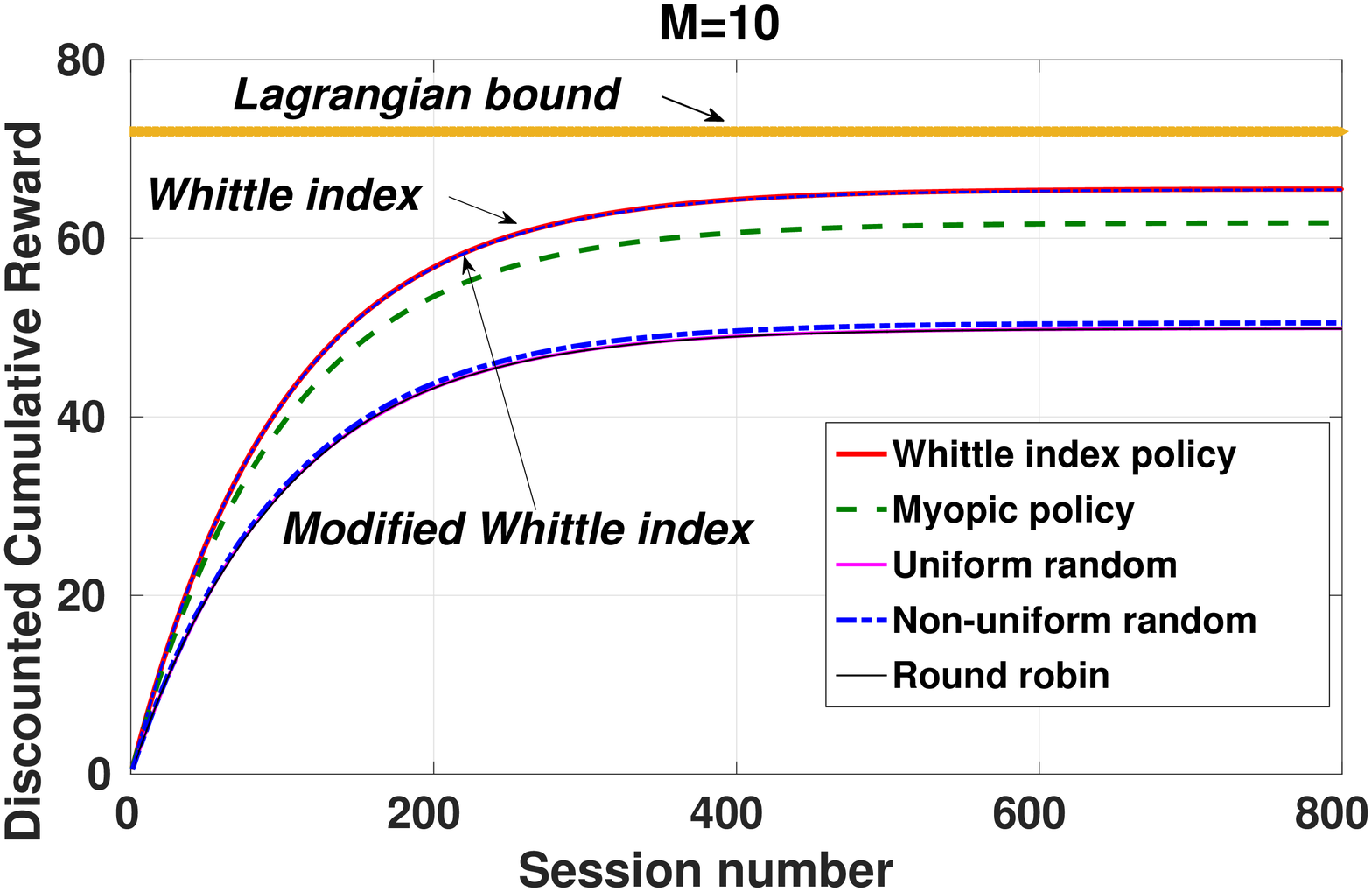}
      & 
      \hspace{-0.8cm}
      \includegraphics[scale=0.155]{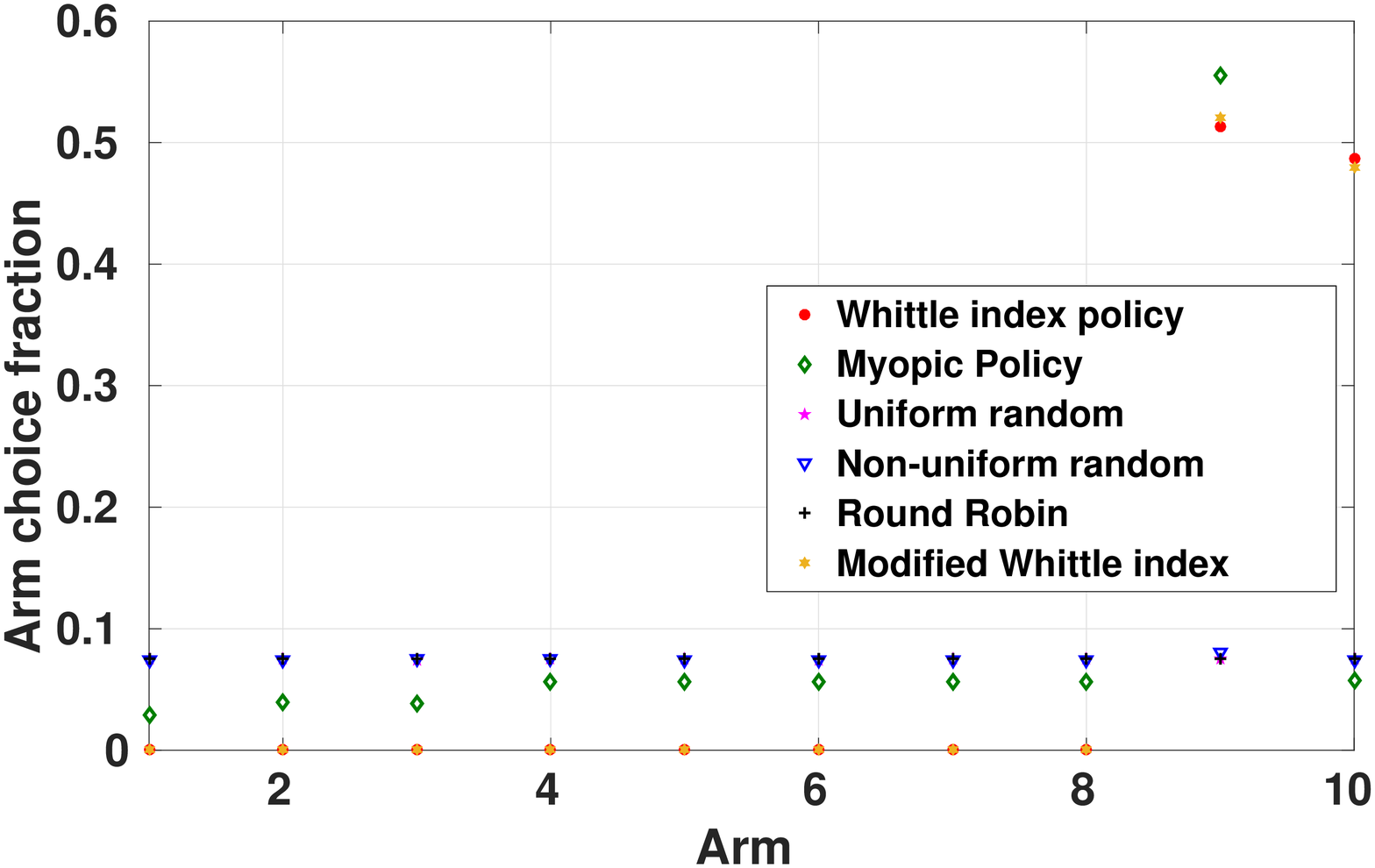} \\  
      a) {\small{ Discounted cumulative reward}}   & \hspace{-0.7cm} b) {\small{ Arm choice fraction }}  
     \end{tabular}
  \end{center} 
  \caption{ Example-1: a) discounted cumulative rewards as function of
    sessions for different policies and b) arm choice fraction for
    each arm with different policies. In this scenario, the modified Whittle index policy performs same as the Whittle index policy} 
  \label{fig:reward-armchoice-Ex1}
\end{figure}
\begin{table}[h]
\centering
\caption{Example-1: Value for different policies; ${\pi}(1)=q$.}
\label{TAB:Ex1-Value-pinit}
\begin{tabular}{ccccccc}
\hline
 $L_b$ &	 WI     & MWI     & MP	 & NUR		& RR	  & Random   \\ \hline \hline
 $72$  & $65.52$ & $65.44$ & $61.73$ & $50.53$  & $49.88$ & $49.91$  \\ \hline
\end{tabular}
\end{table}
In Fig.~\ref{fig:reward-armchoice-Ex1}-a) we can see the discounted
cumulative reward as function of session number, plotted for various policies along with the Lagrangian bound $L_b.$ Table~\ref{TAB:Ex1-Value-pinit} gives the average value generated by various algorithms with stationary probabilities of arms as the initial beliefs. In this case, the discounted cumulative reward obtained by Whittle-index policy (WI) is similar to that of Modified Whittle-index policy (MWI), and higher than
is higher than that of myopic policy (MP) and other policies. 
 In Fig.~\ref{fig:reward-armchoice-Ex1}-b),we can see arm choice fractions of all arms under different policies.  
Notice the tendency of WI and MWI to prefer a smaller subset of arms, $\{9,10\}$  as compared to other policies.
This behavior might be because they account for future rewards through the action value functions. 

\subsection{Example-2 : Positively and negatively correlated arms}
In this example, we consider a more generic parameter set for which no index expressions are available. 
This set consists both of positively and negatively correlated  arms unlike other examples. Whittle indices were computed using Algorithm~\ref{algo:WI}.
We use,{\small{
\begin{eqnarray*}
p_{0,0} = [0.7, 0.6, 0.5, 0.8, 0.6, 0.3, 0.3, 0.2, 0.25, 0.2],\\
 p_{1,0} = [0.2, 0.2, 0.2, 0.3, 0.3, 0.5, 0.6, 0.5, 0.45, 0.7],\\
\rho_0 = [0.2, 0.1, 0.15, 0.3, 0.25, 0.3, 0.2, 0.2, 0.3, 0.1],\\
\rho_1 = [0.8, 0.9, 0.85, 0.9, 0.8,  0.8, 0.8, 0.9, 0.7, 0.9],\\
R_0=[0.1, 0.25, 0.3, 0, 0.15, 0.2, 0.35, 0.25, 0.1, 0.3],\\
R_1 = [  1, 0.85, 0.8, 1.0, 0.95, 0.9, 0.75, 0.85,   1.0, 0.8].
\end{eqnarray*} }}
From Table~\ref{TAB:tccn_ex2}, WI is closer to the Lagrangian bound. Also note the change in the performances of MWI relative to WI and myopic policies, in contrast to Example-1. 
\begin{table}[h]
\centering
\caption{Example 2: Average Value generated by various policies for the same initial belief (randomly chosen),  $K=3.$} \vspace{-0.3 cm}
\label{TAB:tccn_ex2}
\begin{tabular}{ccccccc}
\hline
	${L_b}$     &  WI 	   &  MP     & MWI      & NUR      & RR      & Random  \\ \hline\hline 
	$71.68$	    &  $70.25$ & $68.26$ & $67.87$  & $60.79$  & $60.08$ & $59.68$ 	\\ \hline
\end{tabular}
\end{table} 
\subsection{Example-3 : Effect of multiple state transitions $K$.}
In this example, we study the effect of $K$ on the performance of various policies.
Here, $M =15$ channels were used. For the
  first $10$ channels, $K=20,$ for which $\gamma_2(\pi)
  \approx q.$ For next $5$ channels, $K$ is varied from $1$ to $5.$
  Further, we assumed $R_{m,0} = \rho_{m,0} = 0.$ Other parameters are
  given below.  \\
{\small{
\begin{eqnarray*}
p_{0,0} =[0.50,0.45,0.45,0.78,0.6, 0.6,0.7, 0.7, 0.4,0.45, 0.5,0.6,\\0.7,0.5,0.35],\\
p_{1,0} = [0.41, 0.4,0.35, 0.15, 0.55, 0.5, 0.5, 0.6, 0.3, 0.25,  0.2, 0.2,\\0.2,0.3,0.25],\\
\rho_1 = [0.9, 0.8, 0.8, 0.8, 0.9, 0.9, 0.9, 0.9,  0.8, 0.7,  1.0, 1.0,\\ 1.0 ,1.0 ,1.0 ],\\
R_1 = [0.9, 0.8, 0.8, 0.8, 0.9,  0.9, 0.9, 0.9, 0.8,  0.7,   0.6, 0.7,\\ 0.85, 0.6, 0.7].
\end{eqnarray*} }}
\begin{table}[h]
\centering
\caption{Example 3: Average value generated by various policies. Initial belief - random.} \vspace{-0.3 cm}
\label{TAB:Ex4-Value-effectofK}
\begin{tabular}{ccccccc}
\hline
    $L_b$      &  WI 	   & MWI		&  MP     & NUR      & RR      & Random  \\ \hline\hline 
 	$62.49$    &  $60.48$   & $58.00$	& $55.48$ & $45.35$  & $44.25$ & $44.22$\\ \hline
\end{tabular}
\end{table}
\begin{figure}
  \begin{center}
    \begin{tabular}{cc}    
    \hspace{-0.5cm}
      \includegraphics[scale=0.155]{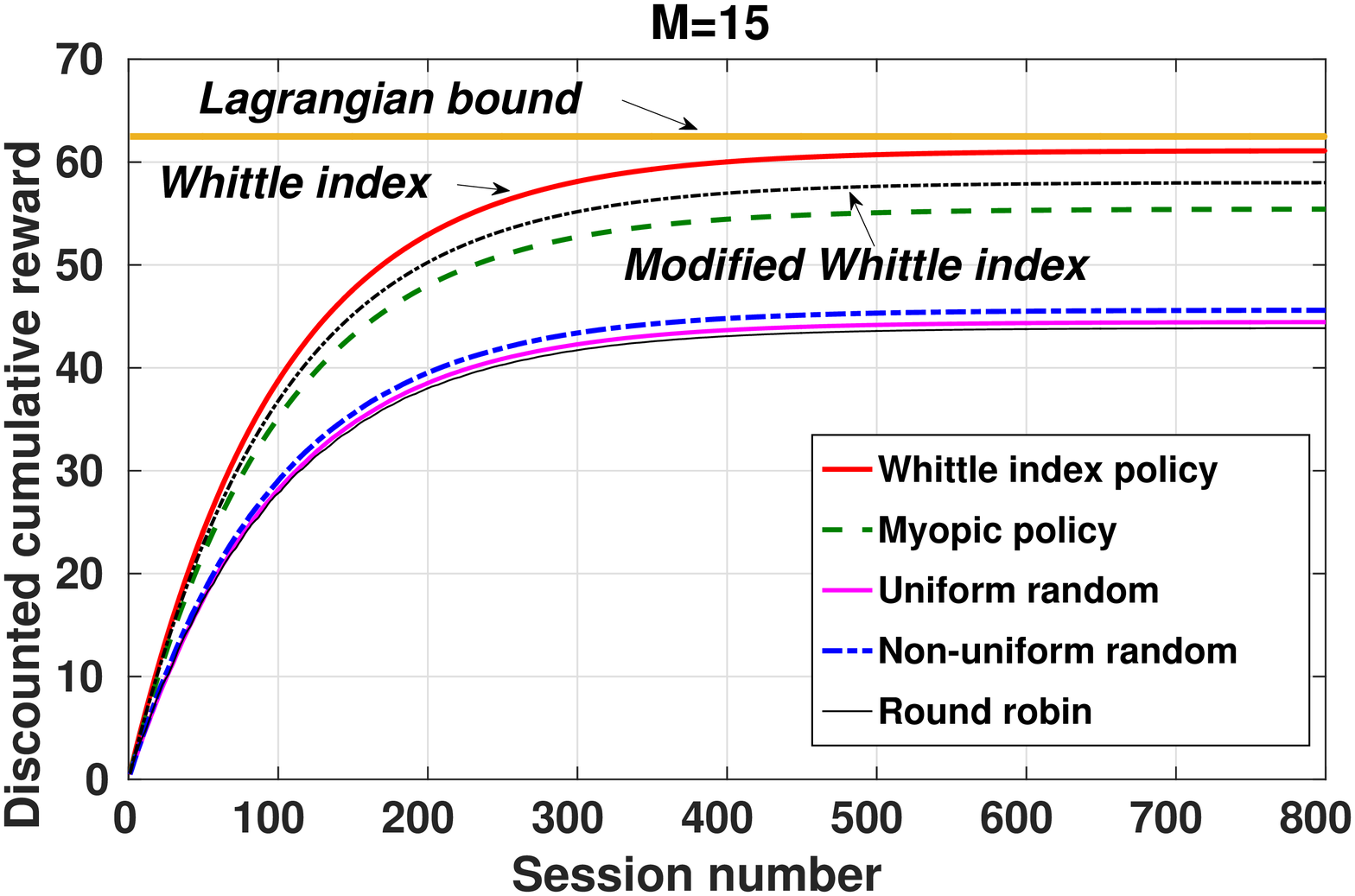}      
      & 
      \hspace{-0.75cm}
      \includegraphics[scale=0.155]{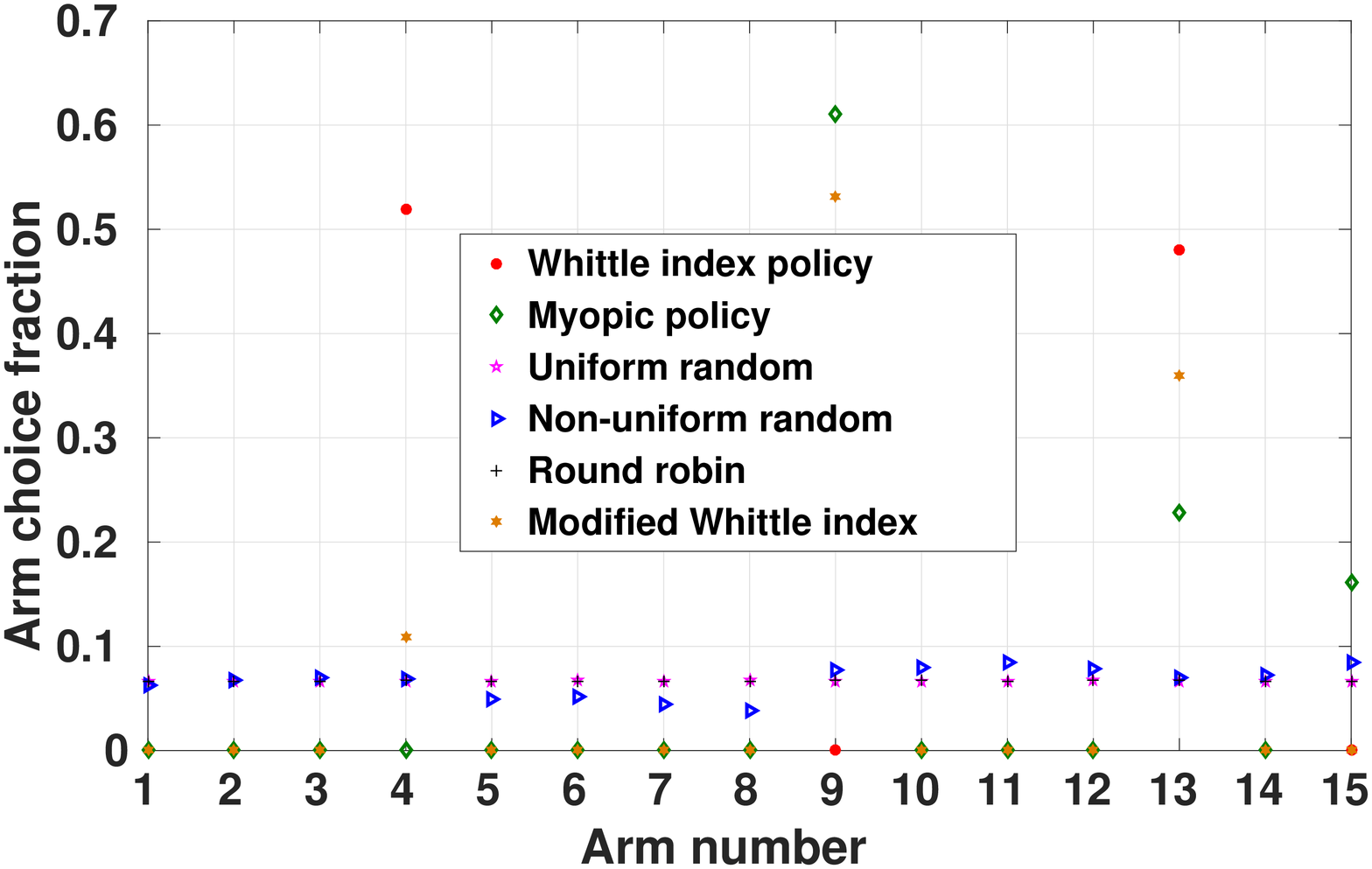} \\  
      a) {\small{ Discounted cumulative reward}}   & \hspace{-0.8cm} b) {\small{ Arm choice fraction }}  
     \end{tabular}
  \end{center} 
 \caption{Example-3: a) The discounted cumulative reward verses
   session number for different policies and b) arm choice fraction
   for each arm with different policies. Notice the index policy almost reaches  up to the Lagrangian upper bound on optimal value. }
  \label{fig:reward-armchoice-Ex4-effectofK}
\end{figure}
Table~\ref{TAB:Ex4-Value-effectofK} and Fig.~\ref{fig:reward-armchoice-Ex4-effectofK} summarize the performance of various policies. In this case, WI almost reaches the Lagrangian bound $L_b.$ The performance of MWI is in between that of WI and myopic policy. 
\subsection{Example-4 : Effect of inaccurate estimates of $K$.}
In this example, we simulate a situation where the decision maker does not know the exact values of $K,$ and proceeds with an inaccurate estimate $K_e$ for all arms. 
We use the same parameters as Example-$3.$ 
For arms $1$ to $10$ the value  $K=20,$ while, for arms $11$ to $15$ respectively have $K$ values $1,2,3,4,5.$ However, as these values are unknown, the value $K_e$ in used in decision making.

Table~\ref{TAB:tccn_ex4} gives the Lagrangian bound along with the average value generated by the index and myopic policies for inaccurate estimates $K_e.$ The ordering on the performances is the same as in Example-$3$.
\begin{table}[h]
\centering
\caption{Example 4: Average Value generated by various policies for inaccurate  estimates $K_e,$ of $K.$} \vspace{-0.3 cm}
\label{TAB:tccn_ex4}
\begin{tabular}{ccccccc}
\hline
 $K_e=$   &  $1$  & $ 2 $  & $3$   & $4$    & $5 $ & $10$  \\ \hline\hline 
 $L_b$  &  $61.07$  & $ 62.12 $  & $ 62.43 $ & $ 62.63 $  & $62.84$  & $63.04$ \\ 
 WI  &  $59.59$  & $ 59.89 $  & $ 60.43 $ & $ 60.56 $  & $60.65$  & $60.74$ \\ 
 MWI &  $56.29$  & $ 56.98 $  & $ 57.38 $ & $ 57.44 $  & $57.47$  & $58.25$ \\
 MP  &  $55.13$  & $ 55.31 $  & $ 55.46 $ & $ 55.47 $  & $55.78$  & $55.73$ \\ \hline
\end{tabular}
\end{table}
Table~\ref{TAB:tccn_ex4} shows the relative gains of the policies when  inaccurate estimates $K_e$ are used for decision making. There are only minor changes in the performances of policies compared to those in Table~\ref{TAB:Ex4-Value-effectofK} where correct $K$ values are known and used for decision making.
\section{Conclusion}
\label{sec:conclusion}
In this work, the problem of restless multi-armed bandits with
cumulative feedback and partially observable states was
formulated. Such bandits are called lazy restless bandits (LRB). 
This model is can be applied for sequential decision making in scenarios where instants of decision making are sparser than instants of system state transition.  LRBs are indexable and the Whittle indices can be computed using a two-timescale stochastic gradient algorithm. 
An upper bound on the optimal value function is provided by the Lagrangian relaxation of the problem. Numerical simulations show that   Whittle-index policy is almost optimal, comes close to the upper bound in some instances. The performance of modified Whittle-index policy is found to vary widely with problem instances. It performs as good as the Whittle-index policy in some instances, worse than myopic policy in some others. 

 It would also be interesting to extend the index policies for RMABs with multiple states that are hidden .
 Other directions for future work include a more detailed study of modified Whittle index and LP based heuristics as alternatives computationally lighter alternatives to Whittle-index policy.
Also, restless bandits with constrained or intermittently available arms would make a useful study.

\bibliographystyle{IEEEbib}
\bibliography{limfee}

\appendix
We provide proofs for the case $\rho_{m,0} < \rho_{m,1}$ and $R_{m,0}<R_{m,1}.$ Recall, we had assumed that the ordering on rewards $R_{m,i}$  is same as the ordering on success probabilities $\rho_{m,i}.$ That is, if $R_{m,0}<R_{m,1},$ then $\rho_{m,0} < \rho_{m,1},$ and vice-versa. 
Note that a result claimed by assuming $R_{m,0}<R_{m,1},$ can also be claimed for the opposing case  $R_{m,0}>R_{m,1},$ using the same proof techniques.
The subscript $m$ is dropped in the analysis of single armed bandits, $\rho_i$ and $R_i$ are used.

  \subsection{Proof of Lemma $1$ - Part $1)$}
  \label{proof:valfunc-convex-pi}
  First we prove convexity of the functions $V_S(\pi),V_{NS}(\pi)$ can be proved using induction.
  It then follows that $V(\pi)$ is convex.
  Let 
  $V_{NS,1}(\pi) =\eta ,$
  $V_{S,1}(\pi) = R_S(\pi)= \pi R_0 + (1-\pi)R_1,$
  $V_1(\pi) = \max\{ V_{S,1}(\pi), V_{NS,1}(\pi) .$ 
  Clearly, $V_{NS,1}(\pi),V_{S,1}(\pi)$  and in turn $V(\pi)$ are convex in $\pi$.
  Assume this convexity claim holds for $V_{NS,n}(\pi),V_{S,n}(\pi).$ 
  Now,
  {\small{
  \begin{align*}
  V_{S,n+1}(\pi) = R_S(\pi) &+ \beta \rho(\pi) V_n(\gamma_1(\pi)) \nonumber\\& +\beta (1-\rho(\pi)) V_n(\gamma_0(\pi)) \nonumber \\
    V_{NS,n+1}(\pi)= \eta &+ \beta V_n(\gamma_2(\pi)) \\
    V_{n+1}(\pi) = \max\{& V_{S,n+1}(\pi), V_{NS,n+1}(\pi) \}.\nonumber
  \end{align*} }}
  Define {\small{
  \begin{align*}
   b_0 :=[(1-\pi)&(1-\rho_{1})p_{10} + \pi(1-\rho_{0}) p_{00},\texttt{ }\\& (1-\pi)(1-\rho_{1})(1-p_{10}) + \pi(1-\rho_{0})(1- p_{00}) ]; \\
   b_1 := [(1-\pi)&\rho_{1}p_{10} + \pi\rho_{0}p_{00},\texttt{ }\\& (1-\pi)\rho_{1}(1-p_{10}) + \pi\rho_{0}(1-p_{00})]; \\
   {\mid\mid b_1 \mid\mid}_1 = \pi\rho_0 &+ (1-\pi)\rho_1  \equiv \rho(\pi);\\
    {\mid\mid b_0 \mid\mid}_1 = 1 - &\pi\rho_0 - (1-\pi)\rho_1 = 1 -\rho(\pi).
  \end{align*}}}
  Now, $V_{S,n+1}(\pi)$ can be rewritten as {\small{
  \begin{dmath*}
   V_{S,n+1}(\pi) = R_S(\pi) + \beta ||b_1||_1 V_n\left(\frac{b_1}{||b_1||_1} \right) + \beta||b_0||_1 V_n\left(\frac{b_0}{||b_0||_1} \right)
  \end{dmath*} }}
  We know that $V_n\left(\pi\right)$ is convex. Using Lemma 2 from \cite{astrom1969}, $||b_1||_1 V_n\left(\frac{b_1}{||b_1||_1} \right)$ is also convex.
  This implies that $V_{S,n+1}$ is a sum of convex functions and hence convex.
  Similarly, $V_{NS,n+1}(\pi) = \eta + \beta V_n(\gamma_2(\pi)).$
  Here, $V_n(\pi)$ is convex and $\gamma_2(\pi)$ is linear. Hence, $V_{NS,n+1}(\pi)$ is convex.
  It follows that $V_{n+1}(\pi)$ is convex. By principle of induction, $V_{S,n}(\pi),V_{NS,n}(\pi)$ and $V_n(\pi)$ are convex for all $n.$ 
  From \cite{bertsekas1995} Chapter 2, as $n\rightarrow \infty,$  $V_{S,n}(\pi) \rightarrow V_S(\pi),$ $V_{NS,n}(\pi) \rightarrow V_{NS}(\pi)$ and $V_{n}(\pi) \rightarrow V(\pi).$ 
  This means that the functions $V_S,V_{NS},V$ are convex in $\pi.$

  \subsection{Proof of Lemma $1$ - Part $2)$}
  \label{proof:valfunc-convex-eta}
  This result too can be claimed using the induction principle. To emphasize that subsidy $\eta$ is a variable, value functions are rewritten as 
  $V_S(\pi,\eta),V_{NS}(\pi)$ and $V(\pi,\eta).$ For a fixed $\pi,$ let 
  $V_{NS,1}(\pi,\eta) = \eta ,$ $
  V_{S,1}(\pi,\eta) = R_S(\pi)= \pi R_0 + (1-\pi)R_1,$ and $
  V_1(\pi,\eta) = \max\{ R_S(\pi),\eta \}.$
  Clearly, all the above functions are convex and non-decreasing in $\eta$. 
  Now suppose $V_{S,n}(\pi,\eta),V_{NS,n}(\pi,\eta)$ and in turn $V_{n}(\pi,\eta)$ are convex.{\small{
  \begin{align*}
  V_{S,n+1}(\pi,\eta) = R_S(\pi) &+ \beta \rho(\pi) V_n(\gamma_1(\pi),\eta) \\ &+ \beta(1-\rho(\pi)) V_n(\gamma_0(\pi),\eta)  \\
   V_{NS,n+1}(\pi,\eta)= \eta &+ \beta V_n(\gamma_2(\pi),\eta)  \\
    V_{n+1}(\pi,\eta) = & \max\{ V_{S,n+1}(\pi), V_{NS,n+1}(\pi) \}.
   \end{align*} }}
  Here, $V_{NS,n+1}(\pi,\eta)$ is non-decreasing convex in $\eta$ because it is a sum of two non-decreasing convex functions in $\eta$.
  Further, $V_{S,n+1}(\pi,\eta)$ is sum of a constant function and a convex combination of two non-decreasing convex functions; hence it is convex non-decreasing.
  By induction $V_{S,n},V_{NS,n}$ and $V_{n}$ are non-decreasing convex for any $n\geq 1.$ 
  As in  part 1) of this lemma, as $n\rightarrow \infty,$  $V_{S,n}(\pi,\eta) \rightarrow V_S(\pi,\eta),$ $V_{NS,n}(\pi,\eta) \rightarrow V_{NS}(\pi,\eta)$ 
  and $V_{n}(\pi,\eta) \rightarrow V(\pi,\eta).$ 
  This means that the functions $V_S,V_{NS},V$ are convex and non-decreasing in $\eta$ for fixed $\pi.$
  \qed

\subsection{Proof of Lemma $3$}
\label{proof:valfunc-dec-for-positivecorr}
 The proof is done by the principle of induction.  Assume that $V_n(\pi)$ is
 non increasing in $\pi.$ Let $\pi'> \pi$ and consider playing the
 arm is optimal. Then
{\small{
\begin{equation*}
V_{n+1}(\pi) = R_S(\pi) + \beta \left[\rho(\pi) V_n(\gamma_1(\pi)) + 
  (1-\rho(\pi)) V_n(\gamma_0(\pi)) \right]
\end{equation*}
}}
Here $R_S(\pi) = \pi R_0 + (1-\pi) R_1.$ Note that $R_S(\pi)$ is
decreasing in $\pi,$ i.e. $R_S(\pi') < R_S(\pi)$ whenever $\pi'> \pi.$
Hence we get 
{\small{
\begin{equation}
\label{vdec}
V_{n+1}(\pi) \geq R_S(\pi') + \beta \left[\rho(\pi) V_n(\gamma_1(\pi)) + 
  (1-\rho(\pi)) V_n(\gamma_0(\pi)) \right].
\end{equation}
}}
 From our assumptions $p_{00}>p_{10}$ and $\rho_1 > \rho_0$, we get
 a stochastic ordering $(\leqslant_s)$ on observation probabilities, i.e., $[1-\rho(\pi'),\rho(\pi')]^T \leqslant_{s} [1-\rho(\pi),\rho(\pi)]^T.$ 
   Also, $\gamma_0(\pi) \geq \gamma_1(\pi);$ and as $V_n(\pi)$ is decreasing in $\pi,$  $V_n(\gamma_0(\pi))\leq V_n(\gamma_1(\pi)).$ Then, using a property of stochastic ordering \cite[Lemma 1.1]{Lovejoy87} along with \eqref{vdec}, we obtain 
{\small{
\begin{equation*}
V_{n+1}(\pi) \geq R_S(\pi') + \beta \left[\rho(\pi') V_n(\gamma_1(\pi)) + 
  (1-\rho(\pi')) V_n(\gamma_0(\pi)) \right].
\end{equation*}
}}
Now that $\gamma_0,\gamma_1$ are increasing in $\pi$ and $V_n$ is decreasing in
$\pi$, we have   
{\small{
\begin{dmath*}
V_{n+1}(\pi) \geq R_S(\pi') + \beta \left[\rho(\pi') V_n(\gamma_1(\pi')) + 
  (1-\rho(\pi')) V_n(\gamma_0(\pi')) \right] \\ \geq V_{n+1}(\pi').
\end{dmath*} }}
%
%
This is true for every $n.$ From \cite{Ross71}, we know $V_n(\pi)
\rightarrow V(\pi)$ as $n\rightarrow \infty$. Thus $V(\pi)$ is decreasing in $\pi.$
Similarly, when not playing the arm is optimal, it is clear that $V_{n+1}(\pi) = \eta + V_n(\gamma_2(\pi)) \geq V_{n+1}(\pi'),$ for $\pi'>\pi,$ as $\gamma_2$ is increasing in $\pi$ for positively correlated arms. Likewise, the same stochastic ordering argument for showing $V_S$ is decreasing in $\pi.$\qed 
\subsection{Proof of Lemma $4$ }
\label{proof:dpi-decreasing-positivecorr}
\subsubsection{Part 1) - For large $K$}
\label{proof:dpi-decreasing-largeK}
Let $d(\pi):= V_S(\pi)-V_{NS}(\pi).$ We want to prove that $d(\pi)$
decreasing in $\pi.$ This implies that we need to show $V_S(\pi)-V_{NS}(\pi) < V_S(\pi')-V_{NS}(\pi'), \text{ whenever } \pi>\pi' .$ That is to show $V_S(\pi)-V_{S}(\pi') < V_{NS}(\pi)-V_{NS}(\pi').$

In our setting $V_{NS}(\pi)-V_{NS}(\pi') = 0$ whenever $\gamma_2(\pi)
= q$ and this is true for large values of $k.$ 
We know for positive correlated arms,
$V_S(\pi)-V_{S}(\pi') < 0,$ as $V_S$ is decreasing in $\pi.$  
Hence, the claim follows.
\qed 
\subsubsection{Parts 2),3) - for any $K>1$}
\label{proof:dp-dec-arbitrary-K}
To prove $V_S - V_{NS}$ is decreasing with $\pi,$ we need the following result.
\addtocounter{lemma}{8}
\begin{lemma}
\label{lemma:lipschitz}
The functions $\left|\frac{\partial V(\pi)}{\partial \pi} \right|, \left|\frac{\partial V_S(\pi)}{\partial \pi} \right|$ and $\left|\frac{\partial V_{NS}(\pi)}{\partial \pi} \right|$ $\leq \kappa c (\rho_1 - \rho_0),$ when, $\beta <\frac{1+b}{4}$ or $0<|p_{0,0}-p_{1,0}|<\frac{1+b}{4}.$ Here $\kappa = \frac{1}{1-\beta|p_{0,0}-p_{1,0}|},$ $b = \min\left\lbrace 1,\frac{R_1-R_0}{\rho_1-\rho_0} \right\rbrace$ and $c = \max\left\lbrace 1,\frac{R_1-R_0}{\rho_1-\rho_0} \right\rbrace$
\end{lemma}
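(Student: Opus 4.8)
The plan is to prove the stated derivative bounds by induction along the value‑iteration recursion, carrying a common Lipschitz constant $L_n$ for the iterates $V_n,V_{S,n},V_{NS,n}$ on $[0,1]$. Take $V_{S,1}(\pi)=R_S(\pi)$, $V_{NS,1}(\pi)=\eta$, $V_1=\max\{V_{S,1},V_{NS,1}\}$; these are Lipschitz with constant $L_1:=R_1-R_0\le c(\rho_1-\rho_0)$. As in Lemma~1, each iterate is convex in $\pi$, so one‑sided derivatives exist everywhere and it is enough to bound them; and $V_{S,n}\to V_S$, $V_{NS,n}\to V_{NS}$, $V_n\to V$ uniformly, so any bound uniform in $n$ passes to the limit. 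Since a pointwise maximum of $L$‑Lipschitz functions is $L$‑Lipschitz, the bound for $V_{n+1}$ will follow once we bound $V_{S,n+1}$ and $V_{NS,n+1}$, and kinks of the $\max$ and of the $\gamma$'s are handled by working with right derivatives throughout.

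For the inductive step, differentiate
\[
V_{S,n+1}(\pi)=R_S(\pi)+\beta\big[\rho(\pi)V_n(\gamma_1(\pi))+(1-\rho(\pi))V_n(\gamma_0(\pi))\big],\qquad V_{NS,n+1}(\pi)=\eta+\beta V_n(\gamma_2(\pi)),
\]
and bound the pieces using $|R_S'(\pi)|=R_1-R_0$, $|\rho'(\pi)|=\rho_1-\rho_0$, $|V_n(\gamma_1(\pi))-V_n(\gamma_0(\pi))|\le L_n|\gamma_0(\pi)-\gamma_1(\pi)|\le L_n|p_{0,0}-p_{1,0}|$ (Lemma~2, resp.\ Lemma~5), the slope $\gamma_2'(\pi)=(p_{0,0}-p_{1,0})^K$ so $|\gamma_2'(\pi)|\le|p_{0,0}-p_{1,0}|$ for $K\ge1$, and the M\"obius form of $\gamma_0,\gamma_1$, which gives $\gamma_1'(\pi)=\frac{\rho_0\rho_1(p_{0,0}-p_{1,0})}{\rho(\pi)^2}$ and $\gamma_0'(\pi)=\frac{(1-\rho_0)(1-\rho_1)(p_{0,0}-p_{1,0})}{(1-\rho(\pi))^2}$, whence $\rho(\pi)|\gamma_1'(\pi)|\le\rho_1|p_{0,0}-p_{1,0}|$ and $(1-\rho(\pi))|\gamma_0'(\pi)|\le(1-\rho_0)|p_{0,0}-p_{1,0}|$. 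The fact that makes the constant come out sharp is the identity $\rho(\pi)\gamma_1(\pi)+(1-\rho(\pi))\gamma_0(\pi)=(1-\pi)p_{1,0}+\pi p_{0,0}$, obtained directly from the two Bayes updates; differentiating it gives $\rho(\pi)\gamma_1'(\pi)+(1-\rho(\pi))\gamma_0'(\pi)=(p_{0,0}-p_{1,0})-\rho'(\pi)\big(\gamma_1(\pi)-\gamma_0(\pi)\big)$, so the $\rho'$‑contribution to $V_{S,n+1}'$ is cancelled by part of the $\gamma'$‑terms. Collecting everything yields $L_{S,n+1}\le (R_1-R_0)+\beta|p_{0,0}-p_{1,0}|\,L_n$ and $L_{NS,n+1}\le \beta|p_{0,0}-p_{1,0}|\,L_n$, hence $L_{n+1}\le c(\rho_1-\rho_0)+\beta|p_{0,0}-p_{1,0}|\,L_n$.

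Iterating this recursion from $L_1\le c(\rho_1-\rho_0)$ and summing the geometric series gives $L_n\le \frac{c(\rho_1-\rho_0)}{1-\beta|p_{0,0}-p_{1,0}|}=\kappa\,c(\rho_1-\rho_0)$ for every $n$, and letting $n\to\infty$ gives the same bound for $|V'(\pi)|$, $|V_S'(\pi)|$, $|V_{NS}'(\pi)|$. The hypotheses $\beta<\tfrac{1+b}{4}$ or $0<|p_{0,0}-p_{1,0}|<\tfrac{1+b}{4}$ are (more than) what is needed to keep the contraction coefficient below $1$ under the accounting of the $\gamma'$‑terms used above, so that the series converges at the claimed rate $\kappa$.

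The main obstacle will be the bookkeeping of constants in the inductive step. One must keep the weights $\rho(\pi)$ and $1-\rho(\pi)$ attached to $\gamma_1'$ and $\gamma_0'$ — the bare slopes $\gamma_1',\gamma_0'$ are not bounded by $1$ when $\rho_0$ is near $0$ or $\rho_1$ near $1$ — must use the identity above to absorb the $\rho'$ term rather than bound it crudely, and must carefully separate the positively and negatively correlated cases (the signs of $\gamma_0',\gamma_1'$ differ, per Lemmas~2 and~5, though the magnitude bounds do not), handling all non‑differentiability via one‑sided derivatives. It is precisely at this point that the smallness condition on $\beta$ or on $|p_{0,0}-p_{1,0}|$ is invoked, to guarantee that $L_n\mapsto L_{n+1}$ is a contraction.
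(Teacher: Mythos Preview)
Your overall strategy---induction along value iteration, carrying a Lipschitz constant through the recursion---is the same as the paper's. The genuine difference is in how you handle the inductive step for $V_{S,n+1}$. The paper bounds the three pieces of $V_{S,n+1}'$ separately: it estimates $\rho(\pi)|\gamma_1'|\le \rho_1|p_{0,0}-p_{1,0}|$, $(1-\rho(\pi))|\gamma_0'|\le(1-\rho_0)|p_{0,0}-p_{1,0}|$, and $|\rho'||\gamma_0-\gamma_1|\le(\rho_1-\rho_0)|p_{0,0}-p_{1,0}|$, picking up a combined factor $1+2(\rho_1-\rho_0)\le 3$ in front of $\beta\kappa c(\rho_1-\rho_0)|p_{0,0}-p_{1,0}|$. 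It is precisely this overcounting that forces the paper to impose the side conditions $\beta<\tfrac{1+b}{4}$ or $|p_{0,0}-p_{1,0}|<\tfrac{1+b}{4}$ in order to close the induction at the target constant $\kappa c(\rho_1-\rho_0)$.

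Your route via the total-probability identity $\rho(\pi)\gamma_1(\pi)+(1-\rho(\pi))\gamma_0(\pi)=\pi p_{0,0}+(1-\pi)p_{1,0}$ is sharper: after bounding $|V_n'|\le L_n$ and $|V_n(\gamma_1)-V_n(\gamma_0)|\le L_n|\gamma_1-\gamma_0|$, the differentiated identity gives exactly $(\rho_1-\rho_0)|\gamma_1-\gamma_0|+\rho(\pi)|\gamma_1'|+(1-\rho(\pi))|\gamma_0'|=|p_{0,0}-p_{1,0}|$ in both the positively and negatively correlated cases (the sign bookkeeping you flag is routine). This yields the clean recursion $L_{n+1}\le(R_1-R_0)+\beta|p_{0,0}-p_{1,0}|L_n$, which sums to $\kappa(R_1-R_0)\le\kappa c(\rho_1-\rho_0)$ with no side condition beyond $\beta|p_{0,0}-p_{1,0}|<1$. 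So your argument not only proves the lemma but shows that the stated hypotheses on $\beta$ or $|p_{0,0}-p_{1,0}|$ are unnecessary for it; they are artifacts of the paper's cruder accounting. One small wording point: the cancellation you describe occurs at the level of the Lipschitz bound (after replacing $V_n'(\gamma_i)$ by $L_n$), not in $V_{S,n+1}'$ itself---your computation is correct, but the sentence ``the $\rho'$-contribution is cancelled by part of the $\gamma'$-terms'' should be read in that sense.
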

\begin{proof}
We prove this by induction. 
We provide the proof for the case $p_{0,0}>p_{1,0},$ \textit{i.e.,} positively correlated arms. The same procedure also works for $p_{0,0}<p_{1,0},$ \textit{i.e.} negatively correlated arms.  Also, notice that $\kappa \geq 1.$
\begin{enumerate}
\item $V_{S,1} = R_S(\pi),$ $V_{NS,1} = \eta$ and so, $V_1(\pi) = \max\{R_S(\pi),\eta\}.$ Clearly, as all the functions are convex,
$\mid\frac{\partial V_1(\pi)}{\partial \pi} \mid \leq \kappa c(\rho_1 - \rho_0). $ 
q\item Assume $\left| \frac{\partial V_n(\pi)}{\partial \pi} \right| < \kappa c(\rho_1 - \rho_0) .$
\item Now,
\begin{eqnarray*}
  V_{S,n+1}(\pi) = R_S(\pi) + \beta \rho(\pi) V_{n}(\gamma_1(\pi)) \\ + 
  \beta(1-\rho(\pi)) V_{n}(\gamma_0(\pi)) \\
  V_{NS,n+1}(\pi)= \eta + \beta V_{n}(\gamma_2(\pi))  \\
  V_{n+1}(\pi) = \max\{ V_{S,n+1}(\pi), V_{NS,n+1}(\pi) \}.\nonumber
\end{eqnarray*} 
{\small{
\begin{eqnarray}
\begin{split}
\frac{\partial V_{S,n+1}(\pi)}{\partial \pi} = &(R_0 - R_1) + \\& \beta(\rho_1-\rho_0)\left[ V_n(\gamma_0(\pi)) - V_n(\gamma_1(\pi)) \right]  + \\& \beta \rho(\pi)\frac{\partial V_{n}(\gamma_1(\pi))}{\partial \pi} \gamma_{1}^'(\pi)  + \\&
\beta(1-\rho(\pi))\frac{\partial V_{n}(\gamma_0(\pi))}{\partial \pi} \gamma_{0}^'(\pi).
\end{split}
\end{eqnarray}
}}
Substituting $\gamma_{1}^'(\pi) = \frac{\rho_1 \rho_0(p_{0,0}-p_{1,0})}{(\rho(\pi)^2)}$ and  
$\gamma_{0}^'(\pi) = \frac{(1-\rho_1)(1-\rho_0)(p_{0,0}-p_{1,0})}{(1-\rho(\pi))^2},$
{\small{
\begin{eqnarray*}
\begin{split}
\frac{\partial V_{S,n+1}(\pi)}{\partial \pi} &= (R_0 - R_1) +\\& \beta(\rho_1-\rho_0)\left[ V_n(\gamma_0(\pi)) - V_n(\gamma_1(\pi)) \right]  +\\& \beta\frac{\partial V_{n}(\gamma_1(\pi))}{\partial \pi} \frac{\rho_1 \rho_0(p_{0,0} -p_{1,0})}{(\rho(\pi))}  + \\&
\beta\frac{\partial V_{n}(\gamma_0(\pi))}{\partial \pi} \frac{(1-\rho_1)(1-\rho_0)(p_{0,0}-p_{1,0})}{(1-\rho(\pi))}.
\end{split}
\end{eqnarray*} }}
\item Bound: We know that $\rho(\pi) \in [\rho_0,\rho_1]$ and $ 1-\rho(\pi) \in [1-\rho_1,1-\rho_0].$ Substituting in above equation, we have 
{\small{
\begin{dmath*}
 \frac{\partial V_{S,n+1}(\pi)}{\partial \pi} \leq (R_0 - R_1) +  \beta(\rho_1-\rho_0)\left[ V_n(\gamma_0(\pi)) - V_n(\gamma_1(\pi)) \right] + \beta\frac{\partial V_{n}(\gamma_1(\pi))}{\partial \pi} 
 {\rho_1 (p_{0,0}-p_{1,0})} + \beta\frac{\partial V_{n}(\gamma_0(\pi))}{\partial \pi} {(1-\rho_0)(p_{0,0}-p_{1,0})}.
\end{dmath*}
 }}
 From the assumption in Step 2), we can conclude that ${V_n(\gamma_0(\pi))-V_n(\gamma_1(\pi))}\leq \kappa c(\rho_1-\rho_0)(\gamma_0(\pi) - \gamma_1(\pi)).$ Further we have $\gamma_0(\pi) - \gamma_1(\pi) \leq p_{0,0}-p_{1,0}.$ Using this, we have {\small{
\begin{dmath*}
\frac{\partial V_{S,n+1}(\pi)}{\partial \pi} \leq (R_0 - R_1) + {\beta}(\rho_1-\rho_0)^2 \kappa c(p_{0,0}-p_{1,0}) + \beta \kappa c(p_{0,0}-p_{1,0})\rho_1(\rho_1-\rho_0) + \beta\kappa c(p_{0,0}-p_{1,0})(1-\rho_0)(\rho_1-\rho_0)
\\ \leq (R_0 - R_1) + {(\rho_1-\rho_0)\{ \beta \kappa c(p_{0,0}-p_{1,0})[1 + 2(\rho_1-\rho_0)] \}} \\ {\leq (R_0 - R_1) + {(\rho_1-\rho_0)\{ 3\beta \kappa c(p_{0,0}-p_{1,0}) \}}} 
\end{dmath*} }} 
Rewriting the R.H.S. of the above inequality, we obtain
\begin{equation}
\label{lips-eqn}
\frac{\partial V_{S,n+1}(\pi)}{\partial \pi} \leq \kappa c(\rho_1-\rho_0)\{-b+4\beta(p_{0,0}-p_{1,0})\}
\end{equation} where, $b = \min\left\lbrace 1,\frac{R_1-R_0}{\rho_1-\rho_0} \right\rbrace$ and $c = \max\left\lbrace{ 1,\frac{R_1-R_0}{\rho_1-\rho_0} }\right\rbrace.$
If $\beta <\frac{(1+b)}{4}$ or $0<p_{0,0}-p_{1,0}<\frac{(1+b)}{4},$ then, $|-b + 4\beta(p_{0,0}-p_{1,0})| \leq 1$. Now, it follows that
$ \left| \frac{\partial V_{S,n+1}(\pi)}{\partial \pi}\right| \leq \kappa c(\rho_1-\rho_0).$
Hence, by principle of induction, the claim is true for all $n>0$. By the property of the value function that $\lim_{n\rightarrow \infty} V_{S,n}(\pi) = V_S(\pi),$ it follows that $\left|\frac{\partial V_S(\pi)}{\partial \pi}\right| < \kappa c(\rho_1-\rho_0).$  
\item Similarly, {{
\begin{eqnarray}
\begin{split}
\frac{\partial V_{NS,n+1}(\pi)}{\partial\pi} &= \beta \frac{\partial V_{NS,n}(\gamma_2(\pi))}{\partial(\gamma_2(\pi))} \gamma_2^'(\pi)\nonumber\\& \leq \beta \kappa c (\rho_1-\rho_0) (p_{0,0}-p_{1,0})^K \\& \leq \kappa c(\rho_1-\rho_0). 
\end{split}
\label{lips-eqn-ns}
\end{eqnarray} }}
Hence, by principle of induction, the claim is true for all $n>0$. By the property of value function that $\lim_{n\rightarrow \infty} V_{NS,n}(\pi) = V_{NS}(\pi),$ it follows that $\left|\frac{\partial V_{NS}(\pi)}{\partial\pi}\right| < \kappa c(\rho_1-\rho_0).$  \qed
\end{enumerate}
Now, consider $d(\pi) = V_S(\pi)-V_{NS}(\pi).$ It is enough to show $\frac{\partial V(\pi)}{\partial\pi} < 0.$ {\small{
\begin{dmath*}
\frac{\partial d(\pi)}{\partial\pi} = \frac{\partial V_S(\pi)}{\partial \pi} - \frac{\partial V_{NS}(\pi)}{\partial \pi}
\end{dmath*}}}
From \eqref{lips-eqn}, we have {\small{
\begin{dmath*}
\frac{\partial V_S(\pi)}{\partial\pi} \leq \kappa c{(\rho_1-\rho_0)\{-b + 4\beta(p_{0,0}-p_{1,0}) \}}
\end{dmath*} 
\begin{dmath*}
\frac{\partial V_{NS}(\pi)}{\partial\pi} \geq -\beta\kappa c(\rho_1-\rho_0)|p_{0,0}-p_{1,0}|^K.
\end{dmath*}
\begin{dmath*}
\frac{\partial d(\pi)}{\partial\pi} \leq \kappa c{(\rho_1-\rho_0)\{-b + 4\beta(p_{0,0}-p_{1,0}) \}}  + 
\beta\kappa c{(\rho_1-\rho_0)|p_{0,0}-p_{1,0}|^K}  \leq \kappa c{(\rho_1-\rho_0)\{-b + 5\beta(p_{0,0}-p_{1,0}) \}}
\end{dmath*} }}
In the R.H.S of above inequality, $\{-b + 5\beta(p_{0,0}-p_{1,0}) \} <0$ when, $0<p_{0,0}-p_{1,0}<\frac{b}{5}$ or $\beta<\frac{b}{5}$. Also, $\frac{b}{5} < \frac{1+b}{4},\forall b>0.$ Hence, under these conditions $V_S - V_{NS}$ is decreasing in $\pi.$  \qed
\end{proof} 
\subsection{Proof of Theorem $2$}
\label{proof:thm-indexability-general}
Using induction technique, one can obtain the following inequalities.{\small{
\begin{equation*}
\bigg\vert \frac{\partial V(\pi, \eta)}{\partial \eta } \bigg \vert, 
\bigg\vert \frac{\partial V_S(\pi, \eta)}{\partial \eta } \bigg \vert,
\bigg\vert \frac{\partial V_{NS}(\pi, \eta)}{\partial \eta } \bigg \vert
\leq \frac{1}{1-\beta}
\end{equation*} }}
Also, 
{\small{
\begin{eqnarray*}
\frac{\partial V_S(\pi, \eta)}{\partial \eta} 
= \beta \left[ \rho(\pi) \frac{\partial V(\gamma_1(\pi),\eta)}{\partial \eta } +\right.\\ \left.
  (1-\rho(\pi)) \frac{\partial V(\gamma_0(\pi),\eta)}{\partial \eta}\right] 
\end{eqnarray*} 
\begin{eqnarray*}
\frac{\partial V_{NS}(\pi, \eta)}{\partial \eta } = 1 + \beta \frac{\partial V(q,\eta)}{ \partial \eta }.
\end{eqnarray*}
}} 
Now taking differences {\small{
\begin{eqnarray*}
\begin{split}
\frac{\partial V_{NS}(\pi, \eta)}{\partial \eta } &- \frac{\partial V_S(\pi, \eta)}{\partial \eta } = 1 + \beta
\frac{\partial V(q,\eta )}{ \partial \eta } -\\& 
\beta \left[
  \rho(\pi) \frac{\partial V(\gamma_1(\pi), \eta)}{\partial \eta } + 
 \right.  \left.
  (1-\rho(\pi)) \frac{\partial V(\gamma_0(\pi), \eta)}{\partial \eta
  }\right]
\end{split}
\end{eqnarray*} }}
From Lemma 7, we require the above
difference to be non-negative at $\pi_T(\eta)$. This reduces to the
following expression.
{\small{ 
\begin{dmath}
 \left[ \rho(\pi) \frac{\partial
    V(\gamma1(\pi), \eta)}{\partial \eta } + 
  (1-\rho(\pi)) \frac{\partial V(\gamma_0(\pi), \eta)}{\partial \eta
  }\right]- \frac{\partial V(q,\eta )}{ \partial \eta }  <   \frac{1}{ \beta}.
\label{eqn:nec-cond}
\end{dmath}
}}
Note that we can provide upper bound on LHS of above expression and it
is upper bounded by $2/(1-\beta).$ If $\beta < 1/3,$
Eqn.~\eqref{eqn:nec-cond} is satisfied. $\pi_T(\eta)$ is decreasing in
$\eta.$ Thus indexability claim follows. \qed 
\subsection{Index computation for arbitrary $K,$ $p_{0,0} > p_{1,0},$ $\rho_0=0 ,\rho_1=1.$}
\label{app:index-g2_neq_q}
\begin{enumerate}
 \item For $\pi \in A_1,$ $
 V_S(\pi) = R_S(\pi) + \beta(1-\pi)V(p_{1,0}) + \beta\pi V(p_{0,0}) .$
 Assuming the threshold is at $\pi,$ we have $V(p_{1,0}) = V_{NS}(p_{1,0})$ and $V(p_{0,0}) = V_{NS}(p_{0,0}).$ Further, for any $\pi < q,$ $\gamma_2(\pi)>\pi.${\small{
 \begin{eqnarray*}
  V_{NS}(\pi) &=& \eta + \beta V(\gamma_2(\pi)) 
    = \eta + \beta V_{NS}(\gamma_2(\pi)) \\
    &=& \eta + \beta (\eta + \beta V_{NS}(\gamma^2_2(\pi))) \\
    &=& \eta(1+\beta + \beta^2 + ...+ \beta^{t-1}) + \beta^t V_{NS}(\gamma^t_2(\pi))).
 \end{eqnarray*}}}
 As $t\rightarrow \infty,$ $\gamma^t_2(\pi)\rightarrow q.$ And it follows that $V_{NS}(\pi) = \frac{\eta}{1-\beta}.$ Further, putting above equations together we have $  V_S(\pi) = R_S(\pi) + \beta \frac{\eta}{1-\beta}.$
 Because $\pi$ is the threshold, the subsidy $\eta$ required such that $V_S(\pi)=V_{NS}(\pi)$ is the index $W(\pi).$ Hence, we get
 $W(\pi) = R_S(\pi).$
 \item For $\pi \in A_2,$
 Assume that the threshold is at $\pi.$ Hence, $V(p_{1,0}) = V_{S}(p_{1,0})$ and $V(p_{0,0}) = V_{NS}(p_{0,0}).$  And $   V_S(\pi) = R_S(\pi) + \beta(1-\pi)V_S(p_{1,0}) + \beta\pi V_{NS}(p_{0,0}) .$ Further, {\small{
 \begin{eqnarray*}
 \begin{split}
  V_S(p_{1,0}) = R_S(p_{1,0}) &+ \beta(1-p_{1,0})V_S(p_{1,0}) \\&+ \beta p_{1,0} V_{NS}(p_{0,0})\\
  \end{split}
  \end{eqnarray*}
\begin{eqnarray*}
 \begin{split}
  V_S(p_{1,0}) &= \frac{R_S(p_{1,0})}{1-\beta(1-p_{1,0})} + \frac{\beta p_{1,0}}{1-\beta(1-p_{1,0})} V_{NS}(p_{0,0})  
  \\&\equiv  a + b V_{NS}(p_{0,0}).
 \end{split} 
 \end{eqnarray*} }}
 Now, just as in the previous interval, it can easily be shown for any $\pi \in A_2,$ that,
 $ V_{NS}(\pi) = \frac{\eta}{1-\beta} .$ Now,  $ V_S(p_{1,0}) = a + \frac{b\eta}{1-\beta.}$
 Then, we have{\small{
 \begin{dmath*}
  V_S(\pi) = R_S(\pi) + \beta (1-\pi)\left(a + \frac{b\eta}{1-\beta} \right) + \beta \frac{\eta}{1-\beta}.
 \end{dmath*} }}
 Equating $V_S$ and $V_{NS},$ we get 
{\small{ 
 \begin{equation}
  W(\pi) = \frac{(1-\beta)[R_S(\pi)+\beta(1-\pi)a]}{1-\beta[\pi + (1-\pi)b]}.
 \end{equation}
}}
 \item For $\pi \in A_3,$ {\small{
  \begin{eqnarray*}
  \begin{split}
   V_S(\pi) = R_S(\pi) &+ \beta(1-\pi)V_S(p_{1,0}) \\&+ \beta\pi V_{NS}(p_{0,0}) \\
  V_S(p_{1,0}) = R_S(p_{1,0}) &+ \beta(1-p_{1,0})V_S(p_{1,0}) \\&+ \beta p_{1,0} V_{NS}(p_{0,0}).
  \end{split}
 \end{eqnarray*} }}
 Hence,{\small{
 \begin{eqnarray*}
 \begin{split}
   V_S(p_{1,0}) &= \frac{R_S(p_{1,0})}{1-\beta(1-p_{1,0})} + \frac{\beta p_{1,0}}{1-\beta(1-p_{1,0})} V_{NS}(p_{0,0}) \\&\equiv  a + b V_{NS}(p_{0,0}).
 \end{split} 
 \end{eqnarray*} }}
 Further,{\small{
\begin{eqnarray}
\begin{split}
  V_{NS}(p_{0,0}) &= \eta + \beta V(\gamma_2(p_{0,0})) \\&= \eta(1+\beta+ ... +\beta^{t-1}) + \beta^t V(\gamma^t_2(p_{0,0})) \nonumber \\ &= \eta\frac{1-\beta^t}{1-\beta} + \beta^t V(\gamma^t_2(p_{0,0})) \\&\equiv \eta e +  \beta^t V(\gamma^t_2(p_{0,0}))
  \end{split}
 \label{II}
 \end{eqnarray}}} where, $t = inf\{l\geq 1 : \gamma_2^l(p_{0,0})\leq  \pi\}.$ Then, $V(\gamma^t_2(p_{0,0})) = V_S(\gamma^t_2(p_{0,0})).$
 Now, {\small{
 \begin{eqnarray*}
 \begin{split}
 \label{(III)}
  V_S(\gamma^t_2(p_{0,0})) = R_S(\gamma^t_2(p_{0,0})) &+ \beta(1-\gamma^t_2(p_{0,0}))V_S(p_{1,0})\\& + \beta\gamma^t_2(p_{0,0})V_{NS}(p_{0,0})
  \end{split} 
 \end{eqnarray*} }}
Let $\gamma^t_2(p_{0,0})\equiv g_2. $ Now,
{\small{
 \begin{eqnarray*}
 \begin{split}
  V_{NS}(p_{0,0}) = \eta e + \beta^t[R_S(g_2) &+ \beta(1-g_2)V_S(p_{1,0}) \\&+ \beta g_2 V_{NS}(p_{0,0})] \\
  = \frac{\eta e}{1-\beta^{t+1}g_2} &+ \frac{\beta^t R_S(g_2)}{1-\beta^{t+1}g_2} 
    \\&+ \frac{\beta^{t+1}(1-g_2)}{1-\beta^{t+1}g_2 }V_S(p_{1,0})\\ \equiv  \eta f + a_1 +b_1V_S(p_{1,0})
    \end{split}
 \end{eqnarray*} }}
 Using above equations, {\small{
 \begin{eqnarray*}
  V_{NS}(p_{0,0}) &=& \eta f + a_1 + b_1 V_S(p_{1,0}) \\ &=& \eta f + a_1 + b_1[a+ b V_{NS}(p_{0,0})] \\
 V_{NS}(p_{0,0}) &=& \frac{a_1+b_1a}{1-bb_1} + \frac{\eta f}{1-bb_1} {\equiv  \eta c + d.} \\
  V_S(p_{1,0}) &=& a + b(\eta c + d) =  \eta bc + a + bd.   
 \end{eqnarray*}
 \begin{eqnarray*}
 \begin{split}
   V_S(\pi) = R_S(\pi) &+ \beta(1-\pi)[\eta bc + a + c\pi] + \beta \pi[\eta c + d]\\ =  R_S(\pi) &+ \beta[(1-\pi)(a+bd)+\pi d] \\&+ \eta \beta c[\pi +(1-\pi)b] 
  \\& \equiv   D(\pi) + \eta B(\pi).
  \end{split}
 \end{eqnarray*} }}
 Using above equations, it follows that {\small{
 \begin{eqnarray*}
  V_{NS}(\pi) &=& \eta + \beta V_S(\gamma_2(\pi))\\
   		   &=&\eta + \beta [\eta B(\gamma_2(\pi)) + D(\gamma_2(\pi))] 
 \end{eqnarray*} }}
 Equating $V_S(\pi)$ and $V_{NS}(\pi)$, we get, 
{\small{
\begin{equation*}
  W(\pi) = \frac{D(\pi)-D(\gamma_2(\pi))}{1+\beta B(\gamma_2(\pi))-B(\pi)}.
\end{equation*} }}
 \item For $\pi \in A_4,$ $
   V_S(\pi)= R_S(\pi) + \beta(1-\pi)V_S(p_{1,0}) + \beta\pi V_{S}(p_{0,0}). $
 We need to compute 
  $V_S(p_{1,0})$ and $V_S(p_{0,0}).$ 
 In this case the optimal action for the $\pi$ is not sample the arm
  once and later sample the arm always. Similarly if the initial
  action is to sample the arm and later the optimal action is to
  sample the arm always. This behavior can be observed from the
  operation $\gamma_0(\pi),$ which is smaller than $p_{0,0}.$ Then, one
  can easily show by induction that,  $V_S(\pi)$ is
  linear in $\pi$ 
  with slope $m$ and intercept $c_1$ as mentioned earlier.  That is,
  $V_S(p_{1,0}) = m p_{1,0} + c_1 $ and $V_S(q) = m q + c_1.$
%
%
 Now, 
 {\small{
 \begin{eqnarray*}
 \begin{split}
  V_S(\pi) = R_S(\pi) &+ \beta(1-\pi)(mp_{1,0}+c_1) \\&+ \beta\pi(mp_{0,0}+c_1) 
   \\& =  m\pi + c_1
 \end{split}
 \end{eqnarray*}  }}
 Simplifying above equations, we have $m = \frac{R_0-R_1}{1-\beta(p_{0,0}-p_{1,0})}, c_1 = \frac{R_1+m\beta p_{1,0}}{1-\beta} .$\\
 Further, $V_{NS}(\pi) = \eta + \beta V_S(\gamma_2(\pi)) 
  			  = \eta + \beta [ m\gamma_2(\pi)+c_1 ].$
 Equating $V_S(\pi)$ and $V_{NS}(\pi),$ we have 
\begin{equation*} 
   W(\pi) = m\pi + c_1 - \beta(m\gamma_2(\pi)+c_1 ) .
\end{equation*}
\end{enumerate} 

\subsection{Discussion}
\subsubsection{Markovian representation of fading channels}
The proposed model is aimed at fading channels that satisfy Markovian assumption. Commonly
used channel fading models such as Rayleigh fading and the resultant
exponential SNR distribution, can be represented as finite state
Markov chains (FSMC) \cite{Wang95,Zhang99,Park09}. One method involved
is to partition the fading coefficients value range such that, the
duration spent in each state is the same, say $\tau.$ This $\tau$
depends on parameters of the fading distribution of the
channel. Suppose that, each of $M$ different channels is represented
using an $n$-state FSMC. If the number of states $n$ and the length of
interval $\tau$ is fixed, then the FSMC representation of
heterogeneous channels may threaten the validity of the conventional
RMAB model assumption---one state transition per decision
interval. Our proposed model remedies this problem by allowing
multiple state transitions ($K$) per decision interval and also, arms
with different values of $K.$ The authors plan to take up the case of random $K$ drawn from a known distribution in future.
\subsubsection{Whittle-index policy gain in moderately sized systems}
In our numerical study, we presented numerical examples with moderate
size systems in terms of the number of arms, say $M=10,15$. From these
examples, we observed that index policy performs better than other
policies; but, it difficult to say about the optimality of the index
policy. However, the Whittle-index policy has been shown to be
asymptotically optimal for large systems,see
\cite{Ouyang12,Larranaga16}, where both the number of arms $N$ and the
arms to be played per decision interval are large. The analysis is
usually done by using fluid approximation technique. The analysis
provided in Section III is valid for systems of all sizes
with $M\geq 1$ and $K\geq 1.$
\subsubsection{Approximation causes loss of history}
When $K$ is large, we made
use of the approximation $\gamma_{2,m}(\pi) \approx q_m,$ for all $m.$
Here, the belief update $\gamma_{2,m}(\pi)$ when not playing the arm,
becomes independent of prior belief $\pi.$ That is, the history of
actions and observations described by sufficiently by statistic $\pi,$
is forgotten. Hence, we say that approximation causes loss of history.  
It is possible that,
this loss of history due to approximation is detrimental to gains obtained by
Whittle-index policy. The index policy tries to maximize conditional expectation of long term rewards, loses valuable historical information along the way as a
result of this memory loss.  Hence, when approximation is used, its
performance might be lowered and become comparable to myopic policy.
%
\subsubsection{Trade off between information gain and computational load}
Notice that, for the case $\rho_0 = 0$ and $\rho_1 = 1,$ the state of an arm at the beginning (first slot) of a session when it is played, becomes exactly observable at the end of the session through the feedback ACK/NACK. Hence, we could obtain a closed form expression for Whittle-index. Similarly, for $\rho_0=0$ and $0<\rho_1 <1,$ the state of arm  at the beginning of the session becomes known when ACK is observed at the end, and not otherwise.  Even then, we were able to compute
the index expression except for one belief region. This suggests that, the less
information gained from playing an arm, the more difficulty in
obtaining closed form expression for index. We devised an index
computation Algorithm 1 for the general case,
\textit{i.e.,} $0<\rho_0, \rho_1<1.$ Index computation algorithms
based on stochastic approximation schemes are often computationally
taxing because index needs to be computed offline and stored. Also,
the accuracy of index computation algorithm depends on tolerance level
$h.$ A smaller tolerance level $h$ requires larger computation time
and vice-versa. Thus, there is often a trade off between the
information gain and computational load.  
\subsubsection{Complexity of online implementation}
Two cases arise during online implementation of the model. The first
is when closed form index expressions are available, where the
decision maker for an $M$-armed RMAB would require to make
$\mathcal{O}(M)$ computations every session. The second case is when
closed form index expressions are not available and the decision maker
searches for the appropriate index value from a database of stored
values from offline computation over a grid of belief points.
Hence, the complexity of this search shows a polynomial increase with
the fineness of the grid. An important study would be to look into
change of decisions with slight change of transition probabilities.
%
%
%
%
%
\subsubsection{The case of multi-state arms}
The RMAB with partially observable states is generally analyzed in
literature for two state model. However, in many applications larger
number of states might be required for accurate representation of
communication channels. In such cases, the proposed model can be
applied using state aggregation. 
Further,
limited work is available on multi-state model for partially
observable RMAB, see \cite{Larranaga16}. For a multi-state single
armed bandit, the optimal policy of the corresponding POMDP is not
threshold type. Hence, in \cite{Larranaga16}, the POMDP is modified by
making steady state approximations; the modified POMDP has optimal policy of
threshold type. The approximations in \cite{Larranaga16} are valid for
large number of arms and large number of states. The applicability of
this analysis while allowing multiple state transitions per session,
needs careful study.
%

\end{document}